\newif\ifieee
\newif\ifllncs
\newif\ifacmconf
\newif\iflongversion
\newcommand{\vlong}[1]{\iflongversion #1\fi}
\newcommand{\vshort}[1]{\iflongversion \else #1\fi}
\newcommand{\timestamp}{ 
  {\protect\small\sl\today\ -- 
    \ifnum\timehh<10 0\fi\number\timehh\,:\,
    \ifnum\timemm<10 0\fi\number\timemm}}
\newif\ifsolutions
\newcommand{\squishlist}{
   \begin{list}{$\bullet$}
    { \setlength{\itemsep}{0pt}      \setlength{\parsep}{3pt}
      \setlength{\topsep}{3pt}       \setlength{\partopsep}{0pt}
      \setlength{\leftmargin}{1.5em} \setlength{\labelwidth}{1em}
      \setlength{\labelsep}{0.5em} } }
\newcommand{\squishlisttwo}{
   \begin{list}{$\bullet$}
    { \setlength{\itemsep}{0pt}    \setlength{\parsep}{0pt}
      \setlength{\topsep}{0pt}     \setlength{\partopsep}{0pt}
      \setlength{\leftmargin}{2em} \setlength{\labelwidth}{1.5em}
      \setlength{\labelsep}{0.5em} } }
\newcommand{\squishend}{
    \end{list}  }
\long\def\ignore#1{\relax}
\def\refsec#1{Section~\ref{#1}}
\def\reflem#1{Lemma~\ref{#1}}
\def\refthm#1{Theorem~\ref{#1}}
\def\refcor#1{Corollary~\ref{#1}}
\newcommand{\zs}{\ensuremath{\vec{z}}\xspace}
\newcommand{\Vars} {\mbox{\it Vars}}
\newcommand{\vars} {\Vars}
\renewcommand{\int}{\ensuremath{\mathbb{Z}}\xspace}
\newcommand{\rat}{\ensuremath{\mathbb{Q}}\xspace}
\newcommand{\env}{\ensuremath{\eta}\xspace}
\newcommand{\proves}{\vdash}
\newcommand{\Rea}{\ensuremath{\mathbb{R}}\xspace}
\newcommand{\one}{\ensuremath{\mathbf{1}\xspace}}
\newcommand{\set}[1]{\ensuremath{ \{ #1 \} }} 
\newcommand{\dset}[2]{ \{ {#1} \mid {#2} \} } 
\newcommand{\seq}[1]{ \langle {#1}  \rangle }
\renewcommand{\phi}{\varphi}
\newcommand{\Ra}{\Rightarrow}
\newcommand{\ra}{\rightarrow}
\theoremstyle{plain} 
\newtheorem{theorem}{Theorem}[section]
\newtheorem*{theorem*}{Theorem}
\newtheorem{lemma}[theorem]{Lemma}
\newtheorem{lemma*}{Lemma}
\newtheorem*{proposition*}{Proposition}
\newtheorem{corollary}[theorem]{Corollary}
\newtheorem*{corollary*}{Corollary}
\theoremstyle{definition} 
\newtheorem{definition}[theorem]{Definition}
\newtheorem*{definition*}{Definition}
\newtheorem*{notation*}{Notation}
\newtheorem*{remark*}{Remark}
\newtheorem*{example*}{Example}
\newtheorem*{examples*}{Examples}
\newtheorem*{note*}{Note}
\newtheorem*{notes*}{Notes}
\newtheorem*{exercise*}{Exercise}
\newtheorem*{exercises*}{Exercises}
\def\squareforqed{\hbox{{///}}}
\newcommand{\dhthy}{\ensuremath{\mathsf{AG}\hat{\ }}\xspace} 
\newcommand{\NZE}{\ensuremath{\mathop{NZE}}\xspace}
\newcommand{\nzmult}{\ensuremath{\mathbin{**}}\xspace}
\newcommand{\gtoe}[1]{[#1]}
\newcommand{\mdl}{\ensuremath{\mathcal{M}\xspace}}
\newcommand{\mdlF}{\ensuremath{\mdl_{F}}\xspace}
\newcommand{\ultraD}{\ensuremath{\mdl_{D}}\xspace}
\newcommand{\mdlQ}{\ensuremath{\mdl_{\rat}}\xspace}
\newcommand{\fieldq}{\ensuremath{\mathbb{F}_{q}}\xspace}
\newcommand{\fieldprod}{\ensuremath{\mathbb{F}_D}\xspace}
\newcommand{\boxfn}{\ensuremath{\gtoe{\cdot}}\xspace}
\newcommand{\Ind}{\ensuremath{\mathrm{Ind}}}
\newcommand{\Gen}[1]{\ensuremath{\mathrm{Gen}(#1)}}
\newcommand{\gop}{\cdot}  %% \circ
\newcommand{\ginv}{\mathit{inv}}
\ifieee\newcommand{\gid}{\ensuremath{\mathit{id}\xspace}}\else\renewcommand{\gid}{\ensuremath{1}\xspace}\fi
\newcommand{\eid}{\ensuremath{1}\xspace}
\newcommand{\emult}{\;* \;}
\newcommand{\eone}{1}
\newcommand{\einv}{\mathit{i}}
\newcommand{\eadd}{+}
\newcommand{\ezero}{0}
\newcommand{\eneg}{-}
\renewcommand\paragraph{\@startsection{paragraph}{4}{\z@}%
                       {-8\p@ \@plus -4\p@ \@minus -4\p@}%
                       {-0.5em \@plus -0.22em \@minus -0.1em}%
                       {\normalfont\normalsize\bfseries\boldmath}}
\newcommand{\skel}{\ensuremath{\mathbb{A}}}
\newcommand{\cons}{\parallel} %  {,\;}  % {\,{\hat{\ }}\,}
\newcommand{\enc}[2]{\{\!|#1|\!\}_{#2}}
\newcommand{\tagged}[2]{[\![\,#1\,]\!]_{\sk(#2)}}
\newcommand{\newop}[2]{\newcommand{#1}{\ensuremath{\mathop{\mathsf{#2}}}}}
\newop{\CA}{CA}
\newop{\simple}{simple}
\newop{\visin}{visIn}
\newop{\component}{compnt}
\newop{\node}{node}
\newop{\xmit}{xmit}
\newop{\recv}{recv}
\newop{\neutral}{neutral}
\newop{\adv}{adv}
\newop{\reg}{reg}
\newop{\msg}{msg}
\newop{\orig}{orig}
\newop{\unique}{unique}
\newop{\non}{non}
\newop{\akey}{akey}
\newop{\skey}{skey}
\newop{\pname}{pname}
\newop{\data}{data}
\newop{\pubkey}{pk}
\newop{\privkey}{privkey}
\newop{\sigkey}{sigkey}
\newop{\verkey}{verkey}
\newop{\advOrig}{advOrig}
\newop{\advEPT}{advEPT}
\newop{\advEK}{advEK}
\newop{\advECT}{advECT}
\newop{\advDCT}{advDCT}
\newop{\advDK}{advDK}
\newop{\advDPT}{advDPT}
\newop{\advOrigVal}{advOrigVal}
\newop{\advPTVal}{advPTVal}
\newop{\advKVal}{advKVal}
\newop{\advCTVal}{advCTVal}
\newop{\nsInitone}{nsInit1}
\newop{\nsInittwo}{nsInit2}
\newop{\nsInitthree}{nsInit3}
\newop{\nsRespone}{nsResp1}
\newop{\nsResptwo}{nsResp2}
\newop{\nsRespthree}{nsResp3}
\newop{\lsnone}{lsn1}
\newop{\hear}{hear}
\newop{\self}{self}
\newop{\peer}{peer}
\newop{\mynonce}{mynonce}
\newop{\yrnonce}{yrnonce}
\newop{\kdf}{kdf}
\newop{\sig}{sig}
\newop{\pk}{pk}
\newop{\sk}{sk}
\newop{\vum}{VUM}
\newop{\um}{UM}
\newop{\mqv}{MQV}
\newop{\kea}{KEA}
\newop{\cremers}{CF}
\newop{\umkey}{K_{um}}
\newop{\mqvkey}{K_{mqv}}
\newop{\nf}{nf}
\newcommand{\indZero}{\ensuremath{\mathbf{0}}}
\newcommand{\indX}{\ensuremath{\mathbf{1}_x}}
\newcommand{\indY}{\ensuremath{\mathbf{1}_y}}
\newcommand{\indA}{\ensuremath{\mathbf{1}_a}}
\newcommand{\indB}{\ensuremath{\mathbf{1}_b}}
\newcommand{\iadh}{\textsc{iadh}}
\newcommand{\ingredient}{\sqsubseteq}
\newcommand{\bnd}{\ensuremath{\mathcal{B}}}
\newcommand{\frameit}[1]{\textcolor{blue}{\framebox{\textcolor{black}{ #1 }}}}
\newcounter{axiomnumber}
\newtheorem{goal}[theorem]{Security Goal}
\newtheorem{assumption}[theorem]{Assumption}
\def\squareforqed{\hbox{\rlap{$\sqcap$}$\sqcup$}}
\def\qed{\ifmmode\squareforqed\else{\vskip-\lastskip\nobreak\hfil
\penalty50\hskip1em\null\nobreak\hfil\squareforqed
\parfillskip=0pt\finalhyphendemerits=0\endgraf}\fi}
\renewcommand\paragraph[1]{\par\bigskip\noindent\textbf{#1}\hspace{1ex}}
\long\def \vs-pf-msg {
  \begin{proof} Please see the Appendix.
  \end{proof}
}
\ifllncs\institute{Worcester Polytechnic Institute \\ \{dd,guttman\}@wpi.edu}
\begin{document}

\title{Symbolic Protocol Analysis for Diffie-Hellman%
}

\author{Daniel J.~Dougherty\qquad Joshua D. Guttman\ifieee
  \\{Worcester Polytechnic Institute}\\
  { dd@wpi.edu\qquad guttman@wpi.edu} \\
  \textsf{\emph{Do not redistribute.  Currently submitted for blind
      refereeing.}}\else\thanks{We gratefully acknowledge support by
    the National Science Foundation under grant CNS-0952287.  Version
    of \today.}\fi}

\pagestyle{plain}
\pagenumbering{arabic}

\maketitle
\begin{abstract}%\textbf{Draft of \today.}
  We extend symbolic protocol analysis to apply to protocols using
  Diffie-Hellman operations.  Diffie-Hellman operations act on a
  cyclic group of prime order, together with an exponentiation
  operator.  The exponents form a finite field.  This rich algebraic
  structure has resisted previous symbolic approaches.

  We work in an algebra defined by the normal forms of a rewriting
  theory (modulo associativity and commutativity).  These normal forms
  allow us to define our crucial notion of \emph{indicator}, a vector
  of integers that summarizes how many times each secret exponent
  appears in a message.  We prove that the adversary can never
  construct a message with a new indicator in our adversary model.  

  Using this invariant, we prove the main security goals achieved by
  several different protocols that use Diffie-Hellman operators in
  subtle ways.

  We also give a model-theoretic justification of our rewriting
  theory: the theory proves all equations that are uniformly true as
  the order of the cyclic group varies.

  \iffalse
  We develop new techniques for rigorous symbolic proofs of security
  properties for implicitly authenticated Diffie-Hellman protocols.
  These protocols, e.g.~{MQV} and the Unified Model, which construct a
  shared secret using a minimum of communication, have been
  challenging both for symbolic and computational analyses.  Their
  essential use of rich algebraic structure has been an obstacle.

  We prove implicit authentication and secrecy results for several
  protocols in this group.  The same methods yield proofs of forward
  secrecy, for protocols where it holds, and indicate why it does not
  otherwise.  Resistance to compromised key attacks and impersonation
  attacks may be resolved similarly.

  Our crucial notion is the \emph{indicator}, a vector of integers
  that summarizes how many times each secret exponent appears in a
  message.  We prove that the adversary can never construct
  a message with a new indicator in our adversary model.

  Although we work with a message algebra generated by the normal
  forms for an equational theory $\dhthy$, we also give a
  model-theoretic proof that $\dhthy$ faithfully reflects all
  equations that are true uniformly as the prime $q$ varies.
  Hence, our security theorems ensure that the adversary has no
  strategy that is expressible algebraically and works
  uniformly---independent of the choice of $q$---for producing a
  counterexample.  Computational soundness awaits further
  investigation. \fi 
  % 
  % 
\end{abstract}

\iftrue{}\else
\newpage\tableofcontents\newpage
\renewcommand{\paragraph}{\subsection} 
\fi

%   \pagestyle{myheadings}
%   \markboth{\textsf{Draft, \today}}{\textsf{Draft, \today}} 

\section{Introduction}
\label{sec:intro}

Despite vigorous research in symbolic methods for cryptographic
protocol analysis, many gaps and limitations remain.  While systems
such as NPA-Maude~\cite{escobar2009maude}, ProVerif~\cite{Blanchet01},
CPSA\cite{cpsa09}, and Scyther~\cite{Cremers06} are extremely useful,
great ingenuity is still needed---as for instance
in~\cite{kuesters2009using}---to analyze protocols that use
fundamental cryptographic ideas such as Diffie-Hellman key agreement
(henceforth, DH).  Moreover, important types of protocols, such as
implicitly authenticated key-agreement, appear to be out of reach of
known symbolic techniques.  Indeed, for these protocols, computational
techniques have also led to considerable controversy, with arduous
proofs that provide little
confidence~\cite{kaliski2001unknown,krawczyk2005hmqv,%
  kunz2006security,menezes2007another}.

In this paper we present foundational results and a new analysis
technique that together expand the range of applicability of symbolic
analysis.  In preparation for stating our contributions we remind the
reader of the basics of the Diffie-Hellman key
exchange~\cite{DiffieHellman76}.  In the protocol's original form, the
principals $A,B$ agree on a suitable prime $p$, and a generator $1<g<p$
such that the powers of $g$ form a cyclic group of some large prime
order $q$.  For a particular session, $A$ and $B$ choose random values
$x,y$ respectively, raising a base $g$ to these powers  mod $p$:
\begin{equation}
  \label{eq:dh:basic}
   \xymatrix{A,x & \bullet\ar[r]^{g^x} & \qquad & \bullet\ar[l]_{g^y}
     & B, y}
\end{equation}
They can then both compute the value $(g^y)^x=g^{xy}=(g^x)^y$ (modulo
$p$, as we will no longer explicitly repeat).  We can regard $g^{xy}$
as a new shared secret for $A,B$.  This is reasonable because of the
Decisional Diffie-Hellman assumption (DDH), which is the assumption
that $g^{xy}$ is indistinguishable from the $g^z$ we would get from a
randomly chosen $z$, for any observer who was given neither $x$ nor
$y$.  The protocol is thus secure against a passive adversary, who
observes what the compliant principals do, but can neither create
messages nor alter (or misdirect) messages of compliant principals.

However, an active adversary can choose its own $x',y'$, sending
$g^{y'}$ to $A$ instead of $g^{y}$, and sending $g^{x'}$ to $B$ instead
of $g^{x}$.  Now, each of $A,B$ actually shares one key with the
adversary, who can act as a man in the middle, re-encrypting messages in
any conversation between $A$ and $B$.  Various protocols have been
proposed to achieve a range of security goals in the presence of an
active attacker, such as \emph{implicit authentication}, \emph{forward
  secrecy}, and \emph{preventing impersonation attacks}.  In
Section~\ref{sec:some:protocols} we describe some of these protocols.

% A first obstacle to analyzing DH protocols has been the lack of
% simple, declarative formulations of the security goals that these
% protocols are intended to achieve.

The algebra of the structures on which DH protocols operate has been an
obstacle to analyzing them  These structures are cyclic
groups of some prime order $q$, together with an exponentiation
operator.  The exponents are integers modulo the prime $q$, which form
a field of characteristic $q$.  We will call such structures
\emph{DH-structures}.  The algebraic richness of DH-structures has
resisted full symbolic formalization, despite substantial steps for
subalgebras~\cite{escobar2009maude,kapur2003unification,kuesters2009using}.

%\paragraph{Our Contributions}

In this paper, we make five contributions.  
\begin{enumerate}
\item \label{contrib:goals} %%
  We represent security goals as logical formulas about transmission and
  reception events, together with freshness and non-compromise
  assumptions. % (\refsec{sec:protocols}).
  These clean, structural definitions are easy to work with, in contrast
  with the procedural notations prevalent among cryptographers.  They
  are based on strand spaces~\cite{ThayerHerzogGuttman99,Guttman10} as a
  model of protocol execution.
  \item \label{contrib:ultra} %
  We give a new treatment of the values used for DH exchanges.  These
  values are characterized by a set of equations, namely the equations
  $s=t$ that are valid in infinitely many DH-structures.  In fact, we
  prove that if an equation holds in infinitely many DH-structures
  then it holds in all of them.
  \begin{enumerate}
  \item Using an ultraproduct construction, we build a \emph{single}
    model $\ultraD$ that realizes precisely those equations true in all
    (equivalently, infinitely many) DH-structures.
    (\refthm{full-tfae})
  \item We define an equational theory $\dhthy$ that can be presented by
    a rewrite system that is terminating and confluent modulo
    associativity and commutativity (\refthm{thm:sn+cr}). %
    Furthermore, for all equations $s=t$, $\dhthy$ rewrites $s$ and $t$
    to the same normal form if and only if $s=t$ is true in $\ultraD$
    for all values of its free variables (\refthm{full-tfae}). %
    The normal forms of this rewrite system represent the messages.
  \end{enumerate}
\item The theory of DH-structures suggests an adversary model
  (\refsec{sec:protocols}).  The \emph{uniformly algebraic adversary} is
  the Dolev-Yao adversary augmented with the functions in the signature
  of DH-structures.  These functions are governed by the equations $s=t$
  derivable in $\dhthy$.  Thus---given the correspondence between
  $\dhthy$ and truth in DH-structures---the adversary can rely on any
  equation that, as the size of the underlying cyclic group grows, is
  valid infinitely often.
  \item Using the $\dhthy$ normal forms, we define the
  \emph{indicators} of a message.  Indicators count occurrences of
  secret values in exponents.  We prove that the adversary cannot
  create a message with a new indicator.  If the adversary transmits a
  message with a particular indicator, then it must have received some
  message with that indicator previously
  (\refthm{thm:indicators:preserved}).
 This invariant extends the Honest Ideal
  theorem~\cite{ThayerHerzogGuttman99} to the algebra $\dhthy$.  It is
  our primary proof method.
  \item To illustrate the power of our method, we prove about a dozen
  different security goals for three protocols
  (Sections~\ref{sec:analysis} and~\ref{sec:implicit}).
 These implicitly
  authenticated DH protocols have previously resisted attempts to give concise,
  convincing proofs of the goals they achieve.
  We also use our method to show why certain protocols do not meet
  some goals, matching the relevant attacks from the cryptographic
  literature.
    
\end{enumerate}

The set of indicators of a message is a set of vectors that count how
many times uncompromised values appear in exponents.  They are a
refinement of the standard notion of an atom occurring in term, needed
since our terms are considered modulo equations.  For instance,
suppose that in some execution, the exponents $a,b,x,y$ are assumed
uncompromised, where $x,y$ are ephemeral secret values and $a,b$ are
long-term secret values.  The sequence $\seq{a,b,x,y}$ determines a
\emph{basis} for writing these indicator vectors.

Relative to this basis, the factor $g^{xy}$ has indicator
$\seq{0,0,1,1}$ because $a,b$ appear 0 times each, and $x,y$ appear
once each.  $g^{x/y}$ would have indicator $\seq{0,0,1,-1}$, since $y$
appears $-1$ times, i.e.~inverted.  The factor $g^{ax}$ has indicator
$\seq{1,0,1,0}$ since $a,x$ appear once.  When we multiply factors, we
take unions of indicators.  Thus, $g^{xy}g^{bx}g^{ay}g^{ab}$ has
indicators
\ifieee{$\{\seq{0,0,1,1},\seq{0,1,1,0},\seq{1,0,0,1},\seq{1,1,0,0}\}$.
}\else{
$$\{\seq{0,0,1,1},\seq{0,1,1,0},\seq{1,0,0,1},\seq{1,1,0,0}\}.$$}\fi
There is good motivation for protocols in which each non-zero integer
in an indicator is $\pm 1$~\cite{BressonEtAl11}.

In our model, when the indicator basis consists of uncompromised
exponents, adversary actions never produce any message containing any
new indicator (Theorem~\ref{thm:indicators:preserved}).  If the
adversary \emph{transmits} a message with some indicator vector
$\vec{v}$, then it previously \emph{received} some message with that
indicator vector $\vec{v}$.  Only the regular, non-adversary,
participants can emit messages with new indicators.

This idea, which is natural and appealing for DH, is challenging to
justify, which is probably why it is not familiar from the
cryptographic literature.  Its soundness as a proof technique rests on
our foundational results concerning DH-structures
(contribution~\ref{contrib:ultra}).

\paragraph{Structure of this paper.}  We next, in
Sec.~\ref{sec:some:protocols}, introduce a few protocols we will use
as running examples.  Sec.~\ref{sec:strands} introduces the strand
space theory, and the Sec.~\ref{sec:an-equational-theory} presents our
equational theory $\dhthy$.  We use strand spaces in
Sec.~\ref{sec:protocols} to define {\iadh} protocols and the adversary
actions.  Section~\ref{sec:adversary} proves the key limitative
theorem on indicators and the adversary.  Sec.~\ref{sec:analysis}
defines a variety of security goals for {\iadh} protocols, and applies
the key limitative result to establish these goals; the focus shifts
specifically to implicit authentication in Sec.~\ref{sec:implicit}.
Sec.~\ref{sec:algebra} takes a model-theoretic point of view on
DH-structures and proves completeness of the theory \dhthy.  In
Sec.~\ref{sec:conc}, we comment on some related work and conclude.

\vshort{Proofs omitted from the main body are in the appendix for the
  referees' convenience.  A long version is available at
  \url{http://web.cs.wpi.edu/~guttman/pubs/iadh.pdf}.}

%%% Local Variables: 
%%% mode: latex
%%% TeX-master: "root"
%%% End: 

\section{Some Protocols of Interest}
% \paragraph{Background.}
% \label{sec:intro:background} 

\label{sec:some:protocols}

We start by describing some illustrative protocols at the level of
detail typically seen in the literature.  In order not to prejudice
ourselves in evaluating possible attacks, we will write $R_B$ for the
public value that $A$ receives, purportedly from $B$, rather than
writing $g^y$, since no one yet knows whether it is the same value
that $B$ sent.  We likewise write $R_A$ for the public value that $B$
receives, purportedly from $A$.  The participants hope that $R_A=g^x$
and $R_B=g^y$.

The Station-to-Station protocol~\cite{diffie1992authentication}
authenticates the Diffie-Hellman exchange by digital signatures on the
exchange.  In a simplified STS, the exchange in Eqn.~\ref{eq:dh:basic}
is followed by the signed messages:
\begin{equation}
  \label{eq:dh:sts:auth}
   \xymatrix{A & \bullet\ar[r]^{\tagged{g^x\cons R_B}A} & & &
     \bullet\ar[l]_{\tagged{g^y\cons R_A}B} & B} 
\end{equation}
The signatures\footnote{We write $t\cons t'$ for the concatenation of
  $t$ with $t'$.  A digitally signed message ${\tagged{t}A}$ means
  $t\cons\sig(H(t),\sk(A))$, where $\sig$ is a signature algorithm,
  $H(t)$ is a hash of $t$, and $\sk(A)$ is a signing key owned by
  $A$.}
exclude a man in the middle, assuming some public key infrastructure
to certify $\sk(A),\sk(B)$.
The costs of STS includes an additional message transmission and
reception for each participant, in each session.  Moreover, each
participant must also prepare one digital signature and also verify
one digital signature specifically for that session.  There is also a
privacy concern, since the signatures publicly associate $A$ and $B$
in a shared session.

An alternative to using per-session digital signatures is
\emph{implicit authentication}~\cite{blake1999authenticated}.  Here
the goal is to ensure that any principal that can compute the same
value as $A$ \emph{can only} be $B$, and conversely.  To implement
this idea, each principal maintains a long-term secret, which we will
write as $a$ for principal $A$, and as $b$ for $B$; they publish the long-term
public values $g^a,g^b$, which we will refer to as $Y_A,Y_B$, etc.
The trick is to build the use of the private values $a,b$ into the
computation of the shared secret, so that only $A,B$ can do it.  In
the ``Unified Model'' {\um} of Ankney, Johnson, and
Matyas~\cite{ankney1995unified}, the principals combine long term
values with short term values by concatenating and hashing.  They send
only the messages shown in Eqn.~\ref{eq:dh:basic}, and then---letting
$H(x)$ be a hash of $x$---compute their keys:
\begin{equation}
  \label{eq:dh:unif:model:key:comp:local}
  {A:\; k=H({Y_B}^a\cons {R_B}^x) \quad B:\; k=H({Y_A}^b\cons {R_A}^y)},
\end{equation}
obtaining the shared value $H(g^{ab}\cons g^{xy})$ if $R_A=g^x$ and
$R_B=g^y$.
%
%   We will write
%   $\umkey(c,Y,z,R)=H(Y^c\cons R^z)$.  $A$ and $B$ respectively compute
%   %
%   \begin{equation}
%     \label{eq:k_um}
%     \umkey(a,Y_B,x,R_B) \quad\mbox{and}\quad \umkey(b,Y_A,y,R_A).
%   \end{equation}
%   %
%
Again, public key infrastructure must associate the public value $Y_P$
with the intended peer $P$.  However, no digital signature is
generated or checked specific to this run.  If $A$ has frequent
sessions with $B$, $A$ can amortize the cost of the certificate
verification by keeping $Y_B$ in secure storage.

Menezes-Qu-Vanstone ({\mqv})~\cite{law2003efficient} relies only on
algebraic operations.  {\mqv} computes the key via the rules:
\begin{equation}
  \label{eq:mqv:key:comp:local}
  {A:\; k=({R_B}\cdot{Y_B}^{\gtoe{{R_B}}})^{s_A} \quad 
    B:\; k= ({R_A}\cdot{Y_A}^{\gtoe{{R_A}}})^{s_B}} 
\end{equation}
where $s_A=x+a\gtoe{{R_A}}$ and $s_B=y+b\gtoe{{R_B}}$. The ``box''
operator coerces numbers mod $p$ to a convenient form in which they
can be used as exponents. In the literature this is written in the
typographically more cumbersome form of a bar, as $\overline{R_B}$. In
a successful run, $A$ obtains the value 
\begin{equation}
  \label{eq:mqv:key:comp:eqns}
  ({g^y}\cdot(g^b)^{\gtoe{g^y}})^{s_A}  = 
  (g^{(y+b{\gtoe{g^y}})})^{(x+a\gtoe{g^x})} = g^{(s_B\cdot
    s_A)} 
\end{equation}
and $B$ obtains $g^{s_A\cdot s_B}$, which is the same value.  {\mqv}
differs from {\um} only in the function that the principals use to
compute the key.  {\mqv}'s key computation makes it algebraically
challenging to model and to analyze.  Controversy
about its security remains~\cite{kaliski2001unknown,krawczyk2005hmqv,%
  kunz2006security,menezes2007another}.

%%% Local Variables: 
%%% mode: latex
%%% TeX-master: "root"
%%% End: 

\section{Background:  Strand Spaces}
\label{sec:strands} 

In this paper, we adopt the strand space formalism, although allowing
messages to form more complex algebraic structures than in earlier
papers, e.g.~\cite{ThayerHerzogGuttman99,Guttman10}.

\paragraph{Strands.}  A \emph{strand} is a sequence of local actions
called \emph{nodes}. A node may be either:
\begin{itemize}
\item a message \emph{transmission};
\item a message \emph{reception}; or else 
\item a \emph{neutral} node.  Neutral nodes are local events in
  which a principal consults or updates its local
  state~\cite{Guttman11a}.
\end{itemize}
If $n$ is a node, and the message $t$ is transmitted, received, or
coordinated with the state on $n$, then we write $t=\msg(n)$.
We write bullets $\bullet$ for transmission and reception events and
circles $\circ$ for neutral events, involving only the local state.
Double arrows indicate successive events on the same strand,
e.g. $\circ\Rightarrow\bullet\Rightarrow\bullet$.  

Each strand is either a \emph{regular strand}, which represents the
sequence of local actions made be a single principal in a single local
session of a protocol, or else an \emph{adversary strand}, which
represents a single action of the adversary.  
% 
% Only regular strands use
% neutral nodes $\circ$.

A \emph{protocol} is a set of regular strands, called the \emph{roles}
of the protocol.  We assume that every protocol contains a specific
role, called the \emph{listener} role, which consists of a single
reception node $n=\rightarrow \bullet$.  We use listener strands to
provide ``witnesses'' when $\msg(n)$ has been disclosed, especially to
specify confidentiality properties.

\emph{Adversary strands} consist of zero or more reception nodes
followed by one transmission node.  They represent the adversary
obtaining the transmitted value as a function of the values received;
or creating it, if there are no reception nodes.  All values that the
adversary handles are received or transmitted; none are silently
obtained from long-term state.  In fact, allowing the adversary to use
neutral nodes---or strands of other forms---provides no additional
power.  (See Section~\ref{sec:adversary}.)

We regard the messages transmitted and received on $\bullet$ nodes,
and obtained from long-term state on neutral nodes $\circ$, as forming
an abstract algebra.  \emph{Concatenation} and \emph{encryption} are
operators that construct values in the algebra from a pair of given
values, and we regard $v_0\cons v_1$ as equal to $u_0\cons u_1$ just
in case $v_0=u_0$ and $v_1=u_1$.  Similarly, $\enc{v_0}{v_1}$ equals
$\enc{u_0}{u_1}$ just in case $v_0=u_0$ and $v_1=u_1$.  That is, they
are \emph{free} operators.  For our present purposes, it suffices to
represent other operators such as hash functions and digital
signatures in terms of these.

The \emph{basic} values that are neither concatenations nor
encryptions include principal names; keys of various kinds; group
elements $x$, $x\cdot y$, and $g^x$; and text values.  We regard
variables (``indeterminates'') such as $x$ as values distinct from
values of other forms, e.g.~products $z\cdot y$, or from other
variables.  A variable represents a ``degree of freedom'' in a
description of some executions, which can be instantiated or
restricted.  It may also represent an independent choice, as $A$'s
choice of a group element $x$ to build $g^x$ is independent of $B$'s
choice of $y$.  DH algebras are defined later in this section as the
normal forms of an AC rewriting system.

\paragraph{Ingredients and origination.}  A value $t_1$ is an
\emph{ingredient} of another value $t_2$, written $t_1\ingredient
t_2$, if $t_1$ contributes to $t_2$ via concatenation or as the
plaintext of encryptions:  $\ingredient$ is the least reflexive,
transitive relation such that:
$$ t_1\ingredient t_1\cons t_2, \qquad t_2\ingredient t_1\cons t_2,
\qquad t_1\ingredient \enc{t_1}{t_2}.$$
By this definition, $t_2\ingredient \enc{t_1}{t_2}$ implies that
(anomalously) $t_2\ingredient t_1$.  
%   
%   Also, for basic values $v_1,v_2$,
%   if $v_1\ingredient v_2$ then $v_1= v_2$.  
%   
%   We also write $t_1\ll t_2$
%   for the more inclusive relation of ``appearing in,'' which does
%   consider the key of an encryption:  $\ll$ is the least reflexive,
%   transitive relation such that $\ingredient$ is included in $\ll$ and
%   $t_2\ll \enc{t_1}{t_2}$.
%   %   % 
%   % $$ t_1\ll t_1\cons t_2, \qquad t_2\ll t_1\cons t_2,
%   % \qquad t_1\ll \enc{t_1}{t_2}, \qquad t_2\ll \enc{t_1}{t_2}.$$
%   %   % 
%
For basic (non-encrypted, non-concatenated) values $a,b$, we have
$a\ingredient b$ iff $a=b$.

A value $t$ \emph{originates} on a transmission node $n$ if
$t\ingredient\msg(n)$, so that it is an ingredient of the message sent
on $n$, but it was not an ingredient of any message earlier on the
same strand.  That is, $m\Rightarrow^{+} n$ implies
$t\not\ingredient\msg(m)$.  

A basic value is \emph{uniquely originating} in an execution if there
is exactly one node at which it originates.  Freshly chosen nonces or
DH values $g^x$ are typically assumed to be uniquely originating.  

A value is \emph{non-originating} if there is no node at which it
originates.  An uncompromised long term secret such as a signature key
or a private decryption key is assumed to be non-originating.  Because
adversary strands receive their arguments as incoming messages, an
adversary strand that encrypts a message receives its key as a
message, thus originating somewhere.  Decryption and signature
creation are similar.

The set of non-originating values is denoted \non; 
the set of  uniquely originating values is denoted \unique.

Very often in DH-style protocols unique origination and
non-origination are used in tandem.  When a compliant principal
generates a random $x$ and transmits $g^x$, the former will be
non-originating and the latter uniquely originating.

\paragraph{Executions are bundles.}  The strand space theory
formalizes protocol executions by \emph{bundles}.  A bundle is a
directed, acyclic graph.  Its vertices are nodes on some strands
(which may include both regular and adversary strands).  Its edges
include the strand succession edges $n_1\Ra n_2$, as well as
\emph{communication edges} written $n_1\ra n_2$.  Such a dag
$\bnd=(V,E_{\Ra}\cup E_{\ra})$ is a \emph{bundle} if it is causally
self-contained, meaning:
\begin{itemize}
\item If $n_2\in V$ and $n_1\Ra n_2$, then $n_1\in V$ and
  $(n_1,n_2)\in E_{\Ra}$;
\item If $n_2\in V$ is a reception node, then there is a unique
  transmission node $n_1\in V$ such that $\msg(n_2)=\msg(n_1)$ and
  $(n_1,n_2)\in E_{\ra}$;
\item The precedence ordering $\preceq_{\bnd}$ for $\bnd$, defined
  to be $(E_{\Ra}\cup E_{\ra})^*$, is a well-founded relation.
\end{itemize}
The first clause says that a node has a causal explanation from the
occurrence of the earlier nodes on its strand.  The second says that
any reception has the causal explanation that the message was obtained
from some particular transmission node.  The last clause says that
causality is globally well-founded.  It holds automatically in
\emph{finite} dags $\bnd$, which are the only ones we consider here.

When we assume that a value is non-originating, or uniquely
originating, we constrain which bundles $\bnd$ are of interest to us,
namely those in which the value originates on no node of $\bnd$, or on
one node of $\bnd$, respectively.

%% @@@@@@@@@@@@@@@@@@@@@@@@@@@@@@@@@@@@@@@@@@@

\section{An Equational Theory of Messages}
\label{sec:an-equational-theory}
%
% We define an order-sorted equational theory $\dhthy$ to formalize the
% algebra of Diffie-Hellman values.  
%
% Capturing adversary ability by building terms is the standard insight
% underlying symbolic analysis of protocols.
% From this perspective the operations in the
% signature embody the computational power that incorporate in the
% adversary model. %
% %
% The protocols we study make explicit use of the $\eadd$ and $\emult$
% operations over the exponents; they do not involve multiplicative
% inverses.  But the multiplicative inverse of an element $e$ is certainly
% available to an adversary (certainly it can be computed efficiently, as
% $e^{(q-2)}$) and so we model it here.
%
% But of course the adversary should \emph{not} be able retrieve an
% exponent from a value that might have been constructed and sent as an
% exponentiation.  This of course reflects the Computational
% Diffie-Hellman Assumption.
% This assumption is reflected in a completely straightforward way in our
% formalism.  Namely, we do not provide a logarithm function in our
% signature.
%
As described in the Introduction, our challenge is to define an
equational theory that captures the relevant algebra of DH structures %
and admits a notion of reduction that supports modeling messages as
normal forms. 
By the Decisional Diffie-Hellman assumption, an adversary \emph{cannot}
retrieve the exponent $x$ from a value $g^x$ that a regular participant
has constructed.  This limitation is reflected in our formalism in a
straightforward way.  Namely, we do not provide a logarithm function in
the signature of DH-structures.

   In addition we must confront the fact that the
exponents in a DH
structure form a field, and fields cannot be axiomatized by equations.

Our strategy is as follows.  We work with a sort $G$ for base-group
elements and a sort $E$ for exponents.  The novelty is that we enrich
$E$ by adding a subsort $NZE$ whose intended interpretation is the non-0
elements of $E$. %

The device of expressing ``non-zero'' as a sort fits well with the
philosophy of capturing uniform capabilities algebraically.  For
instance no term which is a sum $e_1 + e_2$ will inhabit the sort $NZE$
because each finite field has finite characteristic and so there may be
instantiations of the variables in $e_1 + e_2$ driving the term to 0. On
the other hand, we will want to ensure that $NZE$ is closed under
multiplication; this is the role of the operator \nzmult below.

We show in this section that \dhthy admits a confluent and terminating
notion of reduction.  In section~\ref{sec:algebra} we prove a theorem that
describes the sense in which $\dhthy$ captures the equalities that hold
in almost all finite prime fields.

% A crucial point is that we want to model what the adversary can do
% \emph{uniformly} over all values of $q$.  The sense in which equations
% \dhthy below capture this uniformity is the content of \refthm{tfae}.

% The inclusion of the operator $\nzmult$ is a device for ensuring that
% the sort NZE is closed under ordinary $E$-multiplication (by writing the
% equation that states that \nzmult coincides with \emult on NZE).

\begin{definition}
  \label{def:dhthy}
  The theory \dhthy is the equational theory comprising %
  the sorts and operation given in Table~\ref{signature} and %
  the equations given in Table~\ref{eqthy}.   
  We write $box(t)$ as $\gtoe{t}$, and we write $exp(t,e)$ and $t^e$.   
%
  % There is an additional operator %
  % $\boxfn : G \to E$ %
  % in the signature that does not participate in any equations.

 % \begin{itemize}
 %  \item The sorts $G$, $E$, and $NZE$, with $NZE$ a subsort of $E$;
 %  \item The operators:
 %    \begin{align*}\footnotesize
 %      \gop & : G \times G \to G &&&  \emult & : E \times E \to E \\
 %      \gid & : \to G &&&             \eone & : \to NZE  \\
 %      \ginv & : G \to G &&& \einv & : NZE \to NZE
 %      \\[6pt]
 %      bar & : G \to NZE &&& \nzmult & : NZE \to NZE
 %    \end{align*}
 %    % 
 %    $$exp : G  \times E \to G  
 %    $$
 %    % 
 %  \item The equations given in Table~\ref{eqthy}.
 %    %% 
 % \end{itemize}
\end{definition} 
% 

%\begin{comment}
\begin{table}[tb]%[!h]
  \centering
  \framebox{
    \begin{minipage}[c!]{.95\linewidth}   %% needed for the fbox
      \centering {Sorts: $G$, $E$, and $NZE$, with $NZE$ a subsort of $E$;} 
      \begin{minipage}{.4\linewidth}
        \begin{align*}
         \gop & : G \times G \to G \\
          \gid & : \to G \\
          \ginv & : G \to G  \\
          exp: & : G \times E \to E \\
          box & : G \to NZE 
        \end{align*}
      \end{minipage}
      \begin{minipage}{.4\linewidth}
        \begin{align*}
         \eadd, \; \emult & : E \times E \to E \\
          \eone & : \to NZE  \\
          \einv & : NZE \to NZE \\
\\
          \nzmult & : NZE \to NZE
        \end{align*}
      \end{minipage}
    \end{minipage}
  } \\[2mm] % end fbox
  \caption{The signature for ${\dhthy}$}
  \label{signature}
\end{table}
%\end{comment}

\begin{table}[tb]%[!h]
  \centering
  \fbox{
    \begin{minipage}[c!]{.95\linewidth}   %% needed for the fbox
      \begin{minipage}{.4\linewidth}
        \centering {$(G, \gop, \ginv, \gid)$ \\ is an abelian group}
        \begin{eqnarray*}
          (a \gop b) \gop c &=& a \gop (b \gop c) \\
          a \gop b &=& b \gop a \\
          b \gop \gid &=& b \\ 
          b \gop \ginv(b) &=& \gid               
        \end{eqnarray*}
      \end{minipage}
      \hfill 
      \begin{minipage}[t!]{.55\linewidth}
        \centering{ $(E, \eadd, \ezero, \eneg, \emult, \one, \einv)$
          \\ is a commutative unitary ring}
        \begin{eqnarray*}
          (x \eadd y) \eadd z &=& x \eadd (y \eadd z) \\
          x \eadd y &=& y \eadd x \\
          x \eadd 0 &=& x \\
          x \eadd (\eneg x) &=& 0\\
          (x \emult y) \emult z &=& x \emult (y \emult z) \\
          x \emult y &=& y \emult x \\
          x \emult (y \eadd z) &=& (x \emult y) \eadd (z \eadd z) \\
          x \emult \eone &=& x 
        \end{eqnarray*}
      \end{minipage}  \\[3mm]
      \begin{minipage}[t!]{.5\linewidth}
        \centering {Multiplicative inverse, closure \\ at sort NZE}
        \begin{eqnarray*}
          u \nzmult v &=& u \emult v \\
          i (u * v) &=& i(u) * i(v) \\
          i ( 1 ) &=& 1 \\
          i ( i ( w ) ) &=& w 
        \end{eqnarray*}
      \end{minipage} \hfill 
      \begin{minipage}[t!]{.5\linewidth}
        \centering{Exponentiation makes $G$ \\ a unitary right
          $E$-module} 
        \begin{eqnarray*}
          (a ^ x)^y &=& a ^{x \emult y} \\ 
          a ^ {1} &=& a \\
          (a \gop b) ^ x &=& a^x \gop b^x \\
          a ^ {(x \eadd y)} &=& a ^ x \gop a ^ y \\
          \gid ^ x &=& \gid             
        \end{eqnarray*}
      \end{minipage}
    \end{minipage}
  } \\[2mm]% end fbox
  \caption{The theory ${\dhthy}$}
  \label{eqthy}
\end{table}
We next construct an associative-commutative rewrite system from
${\dhthy}$.  We orient each equation in Table~\ref{eqthy} in the
left-to-write direction, except for the associativity and
commutativity of $\gop, \eadd,$ and $\emult$.  Confluence requires the
new rules shown in Table~\ref{rrrules}, corresponding to equations
derivable from \dhthy that are needed to join critical pairs.

\begin{definition}
  Let $R$ be the set of rewrite rules given in
  Table~\ref{eqthy}---read from left to right, but without
  associativity and commutativity---and in Table~\ref{rrrules}.  The
  rewrite relation $\to_{\dhthy}$ is rewriting with $R$ modulo
  associativity and commutativity of $\gop, \eadd$, and $\emult$.
\end{definition}

\begin{table}[tb]%[!h]
  \centering
  \fbox{
    \begin{minipage}[t]{0.475\linewidth}
      \begin{align*}
        \text{ At sort G } \\
        inv ( \gid ) \; &\to \; \gid \\ 
        inv ( a \gop b ) \; &\to \; inv ( a ) \gop inv ( b ) \\ 
        inv ( inv ( b ) ) \; &\to \; b \\ 
        ( inv ( a )  ) ^ x \; &\to \; inv ( a ^ x) \\
        a ^ 0 \; &\to \; \gid \\
        a ^ {-(x)}  \; &\to \; inv (a ^ x) \\
      \end{align*}
    \end{minipage}
    \begin{minipage}[t]{0.475\linewidth}
      \begin{align*}
        \text { At sort E } \\
        - ( 0 ) \; &\to \; 0 \\ 
        - ( x + y ) \; &\to \; - ( x ) + (- ( y ) ) \\
        - ( - ( x ) ) \; &\to \; x \\ 
        0 * x \; &\to \; 0 \\
        - (x) * y \; &\to \; - (x * y) \\
      \end{align*}
    \end{minipage}
  }\\[2mm] % end fbox
  \caption{Additional rewrite rules for $\to_{\dhthy}$}
  \label{rrrules}
\end{table}

\begin{theorem} \label{thm:sn+cr}
  The reduction $\to_{\dhthy}$ is terminating and confluent modulo AC.
\end{theorem}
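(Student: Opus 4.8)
The plan is to treat $\to_{\dhthy}$ as a rewrite system modulo AC in the sense of Peterson--Stickel and Jouannaud--Kirchner, and to establish the two halves separately: termination of $\to_{\dhthy}$, and local confluence modulo AC. Together these yield confluence modulo AC by the version of Newman's Lemma for rewriting modulo an equational theory (a terminating, locally-confluent-modulo-$E$ system is Church--Rosser modulo $E$).

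For termination I would exhibit an AC-compatible reduction order $\succ$ that orients every rule of Table~\ref{eqthy} (read left to right, minus the AC axioms) and Table~\ref{rrrules} from left to right. The natural candidate is an AC-recursive path order (or, equally, a polynomial interpretation into a well-founded AC-algebra). The required precedence constraints can be read directly off the rules and are mutually consistent: the distributivity rules $(a\gop b)^x \to a^x\gop b^x$ and $a^{(x\eadd y)}\to a^x\gop a^y$ force $exp \succ \gop$; the rules $(\ginv(a))^x \to \ginv(a^x)$ and $\ginv(a\gop b)\to \ginv(a)\gop\ginv(b)$ together force $exp\succ\ginv\succ\gop$; on the exponent side, ring distributivity and $\eneg(x)\emult y\to\eneg(x\emult y)$ force $\emult\succ\eneg\succ\eadd$, while $u\nzmult v\to u\emult v$ and $\einv(u\emult v)\to\einv(u)\emult\einv(v)$ force $\nzmult,\einv\succ\emult$; the nesting rule $(a^x)^y\to a^{x\emult y}$ is handled by $exp\succ\emult$ with a left-to-right lexicographic status for $exp$; and the collapsing rules ($a^{\eone}\to a$, $\ginv(\ginv(b))\to b$, $\ginv(\gid)\to\gid$, $a^{\ezero}\to\gid$, etc.) decrease by the subterm property. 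Any precedence refining these constraints, with the constants $\gid,\ezero,\eone$ minimal, gives a well-founded AC-compatible order, so $s\to_{\dhthy} t$ implies $s\succ t$ and no infinite reduction exists.

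For confluence I would invoke the critical-pair criterion for rewriting modulo AC: a terminating AC-rewrite system is Church--Rosser modulo AC iff every AC-critical pair is joinable modulo AC and the system is coherent modulo AC. Concretely, I would AC-unify the left-hand sides of (renamed) rules in all possible ways and check that each resulting divergence rejoins to a common AC-equivalence class, including the coherence overlaps forced by the AC symbols $\gop,\eadd,\emult$ (i.e. overlaps of each rule whose left-hand side is headed by an AC symbol, such as $b\gop\ginv(b)\to\gid$ or $x\emult(y\eadd z)\to\ldots$, against its extension rule $f(\ell,z)\to f(r,z)$). This is a finite computation. Crucially, the rules of Table~\ref{rrrules} add no new equational content: each is a consequence of \dhthy, and they are included precisely so that the critical pairs that would otherwise diverge---for instance those arising between the $E$-module equations and the inverse equations---can be closed. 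Having verified joinability of every critical and coherence pair, local confluence modulo AC follows, and combined with termination this gives confluence modulo AC.

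The main obstacle is the confluence half, specifically the completeness of the critical-pair enumeration modulo AC: AC-unification is finitary but typically yields many most-general unifiers, so the overlaps proliferate, and one must be careful to include every coherence (extension-rule) overlap, since omitting any would leave a genuine gap. On the termination side the subtle point is confirming that the chosen order is genuinely AC-\emph{compatible} on the duplicating distributivity rules---the classic stumbling block for path orders over AC symbols---which is exactly why an AC-aware order (AC-RPO, or a polynomial interpretation whose AC symbols are interpreted by associative-commutative polynomials) is needed rather than a plain RPO.
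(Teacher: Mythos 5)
Your proposal is correct and follows essentially the same route as the paper: termination via an AC-compatible recursive path order with a precedence placing exponentiation above inverse above multiplication, and confluence via joinability of all AC-critical (and coherence) pairs combined with Newman's Lemma modulo AC. The only difference is that the paper discharges the two finite verifications mechanically (AProVE for the AC-RPO termination check, the Maude Church--Rosser Checker for the critical pairs) rather than describing them by hand.
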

\vshort{ \vs-pf-msg }
\long\def \vlsn+cr {
  \begin{proof}
    Termination can be established using the AC-recursive path order
    defined by Rubio\cite{rubio_fully-syntactic} with a precedence  in
    which exponentiation is greater than inverse, which is in turn
    greater than multiplication (and 1).
    This has been verified with the Aprove termination tool \cite{aprove}.

    Then confluence follows from local confluence, which is established
    via a verification that all critical pairs are joinable.  This
    result has been confirmed with the Maude Church-Rosser
    Checker\cite{duran_church-rosser-checker}.
  \end{proof}
}
\vlong{ \vlsn+cr }

Terms that are irreducible with respect to $\to_{\dhthy}$ are called
\emph{normal forms.}
The following taxonomy of the normal forms will be crucial in what
follows, most of all in the definition of indicators, %
Definition~\ref{def:indicator}.
  The proof is a routine simultaneous induction over the size of $e$ and $t$.

\begin{lemma} 
  \begin{enumerate}
  \item 
  If $e:E$ is a normal form then $e$ is a sum 
 \mbox{$m_1 \eadd \dots \eadd m_n$}
 where \\
  (i) each $m_i$ is of the form
  $
  e_1 \emult \dots \emult e_k \quad k \geq 0
  $, %
(ii) no $e_i$ is of the form $\einv(e_j)$, and %
(iii) each $e_i$ is one of: 
\[ x, \quad \einv(x), \quad \gtoe{t}, \quad \einv(\gtoe{t})
\]
 with $x$ a $G$-variable and $t:G$  a $G$-normal form.

 The case $n=0$ is taken to mean $e = 0$; the case $k=0$ is taken to
 mean $m_i = \eid$
  We call terms of the form $m_i$  \emph{irreducible monomials}    

\item If $t: G$ is a normal form then $t$ is a product %
  \mbox{$ t_1 \gop \dots \gop t_n , \quad n \geq 0 $} %
  where \\
 (i) no $t_i$ is of the form $\ginv(t_j)$, and 
 (ii)  each $t_i$ is one of: 
    \[
    v \qquad \ginv(v) \qquad v^e \qquad \ginv(v^e)
    \]
    with $v$ a  $G$-variable $e:E$ an irreducible monomial.

The case $n=0$ is taken to mean $t = \gid$.
\end{enumerate}
\end{lemma}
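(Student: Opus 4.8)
The plan is to prove the two parts simultaneously by a single induction on the size of the term, because the two sorts are genuinely intertwined: a normal form of sort $E$ can contain a normal form of sort $G$ underneath the coercion $\gtoe{\cdot}$, while a normal form of sort $G$ can contain a normal form of sort $E$ as the exponent of a power $t^e$. I would take as combined induction hypothesis that both part~1 (for $E$-terms) and part~2 (for $G$-terms) hold for all strictly smaller terms, and then case on the outermost operator, using that every subterm of a normal form is again a normal form. Throughout, ``sum'' and ``product'' are read modulo the associativity and commutativity of $\eadd$, $\emult$, and $\gop$, so nested occurrences flatten automatically and I only track the multiset of top-level summands or factors. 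The whole argument is driven by the slogan that a normal form contains no redex: each clause of the taxonomy is obtained by exhibiting, for every forbidden configuration, a rule of Table~\ref{eqthy} or Table~\ref{rrrules} that would fire.

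For the $G$-case (part~2) I would dispatch the head symbol as follows. A variable or $\gid$ is immediate (the latter is the empty product, $n=0$). If the head is $\ginv$, the rules $\ginv(\gid)\to\gid$, $\ginv(a\gop b)\to\ginv(a)\gop\ginv(b)$ and $\ginv(\ginv(b))\to b$ force the argument to be a single non-inverse factor, so $\ginv$ lands on an atom and yields a factor $\ginv(v)$ or $\ginv(v^e)$. If the head is $\gop$, the hypothesis turns both arguments into products of admissible factors, whose flattened concatenation is again such a product; the non-cancellation condition~(i) then follows because otherwise the group rule $b\gop\ginv(b)\to\gid$ would apply. The crucial case is a power $t^e$: here $\gid^{x}\to\gid$, $(a\gop b)^{x}\to a^{x}\gop b^{x}$, $(a^{x})^{y}\to a^{x\emult y}$ and $(\ginv(a))^{x}\to\ginv(a^{x})$ jointly force the base to be a bare $G$-variable $v$, while $a^{\eone}\to a$, $a^{\ezero}\to\gid$, $a^{(x\eadd y)}\to a^{x}\gop a^{y}$ and $a^{\eneg(x)}\to\ginv(a^{x})$ force the exponent to be a single summand that is neither $\eone$, $\ezero$, nor negation-headed; part~1 of the hypothesis then identifies that exponent as an irreducible monomial, so $t^e=v^e$ is admissible.

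For the $E$-case (part~1) I would proceed symmetrically. Variables, $\ezero$ (the empty sum, $n=0$) and $\eone$ (the empty product, $k=0$) are immediate, and $\nzmult$ never survives because of $u\nzmult v\to u\emult v$. A sum $\eadd$ flattens to a sum of monomials by the hypothesis. A product $\emult$ must, by distributivity $x\emult(y\eadd z)\to(x\emult y)\eadd(x\emult z)$ and by $\ezero\emult x\to\ezero$, have both arguments equal to single monomials, whose flattened concatenation is one monomial; condition~(ii) records the absence of multiplicative cancellation and follows from the multiplicative analogue of $b\gop\ginv(b)\to\gid$. The operator $\einv$ is, by $\einv(u\emult v)\to\einv(u)\emult\einv(v)$, $\einv(\eone)\to\eone$ and $\einv(\einv(w))\to w$, pushed onto a single non-inverse atom of sort $\NZE$, giving $\einv(x)$ or $\einv(\gtoe{t})$; and $\gtoe{t}$ is irreducible with $t$ a $G$-normal form, which part~2 of the hypothesis describes. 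The additive inverse $\eneg$ is driven to the leaves by $\eneg(\ezero)\to\ezero$, $\eneg(x\eadd y)\to\eneg(x)\eadd\eneg(y)$, $\eneg(\eneg(x))\to x$ and $\eneg(x)\emult y\to\eneg(x\emult y)$, after which the residual leaf occurrences are absorbed into the monomial description.

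The step I expect to be the main obstacle is not any single case but making the simultaneous induction airtight: confirming that the cross-sort recursion through $\gtoe{\cdot}$ and $t^e$ is well-founded (it is, since the base and exponent of a power, and the argument of a box, are all strictly smaller, so plain size induction suffices) and verifying that the rule set leaves genuinely \emph{no} surviving redex in each claimed shape---in particular that exponentiation is fully distributed over products and sums and fully commuted past inverses, and that the side conditions~(i) and~(ii) are exactly the content of the two cancellation rules. The fiddliest sub-case is $\eneg$, whose push-down rules must be shown to terminate against the monomial structure rather than reintroducing reducible subterms.
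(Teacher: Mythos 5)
Your proposal is correct and matches the paper's approach exactly: the paper's entire proof is the remark that the lemma follows by ``a routine simultaneous induction over the size of $e$ and $t$,'' which is precisely the induction you carry out, casing on the outermost operator and invoking, for each forbidden shape, the rewrite rule that would fire. The one loose end you flag---additive negation surviving at the leaves, e.g.\ $-(x)$ for a variable $x$, which does not literally fit the stated taxonomy of monomials---is an imprecision inherited from the paper's own statement (which is silent on signs), not a defect of your argument relative to the paper's proof.
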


%% spread-out version
\begin{comment}
\begin{lemma} \hfill
  \begin{enumerate}
  \item 
  If $e:E$ is a normal form then $e$ is a sum 
  $$m_1 \eadd \dots \eadd m_n$$ where 
  \begin{itemize}
  \item each $m_i$ is of the form
  \[
  e_1 \emult \dots \emult e_k \qquad k \geq 0
  \]

 \item no $e_i$ is of the form $\einv(e_j)$, and 
  \item each $e_i$ is one of: 
    \[
    x \qquad \einv(x) \qquad \gtoe{t} \qquad \einv(\gtoe{t})
    \]
    with $x$ a $G$-variable and $t:G$  a $G$-normal form
  \end{itemize}
The case $n=0$ is taken to mean $e = 0$;
the case $k=0$ is taken to mean $m_i = \eid$
  We will call terms of the form $m_i$ above \emph{irreducible monomials}    

\item  If $t: G$ is a normal form then $t$ is a product
  \[
  % t_1 \gop \dots \gop t_n \gop \ginv(t_{n+1}) \gop \dots \gop
  % \ginv(t_{n+m})
  t_1 \gop \dots \gop t_n
  \qquad  n \geq 0
  \]
  where
  \begin{itemize}
 \item no $t_i$ is of the form $\ginv(t_j)$ 
  \item each $t_i$ is one of: 
    \[
    v \qquad \ginv(v) \qquad v^e \qquad \ginv(v^e)
    \]
    with $v$ a  $G$-variable $e:E$ an irreducible monomial.
  \end{itemize}
The case $n=0$ is taken to mean $t = \gid$.
\end{enumerate}
\end{lemma}
\end{comment}

\begin{comment}
  Note that in a normal-form term, the symbols $\eadd$ and $\eid$  do
not occur.
\end{comment}
%%% Local Variables: 
%%% mode: latex
%%% TeX-master: "root"
%%% End: 

\section{Formalizing the Protocols and the Adversary}
\label{sec:protocols}

We consider a collection of protocols that all involve the same
strands, i.e.~sequences of transmissions, receptions, and neutral
events.  They differ almost exclusively in the key computations used
to generate the shared secret.

{\mqv} and {\um}~\cite{blake1999authenticated} both fit our pattern.
Various other protocols fit this pattern with some cajoling.
{\kea}~\cite{blake1999authenticated} fits the pattern too, although
its key computation uses addition mod $p$ to combine $g^{ay}$ and
$g^{bx}$.  Cremers-Feltz's protocol
$\cremers$~\cite{CremersFeltz2011}, in which the shared secret is
$g^{(x+a)(y+b)}$, almost fits:  They use the digitally signed messages
$\tagged{R_A}A$ and $\tagged{R_B}B$.  Our analysis is equally
applicable in this case.

In these protocol descriptions, we make explicit aspects that are
normally left implicit.  One is the interaction with the certifying
authority.  Kaliski~\cite{kaliski2001unknown} argues that the
certification protocol should be considered in analysis, because the
correctness of forms of a protocol may depend on exactly what checks a
\textsc{ca} has actually made.  We will also show how the local
session interacts with the local principal state.  

\paragraph{The IADH initiator and responder roles.}  We summarize the
activities of regular initiators and responders in
Figure~\ref{fig:iadh}.
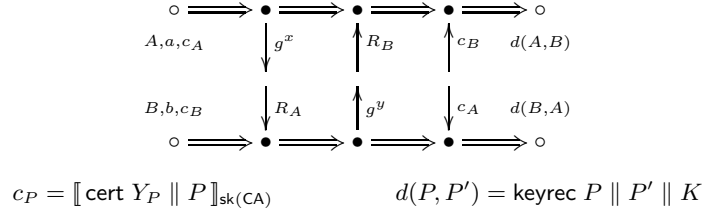
\begin{figure}[tb]
  \centering
  $$\xymatrix@R=6mm{
    \circ\ar@{=>}[r]\ar@{}[d]|{A,a,c_A} &
    \bullet\ar@{=>}[r]\ar[d]^{g^x} & \bullet\ar@{=>}[r] &
    \bullet\ar@{=>}[r] &
    \circ\ar@{}[d]|{d({A,B})}  \\
    &\null\ar[d]^{R_A} & \null\ar[u]_{R_B}&\null\ar[u]_{c_B}\ar[d]^{c_A} &\\
    \circ\ar@{=>}[r]\ar@{}[u]|{B,b,c_B} & \bullet\ar@{=>}[r] &
    \bullet\ar@{=>}[r]\ar[u]_{g^y} & \bullet\ar@{=>}[r] & \circ
    \ar@{}[u]|{d({B,A})} }$$
  $c_P=\tagged{\mathsf{cert}\; Y_P\cons P}{\CA}$ \hfil $d({P,P'})=
  {\mathsf{keyrec}\; P\cons P'\cons K}$
  \caption{{\iadh} Initiator and Responder Strands}
  \label{fig:iadh}
\end{figure}
We specify, for the initiator $A$:
\begin{enumerate}
  \item $A$ retrieves its principal name $A$, its long term secret
  $a$, and its public certificate $c_A$ from its secure storage.
%   
%     This ensures that a valid certificate has been previously
%     received from the \textsc{ca}.
  \item $A$ chooses a fresh ephemeral $x$, transmitting
  $R_A=g^x$.\label{clause:ephemeral:out}
  \item $A$ receives some $R_B$, which it checks to be a non-trivial
  group element, i.e.~a value of the form $g^y$ for some $y\not=0,1
  \bmod q$.\label{clause:ephemeral:in}
  \item It receives a certificate $c_B$ associating $Y_B$ with $B$'s
  identity.  We do not specify here how the participant determines
  what name $B$ to require in this certificate, or how it determines
  which \textsc{ca}s to accept.  This is implementation-dependent.
  \item Finally, $A$ performs the protocol-specific key computation to
  determine $K$.  $A$ checks the exponentiations yield non-1 values,
  and fails if any do.  On success, $A$ deposits a \emph{key record}
  into its local state database, so that $K$ may be used for a secure
  conversation between $A$ and $B$.
%   
%     Presumably some
%     mechanism ensures that the key record can only be used for a limited
%     length of time.
\end{enumerate}
In clause~\ref{clause:ephemeral:out}, $A$ chooses $x$ freshly.
Because $A$ never sends $x$ as an ingredient in any message---but only
$g^x$---it follows that $x$ has negligible probability of occurring in
a message.  After all, $A$ does not send it; any other regular
participant is overwhelmingly unlikely to choose the same value again;
and the adversary is overwhelmingly unlikely to choose it, e.g.~as a
guess.  For this reason we model $x$ as being ``non-originating''; for
the same reason, $g^x$ is declared to be uniquely originating.

We always add the assumptions that $x$ is non-originating and $g^x$ is
uniquely originating whenever a \emph{regular} strand selects
$R_A=g^x$.  In particular, since $x$ is a fresh, unconstrained choice
that the principal makes, we always instantiate it with a simple
value, essentially a \emph{parameter}, and never with a compound
expression like $y\cdot 1/z$.  Essentially, $x$ is a generator of the
algebra of normal forms of $\dhthy$.

A responder $B$ behaves in a corresponding fashion, with predictable
changes to the names of its parameters.  The only real change is that
it receives an ephemeral public value $R_A$ in step 2 before
generating its ephemeral secret $y$ and transmitting its ephemeral
public value $g^y$ in step 3.  We will assume that $y$ is
non-originating and $g^y$ is uniquely originating whenever a regular
responder strand selects $R_B=g^y$.

The parameters to an initiator strand are $A,B,a,x,Y_B,R_B$.  We write
them in this order, and refer (e.g.)~to the fourth parameter as $x$,
despite the fact that in different instances of the role have
different choices for the parameter $x$.  The parameters to a
responder strand are $A,B,b,y,Y_A,R_A$; thus, we will write the
(purported) initiator's name first, and the (actual, known)
responder's name second.

We make an assumption on the principal states, namely that the node
$\circ\; {A,a,c_A}$ starting an initiator or responder strand is
possible only if the same principal $A$ has on some earlier occasion
received a certificate $c_A$, and deposited it into its state.
Certificates do not emerge \emph{ex nihilo}.  Gathering our
assumptions on regular initiator and responder strands:
\begin{assumption}\label{assumption:regular:init:resp}
    Suppose that $\bnd$ is a bundle.
  \begin{enumerate}
    \item If $\bnd$ contains an initiator strand $s$ with parameters
    $A,B,a,x,Y_B,R_B$, then:
    \begin{enumerate}
      \item $x$ is non-originating, and $g^x$ is uniquely
      originating.\label{assumption:freshness:init} 
      \item $x$ is a parameter, not a compound expression.
      \item For some transmission node $n\in\bnd$,
      $c_A\ingredient\msg(n)$ and $n\preceq_{\bnd}s_1$, %
      where $s_1$ is the first node on strand $s$.
    \end{enumerate}
    \item Symmetrically for responder strands $s$ in $\bnd$, with
    parameters $A,B,b,y,Y_A,R_A$:
    \begin{enumerate}
      \item $y$ is non-originating, and $g^y$ is uniquely
      originating.\label{assumption:freshness:resp} 
      \item $y$ is a parameter, not a compound expression.
      \item For some transmission node $n\in\bnd$,
      $c_B\ingredient\msg(n)$ and $n\preceq_{\bnd}s_1$, where $s_1$ is
      the first node on strand $s$.
    \end{enumerate}
  \end{enumerate}
\end{assumption}
Our results do not depend on the specific ordering of events in
initiator and responder strands.  As long as the neutral node
retrieving the long term secret and certificate occurs before any of
the other events, and as long as the neutral node depositing the $K$
into the state occurs only after the other event, then our results
remain correct.  They also do not distinguish between initiator and
responder strands:  We would allow two initiator strands to succeed in
implicit authentication, for instance.

\paragraph{Key computation functions.}  The shared secret $K$ is
generated using different functions in different {\iadh} protocols.
In the Unified Model {\um}, the key is generated by
\begin{equation}
  \label{eq:dh:unif:model:key:comp}
  {A:\; k=H({Y_B}^a\cons {R_B}^x) \quad B:\; k=H({Y_A}^b\cons {R_A}^y)}
\end{equation}
In the optimistic case that $R_A=g^x$ and $R_B=g^y$
\begin{equation}
  \label{eq:dh:unif:model:key:comp:opt}
  K = H(g^{ab}\cons g^{xy})
\end{equation}
In {\mqv} the key is generated by 
 \begin{equation}
  \label{eq:mqv:key:comp}
  {A:\; K=({R_B}\cdot{Y_B}^{\gtoe{{R_B}}})^{s_A} \quad 
    B:\; K= ({R_A}\cdot{Y_A}^{\gtoe{{R_A}}})^{s_B}} 
\end{equation} 
(where ${s_A}={(x+a\gtoe{g^x})}$ and ${s_B}={(y+b\gtoe{g^y})}$), so
when $R_A=g^x$ and $R_B=g^y$ the principals compute
\begin{equation}
  \label{eq:dh:mqv:key:comp:opt}
  (g^{(y+b{\gtoe{g^y}})})^{(x+a\gtoe{g^x})} =  g^{xy} \cdot
  g^{xb\gtoe{g^y}} \cdot g^{ya\gtoe{g^x}} \cdot
  g^{ab\gtoe{g^x}\gtoe{g^y}} 
\end{equation}
The key computation for Cremers-Feltz $\cremers$---somewhat simplified
to make it more parallel to the $\um$ and $\mqv$ computations---is:
\begin{equation}
  \label{eq:dh:cremers:feltz:key:comp}
  {A:\; K=({R_B}\cdot{Y_B})^{(a+x)} \quad 
    B:\; K= ({R_A}\cdot{Y_A})^{(b+y)}}
\end{equation}
so that, in the same optimistic case, 
\begin{equation}
  \label{eq:dh:cremers:feltz:key:comp:opt}
  (g^y\cdot g^b)^{(x+a)} = g^{(y+b)(x+a)} = 
  g^{xy}\cdot g^{xb}\cdot g^{ya}\cdot g^{ab} 
\end{equation}

The occurrences of $a,b,x,y$ in these terms show us a contrast between
{\um} and the other two.  In the latter, all four pairs consisting of
one parameter from $a,x$ and one from $b,y$, appearing together, may
be found in the exponent of some factor of the final shared secret.
However, in {\um}, only two of these pairs appears.  This suggests
that {\um} is more fragile than the latter two, and this explains why
it is vulnerable to key compromise impersonation while the others are
not.

In Section~\ref{sec:algebra}, we will develop an algebraic theory to
justify this kind of analysis.
%   
% , which leads to the notion of \emph{indicator}.  It summarizes
% which parameters appear together in exponents in expressions such as
% Eqns.~\ref{eq:dh:unif:model:key:comp:opt},
% \ref{eq:dh:mqv:key:comp:opt},
% \ref{eq:dh:cremers:feltz:key:comp:opt}.
%   %

\paragraph{Requesting and issuing certificates.}  We also identify
protocol roles for requesting certificates from certificate
authorities, and for the CAs to issue them (Fig.~\ref{fig:cert}).  The
client makes a request with its name $P$ and public value $Y_P$, and,
if successful, receives a certificate which it deposits into its local state.
In its request, a compliant principal named $P$ always chooses a fresh
long-term secret $a$, and computes $Y=g^a$.  The CA, on receiving a
request, issues a certificate
\begin{figure}[tb]
  \centering
  $$\xymatrix@R=2mm@C=3mm{
    \bullet\ar@{=>}[rr]^{\mathsf{pop}}\ar[d] && \bullet\ar@{=>}[r] & \circ \\
    {\;\tagged{\mathsf{cert\_req}\; P\cons Y}P}\ar[d] &&
    {\tagged{\mathsf{cert}\; Y\cons P}{\CA}}\ar[u] \\
    \bullet\ar@{=>}[rr]_{\mathsf{pop}} &&\bullet\ar[u] }$$
  \caption{Strands for Certificate Requests}
  \label{fig:cert}
\end{figure}
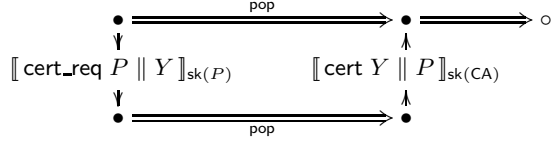
after a ``proof of possession'' protocol $\mathsf{pop}$ intended to
show $P$ possesses an $a$ such that $g^a=Y$.  We will not make
$\mathsf{pop}$ explicit.

We assume, whenever a bundle contains a regular certificate request,
that its $g^a=Y$ is uniquely originating.  Any subsequent use of $Y$
must obtain it through some sequence of message transmissions and
receptions tracing back, ultimately, to this originating node.  We
will not, however, always assume $a$ is non-originating, since
carelessness or malice may eventually lead to the disclosure of $a$.
Instead, if a particular $a$ is non-originating, we will explicitly
state that as a hypothesis in the security goals that depend on it.

We assume the {\CA} is uncompromised, i.e.~$\sk(\CA)\in\non$.  {\CA},
when receiving $Y$, should ensure that $Y\not=g^0,g^1$, and that it is
a member of the group (e.g.~via the little Fermat test).  Hence, there
is an $e$ such that $Y=g^e$.

Moreover, a successful $\mathsf{pop}$ means that the requester
possesses an exponent $e$ such that $Y=g^e$.  The requester is either
a regular participant or the adversary.  Thus, either:
\begin{itemize}
  \item $e$ is some parameter $a$, and $g^a$ originates uniquely on a
  certificate request strand; or else
  \item the request comes from an adversary strand, and $e$ is
  available to the adversary.
\end{itemize}
We will model the latter by assuming that the bundle containing this
certification generation strand also contains a listener strand
$n=\rightarrow \bullet$ with $\msg(n)=e$.  
%   
%   
%   Moreover, another
%   party who has a strategy for producing related values $g^{ca}$, for
%   predictable $c$, would also be able to recover $a$.  Thus, we infer
%   that \emph{if} $a$ is non-originating and there are certificates
%   %
%   $$c_{P} = \tagged{\mathsf{cert}\; Y\cons P}{\CA},\quad c_{P'} =
%   \tagged{\mathsf{cert}\; Y^{c}\cons P'}{\CA}$$
%   %
%   for any $Y$, then $P=P'$.
%
%   Gathering our assumptions on the $\CA$ and on regular participants
%   requesting certificates:
%
\begin{assumption} 
  \label{assumption:regular:cert}  
  Let $\bnd$ be a bundle containing $\tagged{\mathsf{cert}\; Y\cons
    P}{\CA}$.  Assume $\sk(\CA)\in\non_{\bnd}$, and moreover:
  \begin{enumerate}
    \item For a certificate request strand, with parameters $P,a,\CA$:
    \begin{enumerate}
      \item $g^a$ originates uniquely;  
      \item $a$ is a parameter, not a compound expression.
    \end{enumerate}
    \item For a ${\CA}$ strand, with parameters $P,Y,\CA$:
    \begin{enumerate}
      \item There exists an $e\not=0,1$ such that
      $Y=g^e$;\label{assumption:regular:cert:in:group} 
%      \item $Y\not=g^0$ and $Y\not=g^1$;
      \item Either $Y=g^a$ where $g^a\in\unique_{\bnd}$, and $g^a$
      originates on a regular certificate request strand, or else
      there exists $n\in\bnd$ with
      $\msg(n)=e$.\label{assumption:regular:cert:unique:P}
%   
%   
%   if any
%         node transmits $\tagged{\mathsf{cert}\; Y^c\cons P'}{\CA}$, then
%         $P=P'$.
%   
%   
    \end{enumerate}
  \end{enumerate}
\end{assumption}
By Clause~\ref{assumption:regular:cert:unique:P}, if
$a\in\non_{\bnd}$, then for at most one $P$ can a certificate
$\tagged{\mathsf{cert}\; g^a\cons P}{\CA}$ be issued.
The {\iadh} protocols are defined by the four roles shown in
Figs.~\ref{fig:iadh}--\ref{fig:cert}, using a key computation such as
those in {\mqv}, {\um}, and {\cremers}.

\iffalse 
\paragraph{Certification assumptions.}  We again gather our
assumptions.  Suppose that $\bnd$ is a bundle.  
%
\begin{enumerate}
  \item $g^a$ originates uniquely;  
  \item $a$ is a simple parameter, not a compound expression. 
\end{enumerate}
%
If $\bnd$ contains a regular certificate generation strand, with
parameters $P,Y,\CA$, then:
%
\begin{enumerate}
  \item There exists an $a$ such that $Y=g^a$;
  \item $Y\not=g^0$ and $Y\not=g^1$;
  \item If $Y=g^a$ for some \emph{non-originating} $a$, and if any
  node transmits message $\tagged{\mathsf{cert}\; Y\cons P'}{\CA}$,
  then $P=P'$.
\end{enumerate}
\fi

\paragraph{The Adversary.}
%\subsection*{The Adversary.}
An adversary strand has zero or more reception nodes followed by a
transmission node:
\begin{definition}\label{def:adv}
  \emph{Adversary strands} take the forms:
    \begin{itemize}
      \item %
      Emission of a basic value  $a$:  $\langle +a
      \rangle $ %
      \item %
      Constructor strands:  $ \langle -a_1 \Ra \dots \Ra - a_n \Ra +t
      \rangle$ where is $t$ is in $\Gen{a_1, \dots, a_n}$
      \item %
      Destructor strands: $\langle -t \Ra +s_1 \dots \Ra +s_n \rangle $
      where $t$ is a concatenation of the values $s_i$.
      \item %
      Encryption strands: %
      $ \langle -K \Ra -t \Ra + \enc{t}{K} \rangle$
      \item %
      Decryption strands: %
      $ \langle -K^{-1} \Ra - \enc{t}{K} \Ra + t \rangle$
   \end{itemize}
   Suppose that $S_1, \dots, S_k$ are node-disjoint adversary strands.
   An \emph{adversary web}~\cite{Guttman10} using $S_1, \dots, S_k$ is
   an acyclic graph whose vertices are the nodes of the $S_i$, where
   for each edge $(n,n')$, either (i) $n \Ra n'$ on some strand or
   (ii) $n$ is a transmission node, $n'$ is a reception node, and
   $\msg(n) = \msg(n')$.
\end{definition}
%
% Suppose $W$ is an adversary web within a bundle $\bnd$, and $a$ is
% non-originating.  If $\enc{t}a\ingredient\msg(n)$, for any
% transmission node $n\in W$, then there is a reception node
% $n'\prec_{\bnd} n$ such that $\enc{t}a\ingredient\msg(n')$.  That
% is, $n$ cannot use $a$ to create any new cryptographic unit; such
% units must have been received earlier, ultimately from a regular
% node in $\bnd$~\cite{ThayerHerzogGuttman99}.
%
This adversary model motivates a game between the adversary and the
system:
\begin{enumerate}
  \item The system chooses a security goal $\Phi$, involving secrecy,
  authentication, key compromise, etc., as in
  Figs.~\ref{fig:key:secrecy}--\ref{fig:impl:auth}.

  \item The adversary chooses a potential counterexample $\skel$
  consisting of regular strands with equations between values on the
  nodes, e.g.~an equation between a session key computed by one
  participant and a session key computed by another
  participant.\label{clause:adv:skel} 
  \item To show that $\skel$ can occur, the adversary chooses how to
  generate the messages in $\skel$.

  For each message reception node in $\skel$, the adversary must
  provide an acceptable message in time for that event.  The adversary
  benefits from transmission events on regular strands, which he can
  use to build messages for subsequent reception events.  For each
  reception node, the adversary chooses a recipe, consisting of an
  adversary web, using the strands of
  Def.~\ref{def:adv}.\label{clause:adv:strategy}

  This map---which, to every message reception event, associates an
  adversary web---is the \emph{adversary strategy}.

  The adversary strategy determines a set of equalities between a
  value computed by the adversary and a value $t$ ``expected'' by the
  recipient, or acceptable to the recipient.  They are the
  \emph{adversary's proposed equations}.
  \item The adversary wins if his proposed equations are valid in
  $(G_q,F_q)$, for infinitely many primes $q$.
%     \item The system chooses a bound $N_0$.
%     \item The adversary chooses a prime $q>N_0$.
%     \item The adversary wins if his proposed equations are all valid in
%     $(G_q,F_q)$.
\end{enumerate}
This game may seem too challenging for the adversary.  First, it wins
only if the equations are valid, i.e.~true for all instances of the
variables.  However, the adversary's proposed equations determine
polynomials, and each of these polynomials has a syntactically
determined degree $d$.  If it is not valid, it can have at most $d$
solutions, independent of the choice of $(G_q,F_q)$.  Hence, the set
of values for which the adversary's strategy works remains small,
regardless of how the cardinality of the structure $(G_q,F_q)$ grows.

Second, the adversary must choose how to generate all the messages,
its adversary strategy, before seeing any concrete bitstrings, or
indeed learning the prime $q$.  This objection motivates future
research into the \emph{computational soundness} of our approach.  The
hardness of DDH seems to suggest that the adversary acquires no useful
advantage from seeing the values $g^x$ etc.  Any definite claim would
require a reduction argument.

%%% LocalWords: responders

%%% Local Variables: 
%%% mode: latex
%%% TeX-master: "root"
%%% End: 

\section{Indicators}
\label{sec:adversary}

We turn now to a formal definition of indicators and the proof of a
key invariant that all adversary actions preserve.

%\paragraph{Indicators for terms.}
Let $\int^k$ denote the set of all $k$-tuples of integers.  For
intuition about the following definition, think of $N$ as being a set
of \emph{non-originating values} for a bundle.  If $m$ is a monomial
occurring as a subterm of a term $t$, say that $m$ is
``maximal-monomial'' if $t$ has a subterm of the form $b^m$. 
%   
%   it occurs as an exponent in $t$, that is, if
%   the smallest subterm of $t$ of which $m$ is a proper subterm is of the
%   form $b^m$.

\begin{definition}[Indicators]
  \label{def:indicator} 
  Let $N = \langle v_1, \dots, v_d \rangle$ be a vector of
  $\NZE$-variables.  If $m$ is an irreducible monomial, the
  \emph{$N$-vector} for $m$ is $\langle z_1, \dots, z_k \rangle$ where
  $z_i$ is the multiplicity of $v_i$ in $m$, counting occurrences of
  $i(v_i)$ negatively.

  If $e=m_1+\ldots +m_k$ is a term of type $E$, then $e$ is
  \emph{$N$-free} if each $m_i$ has $N$-vector $\seq{0,\ldots, 0}$.  

  When $t_0$ is any base term in normal form, then $\Ind_N(t_0)$ is
  the set of all vectors $\vec{z}$ such that $\vec{z}$ is the
  $N$-vector of $m$, where $m$ is a maximal-monomial subterm of
  $t_0$.  
    
  If $t=t_1\cons t_2$, then $\Ind_N(t)=\Ind_N(t_1)\cup\Ind_N(t_2)$.  

  If $t=\enc{t_1}{t_2}$, then $\Ind_N(t)=\Ind_N(t_1)$.  
\end{definition}
Thus, $\Ind_N(t)$ for a compound term $t$ is the union $\bigcup
\Ind_N(t_0)$, taking the union over all the base terms $t_0$ that are
ingredients of $t$, i.e.~$t_0\ingredient t$.

\smallskip\noindent\emph{Example:}  For $N = \seq{x, y}$, if $t$ is 
\[
  g^{x \; i(y)} \cdot g^{zx[g^x]} \cdot g^{xx[g^y]} ,
\]
then $\Ind_N(t) = \{ \langle 1,-1 \rangle, \langle 1,0 \rangle,
\langle 2,0 \rangle \}$.  The boxed values do not contribute to the
indicators.  

Since we often encounter indicators with no non-zero entries, we will
write $\indZero$ for this indicator $\seq{0,\ldots,0}$.  We will also
write $\indX$, $\indA$, etc., for the indicator that has a single $1$
in the position for that parameter, e.g.~for $\seq{0,0,1,0}$ and
$\seq{1,0,0,0}$ if the parameters are $a,b,x,y$ in that order.  Every
message sent in {\iadh} protocols is of this form:  All the indicator
weight is concentrated in at most a single 1.  A message $g^c$ with
$c\not\in\non$ has indicator $\indZero$.

Since the union $\bigcup \Ind_N(t_0)$ is over all the
\emph{ingredients} $t_0\ingredient t$, it does not include values used
only as keys in encryptions.  Thus, a protocol may compute a secret
such as $g^{xy}$ with an indicator $\seq{0,0,1,1}=\indX+\indY$, and
then applies a key derivation function, obtaining $k=\kdf(g^{xy})$.
If participants then send encrypted messages $\enc{t_1}k$, then it has
not transmitted a message with indicator $\indX+\indY$.

\begin{definition}
  Let $T = \set{t_1, \dots, t_k}$ be a set of terms.  The set
  $\Gen{T}$ \emph{generated} by $T$ is the least set of terms
  including $T$ and closed under the term-forming operations.
\end{definition}
\noindent{}The term-forming operations cannot cancel to reveal a
$v_i\in N$: 

\begin{theorem} \label{indicator} Suppose $T$ is a collection of
    terms such that every $e \in T$ of sort $E$ is $N$-free.  Then 
    \begin{enumerate}
       \item 
        every $e \in \Gen{T}$ of sort $E$ is $N$-free, and 
       \item 
        if $u \in \Gen{T}$ is of sort $G$
        and $\zs \in \Ind(u)$ then for some $t \in T$, \;
        $\zs \in \Ind(t)$.
    \end{enumerate}
\end{theorem}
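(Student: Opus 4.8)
The plan is to prove (1) and (2) together by structural induction on the way a term of $\Gen{T}$ is assembled from $T$ by the signature operations. Since $\to_{\dhthy}$ is confluent and terminating modulo AC (\refthm{thm:sn+cr}), the normal form of $f(t_1,\dots,t_n)$ equals that of $f$ applied to the normal forms of the $t_i$; so at each step I may assume the immediate subterms are already normal, apply one operation, renormalize, and read off indicators from the structural description of normal forms established above (every $E$-normal form is a sum of irreducible monomials; every $G$-normal form is a product of factors $v,\ginv(v),v^{e'},\ginv(v^{e'})$, whose exponents $e'$ are precisely its maximal monomials). The base case $t\in T$ is immediate: (1) is the hypothesis on $T$, and for (2) the witness is $t$ itself.

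The workhorse is the fact that the $N$-vector is additive under multiplication: the $N$-vector of the normal form of a product of two monomials is the sum of their $N$-vectors, and cancelling an $\einv(v_i)$ against a $v_i$ deletes one $+1$ and one $-1$ in coordinate $i$, leaving the sum unchanged. Claim (1) then falls out for each $E$-forming operation: $\eadd$ merely pools the monomials of its arguments; $\emult$ and $\nzmult$ add $N$-vectors, and $\indZero+\indZero=\indZero$; $\einv$ negates the $N$-vector, and $-\indZero=\indZero$; $\eneg$ leaves it fixed; and the box operator $\gtoe{\cdot}$, together with $\eone$ and $\ezero$, produces only the $N$-vector $\indZero$ (a box term is never one of the $\NZE$-variables $v_i$). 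Hence every $E$-term of $\Gen{T}$ is $N$-free.

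For (2), the non-exponentiation $G$-operations can only shrink the indicator set: $\gid$ has no maximal monomial; $\ginv$ distributes over the product and toggles $\ginv$-wrappers, leaving every exponent, hence every $N$-vector, intact; and the normal form of $t_1\gop t_2$ is, up to group cancellation, the concatenation of the factors of $t_1$ and $t_2$, so $\Ind(t_1\gop t_2)\subseteq\Ind(t_1)\cup\Ind(t_2)$. In each case the induction hypothesis for $t_1,t_2$ supplies the witness in $T$. The real work---and the only place where (2) invokes (1)---is exponentiation $t^e$. With $t$ a product of factors and $e=m_1\eadd\dots\eadd m_l$ a sum of monomials, the rules $(a\gop b)^x\to a^x\gop b^x$, $a^{x\eadd y}\to a^x\gop a^y$ and $(a^x)^y\to a^{x\emult y}$ push $t^e$ to a product whose factors carry exponents $m_j$ (from bare $v,\ginv(v)$) or $e'\emult m_j$ (from $v^{e'},\ginv(v^{e'})$), with $e'$ an exponent of $t$ and $m_j$ a monomial of $e$. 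By additivity the $N$-vector of such an exponent is the $N$-vector of $e'$ plus that of $m_j$; but $e$ is $N$-free by (1), so $m_j$ contributes $\indZero$, leaving the $N$-vector of $e'$, which lies in $\Ind(t)$. The induction hypothesis for $t$ then finds the required witness. I expect this step to need the most care: one must track precisely how the normal form of $t^e$ is built and confirm that multiplying an exponent of $t$ by an $N$-free monomial neither cancels nor creates any non-zero coordinate.

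A single genuinely new indicator can surface here: bare-variable factors $v$ of $t$ contribute exponents of $N$-vector $\indZero$, so $\indZero$ may enter $\Ind(t^e)$ even when $\indZero\notin\Ind(t)$. This is harmless, since $\indZero$ records no secret exponent; it is handled either by reading (2) for $\zs\neq\indZero$, or by noting that in every intended instance $T$ contains a public group element whose indicator is $\indZero$. With this proviso the induction closes.
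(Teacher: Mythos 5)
Your proof is correct and takes essentially the same route as the paper's: induction over the term-forming operations, with the group-product case handled by cancellation of factors (so no new maximal monomials arise) and the exponentiation case $t^e$ handled by distributing over the factors of $t$ and using part (1) to conclude that multiplying an exponent $e'$ of $t$ by the $N$-free $e$ leaves its $N$-vector unchanged. Your closing caveat about $\indZero$ is in fact a point where you are more careful than the paper, whose proof asserts $\Ind(t^e)=\Ind(t)$ outright: strictly speaking only $\Ind(t^e)\subseteq\Ind(t)\cup\{\indZero\}$ holds (e.g.\ $v^{\gtoe{v}}\in\Gen{\{v\}}$ has indicator set $\{\indZero\}$ while $\Ind(v)=\emptyset$), so your repair---reading (2) for $\zs\neq\indZero$---is the right one, and it is harmless for the paper's applications, which only ever invoke the theorem for nonzero indicator vectors.
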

\vshort{ \vs-pf-msg }
\long\def \vlindicator {
\begin{proof}
  By induction on operations used to construct terms from elements of
  $T$.

  The interesting cases are when $u$ is of the form %
  $u_1 u_2$ or $t ^ e$ where $t$, $u_1$, $u_2$ and $e$ are each normal
  form terms in $\Gen{T}$.

    In the first case, then, $u$ is a product
    \[
    t_1 \gop \dots \gop t_n 
    \]
    where each factor comes from $u_1$ or $u_2$.  Since each $t_i$ is
    of the form $ v, \ginv(v), v^e, \text{ or } \ginv(v^e)$, the
    normal form of this term results by canceling any factors (from
    different $u_i$) that are inverses of each other.  %
    No new $E$-subterms are created, so no new indicator vectors are
    created, and our assertion follows.

    The other case is when $u$ is $t ^ e$.  Note that since $e$ is in
    $\Gen{T}$ we know that $e$ is $N$-free.  It suffices to show that %
    $\Ind(t^e) = \Ind(t)$.  %
    Letting $t$ be in normal form, $t^e$ is
    \[
    (t_1)^e \gop \dots \gop (t_n)^e 
    \]
    Each $(t_i)^e$ is of the form
    \[
    v^e \quad (\einv(v))^e \quad (v^{e'})^e \quad (\ginv(v^{e'}))^e
    \]
    The first two terms are $N$-free.    The second kind of term reduces to %
    $v^{e \emult e'}$, and the indicator set for this term is
    precisely %
    $\Ind(e)$ since $e'$ is $N$-free.   The last term reduces to 
    $\ginv(v^{e' \emult e})$  and we can argue just as in the previous case.

    The cases for concatenation and encryption are immediate from the
    induction hypothesis, since they simply propagate indicator
    vectors.
\end{proof}
} 
\vlong{ \vlindicator}

\paragraph{An Adversary Limitation.}
We justify now our central technique, that the adversary cannot
generate messages with new indicators, using variables of sort $E$
that are \emph{non-originating before} node $n$.

\begin{definition}
  A basic value $a$ is \emph{non-originating before} $n$ in bundle
  $\bnd$ if, for all $n'\preceq_{\bnd} n$, $a$ does not originate at
  $n'$.  

  The \emph{indicator basis} $\operatorname{IB}_{\bnd}(n)$ of node
  $n$, where $n$ is a node of $\bnd$, is the set:
  $$\{
  a \mbox{ of sort }E\colon a \mbox{ is non-originating before }n
  \}.$$
  We assume $\operatorname{IB}_{\bnd}(n)$ is ordered in some
  conventional way.
\end{definition}

\begin{theorem} \label{thm:indicators:preserved}\label{level-0} 
  Let $W$ be an adversary web of $\bnd$, and let $n$ be a transmission
  node of $W$, and let $N$ be a sequence of elements drawn from 
  $\operatorname{IB}_{\bnd}(n)$.  If $v\in\Ind_N(\msg(n))$, then there is a
  regular transmission node $n'\prec_{\bnd}n$ in $\bnd$ such that
  $v\in\Ind_N(\msg(n'))$.
\end{theorem}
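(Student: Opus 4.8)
The plan is to argue by well-founded induction on the transmission node $n$ with respect to the bundle ordering $\preceq_{\bnd}$, which is legitimate because $\bnd$ is a finite dag and $\preceq_{\bnd}$ is well-founded. I will in fact prove a slightly strengthened statement simultaneously: alongside the conclusion of the theorem I carry the \emph{side condition} that every sort-$E$ message transmitted at a node $\preceq_{\bnd}n$ is $N$-free. Since the indicator $\indZero$ records the absence of any secret exponent and is never a capability the adversary must import, I may restrict attention to $v\neq\indZero$, where the claim is substantive.

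A transmission node is the final node of the adversary strand $S$ carrying it, so by \refdef{def:adv} the message $\msg(n)$ arises from the messages $\msg(r_1),\dots,\msg(r_j)$ on the reception nodes of $S$ by exactly one of the five operations. Each $r_i$ precedes $n$ on $S$ and, by the bundle condition, is matched by a unique transmission node $m_i\in\bnd$ with $\msg(m_i)=\msg(r_i)$ and $m_i\prec_{\bnd}r_i\prec_{\bnd}n$, where $m_i$ is regular or adversarial. The crux is the inclusion $\Ind_N(\msg(n))\subseteq\bigcup_i\Ind_N(\msg(r_i))$, established by cases on $S$. For a destructor strand $\msg(n)$ is a concatenation-component of some $\msg(r_i)$, hence $\msg(n)\ingredient\msg(r_i)$ and its indicators are contained in those of $\msg(r_i)$ by the union clause of \refdef{def:indicator}; for a decryption strand $\msg(n)=t$ with $\enc{t}{K}=\msg(r_i)$ and $\Ind_N(\enc{t}{K})=\Ind_N(t)$ by definition; for an encryption strand $\Ind_N(\enc{t}{K})=\Ind_N(t)\subseteq\Ind_N(t)\cup\Ind_N(K)$. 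The substantive case is a constructor strand, where $\msg(n)\in\Gen{\{\msg(r_1),\dots,\msg(r_j)\}}$: here part~(2) of \refthm{indicator} gives exactly that every indicator of the $G$-sorted $\msg(n)$ already occurs in some $\msg(r_i)$, provided its hypothesis that each $E$-sorted input is $N$-free holds, which is the side condition.

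Given $v\in\Ind_N(\msg(n))$ with $v\neq\indZero$, the inclusion yields $v\in\Ind_N(\msg(m_i))$ for a feeder $m_i\prec_{\bnd}n$. Because $m_i\prec_{\bnd}n$, any value non-originating before $n$ is non-originating before $m_i$, so $\operatorname{IB}_{\bnd}(n)\subseteq\operatorname{IB}_{\bnd}(m_i)$ and $N$ remains an admissible basis at $m_i$. If $m_i$ is regular, take $n'=m_i$. If $m_i$ is an adversary transmission, the induction hypothesis supplies a regular $n'\prec_{\bnd}m_i$ with $v\in\Ind_N(\msg(n'))$, and transitivity of $\prec_{\bnd}$ gives $n'\prec_{\bnd}n$. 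The side condition propagates inside the same induction: emitted and destructed sort-$E$ values are $N$-free, and by part~(1) of \refthm{indicator} any sort-$E$ value the adversary constructs from $N$-free inputs stays $N$-free, so if all feeders respect the side condition then so does $n$.

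The real weight sits in two places. The first is the emission base case $\langle +a\rangle$: having no inputs, the adversary creates $a$ from nothing, and since every $v_i\in N$ is non-originating before $n$ (hence received at no earlier node) $a$ can mention no $v_i$ without the adversary silently obtaining a value it never handled, contrary to the adversary model; thus $a$ is $N$-free and $\Ind_N(a)\subseteq\{\indZero\}$, so an emission introduces no nonzero indicator and the base case is vacuous for $v\neq\indZero$. The second, and the main obstacle, is establishing the side condition robustly, since \refthm{indicator} fails without it: I must rule out the adversary ever transmitting an exponent carrying a non-originating variable. This is precisely what the strengthened induction delivers, using that the regular roles transmit only group elements, certificates, and their concatenations and encryptions, never a bare exponent, so that they contribute no non-$N$-free sort-$E$ message and the adversary cannot manufacture one either. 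Verifying this interplay, together with the normal-form bookkeeping behind part~(2) of \refthm{indicator}, is where the argument must be most careful.
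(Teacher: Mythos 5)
Your proposal rests on the same two pillars as the paper's proof---\refthm{indicator} supplies the algebraic fact that generation cannot create indicators, and non-origination supplies the freshness facts---so this is less a different method than a different organization. The paper argues once over the whole web: it sets $T_R$ (messages received on $W$) and $T_M$ (emitted basic values), asserts $\msg(n)\in\Gen{T_R\cup T_M}$, argues $T_R$ is $N$-free because received messages must have originated, and concludes that every indicator of $\msg(n)$ comes from some member of $T_R$. You instead run a well-founded induction on $\prec_{\bnd}$ over adversary transmission nodes, with a case split over the five strand forms. Your reorganization buys something real: the paper stops at ``the indicator comes from a message received on $W$,'' and the matching transmission node for that message need not be regular; the descent to a \emph{regular} source is left implicit there, and your step ``feeder regular: done; feeder adversarial: induction hypothesis'' is exactly that missing descent, including the correct observation that $\operatorname{IB}_{\bnd}(n)\subseteq\operatorname{IB}_{\bnd}(m_i)$ so the basis $N$ stays admissible. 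Your restriction to $v\neq\indZero$ also matches the scope in which \refthm{indicator}, and hence the paper's own argument, actually functions.

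There is, however, a genuine hole in how you close the induction, and it sits exactly where you said the weight sits. Your side condition quantifies over sort-$E$ values that are themselves transmitted messages, but the destructor and decryption cases emit sort-$E$ values extracted from \emph{inside} a received concatenation or encryption. Such a value is not a previously transmitted sort-$E$ message, so the induction hypothesis says nothing about it, and your assertion that ``destructed sort-$E$ values are $N$-free'' is unsupported. The main claim at earlier nodes does not rescue it either, because indicators are blind to the monomial structure of an exponent sitting at a concatenation position: for instance $\Ind_N(v_1\emult y)=\emptyset$, so a poisoned exponent can ride inside a transmitted concatenation invisibly to both halves of your induction hypothesis; and once the adversary holds a non-$N$-free exponent $e$, it mints a new indicator instantly by forming $g^e$. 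The repair is local: state the invariant over all sort-$E$ \emph{ingredients} ($\ingredient$) of messages transmitted at nodes $\preceq_{\bnd}n$. Then a destructed or decrypted exponent is an ingredient of the earlier transmission and is covered by the hypothesis; the {\iadh} roles transmit messages with no sort-$E$ ingredients at all (a slightly stronger fact than ``never a bare exponent,'' which is what you invoke); and constructor outputs preserve the property by part (1) of \refthm{indicator} together with a routine check that a sort-$E$ ingredient of a generated term is either an ingredient of some input or is itself generated. With that strengthening your skeleton goes through. For what it is worth, the paper's proof is loose at the mirror-image point---its claim $\msg(n)\in\Gen{T}$ fails when the final strand of $W$ is a destructor or decryption---but judged on its own terms, your propagation step as written does not close.
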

\vshort{ \vs-pf-msg }
\long\def \vlindicatorspreserved{ %
\begin{proof}
  Let $T_R$ be the set of messages received on $W$, and let %
  $T_M$ be the set of basic values emitted by $W$; set %
  $T = T_R \cup T_M$. %
  The message $u=\msg(n)$ is in $\Gen{T}$.  %
  The set $T_R$ is $N$-free, as a consequence of the fact that every
  message received on $W$ must have originated, and $T_M$ is $N$-free
  since it is a set of basic values not in $N$ (indeed, each term in
  $T_M$ has an empty indicator set). %
  So Theorem~\ref{indicator} applies.  Since each $t \in T_M$ has
  empty indicator set we conclude that every indicator in $u$ comes
  from a message in $T_R$, as desired.
\end{proof}
} % end vlong
\vlong{ \vlindicatorspreserved }

In {\iadh} protocols, every message from regular participants has
indicators in $\{\indZero\}, \{\indA\}, \{\indB\}, \{\indX\},
\{\indY\}$, etc.  Since the adversary can never transmit a message
with any indicators he has not received, no messages with other
indicators will ever be sent or received.  Messages encrypted using
keys derived from Diffie-Hellman values preserve this property.  Using
Thm.~\ref{thm:indicators:preserved} and
Assumption~\ref{assumption:regular:cert},
\ref{assumption:regular:cert:unique:P}:
\begin{corollary}\label{cor:ca} 
  Let $\bnd$ be a bundle for an {\iadh} protocol using certificates
  $\tagged{\mathsf{cert}\; g^a\cons P}{\CA}$ and
  $\tagged{\mathsf{cert}\; g^\alpha\cons P'}{\CA}$.  If $a\in\non_{\bnd}$
  and $\Ind_{\seq{a}}(\alpha)\not=\indZero$, then $\alpha=e$ and $P=P'$.
\end{corollary}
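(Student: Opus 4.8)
The plan is to establish the two conclusions in turn: first that the hypotheses force $g^\alpha=g^a$ (so the witness exponent of Assumption~\ref{assumption:regular:cert} may be taken to be $a$, giving $\alpha=e=a$), and then that $P=P'$ follows from the unique origination of $g^a$. Both displayed messages are signed by $\CA$, and since $\sk(\CA)\in\non_{\bnd}$ the adversary can never synthesise such a signature; hence each certificate is emitted by a genuine regular $\CA$ strand, so that clause~2 of Assumption~\ref{assumption:regular:cert} governs each. Applying clause~\ref{assumption:regular:cert:unique:P} to the $g^\alpha$ certificate yields a dichotomy: either (i) $g^\alpha=g^{a'}$ with $g^{a'}\in\unique_{\bnd}$ originating on a regular certificate-request strand, or (ii) there is a node $n\in\bnd$ with $\msg(n)=\alpha$. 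Case (i) is quick: by clause~1 of the same assumption the requester's parameter $a'$ is a single $\NZE$-variable, so $\alpha=a'$, whose lone monomial has $\seq{a}$-vector $\seq{1}$ exactly when $a'=a$; since $\Ind_{\seq{a}}(\alpha)\neq\indZero$ we conclude $a'=a$, giving $g^\alpha=g^a$.

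The heart of the argument is excluding case (ii), and this is where I expect the real work to lie. The exposed message $\alpha$ is of sort $E$, and inspection of the initiator, responder, and certificate roles (Figures~\ref{fig:iadh} and~\ref{fig:cert}) shows that no \emph{regular} transmission ever carries a bare exponent---every regular output is a group element, a name, or a certificate. Hence the transmitter supplying $\msg(n)=\alpha$ must lie on an adversary web $W$, so $\alpha\in\Gen{T}$ with $T=T_R\cup T_M$ the received messages and emitted basic values of $W$, just as in the proof of \refthm{thm:indicators:preserved}. Taking $N=\seq{a}$, the emitted values $T_M$ omit $a$, since emitting it would originate $a$ and contradict $a\in\non_{\bnd}$; and the sort-$E$ members of $T_R$ are $N$-free, because each was transmitted earlier and, by well-founded induction along $\preceq_{\bnd}$ (equivalently, by \refthm{thm:indicators:preserved}), no regular transmission carries a non-$N$-free exponent. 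The first clause of \refthm{indicator} then forces every sort-$E$ element of $\Gen{T}$, and in particular $\alpha$, to be $N$-free, i.e.\ $\Ind_{\seq{a}}(\alpha)=\indZero$, contradicting the hypothesis. The asymmetry that makes this work is that the adversary may well possess group elements of nonzero $a$-indicator---$g^a$ itself is one---yet the $N$-freeness half of \refthm{indicator} forbids it ever to expose the bare exponent; this is the symbolic shadow of Diffie--Hellman hardness.

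With case (ii) excluded we are left with case (i), so $\alpha=a$ and the two certificates read $\tagged{\mathsf{cert}\; g^a\cons P}{\CA}$ and $\tagged{\mathsf{cert}\; g^a\cons P'}{\CA}$. Because $a\in\non_{\bnd}$, running the same elimination of case (ii) on the $g^a$ certificate shows that it too issues from a regular certificate-request strand on which $g^a$ originates; as $g^a\in\unique_{\bnd}$ there is exactly one such strand, bound to a single principal name. Both certificates therefore trace $g^a$ back to this one strand, so $P=P'$---precisely the uniqueness observation recorded just after Assumption~\ref{assumption:regular:cert}.
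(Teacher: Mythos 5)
Your proof is correct and takes exactly the route the paper intends for this corollary, which it states without explicit proof as a consequence of \refthm{thm:indicators:preserved} and Assumption~\ref{assumption:regular:cert}, Clause~\ref{assumption:regular:cert:unique:P}: invoke the certification dichotomy, eliminate the exposure branch via the indicator machinery, and use unique origination of $g^a$ to force $P=P'$. Your one refinement---grounding the exposure case in the sort-$E$ ($N$-freeness) clause of \refthm{indicator} plus a well-founded induction, rather than in \refthm{thm:indicators:preserved} verbatim, whose statement concerns indicators of transmitted messages rather than a bare exposed exponent---is the correct reading of what the paper's citation must mean, not a departure from it.
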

%

%   \begin{corollary} \label{level-0}
%       Let $N$, $B$, and $W$ be as in \refthm{thm:indicators:preserved}.
%       If \zs is a level-0 indicator in a message transmitted on $W$ then 
%       \zs is a level-0 indicator of some
%       message transmitted by a regular strand.
%   \end{corollary}
%   Principle~\ref{princ:adv:limit}  from Section~\ref{sec:um:protocol} is
%   an immediate consequence of the theorem. 

\section{Analyzing IADH Protocols} 
\label{sec:analysis}

We now embark on analyzing {\iadh} protocols, focusing on $\um$,
$\mqv$, and $\cremers$.  We aim to illustrate the way that our
algebraic tools---normal forms and indicators---work together with the
more familiar tools of symbolic protocol analysis.  These are notions
such as causal well-foundedness that are basic to strand spaces.  We
start with properties for which the indicators bear the main burden.
In Section~\ref{sec:implicit} we turn to implicit authentication.  It
requires subtler proofs, which are more sensitive to the details of
the key computation.  
%   
%   There has also been some controversy about the
%   definitions.

We believe that our presentation of security goals is a contribution
in itself.  They appear to us to be clear distillations of the
structural elements in the goals, which have often appeared in more
cluttered forms---particularly obscured by more operational ideas---in
some of the literature.

We start though with a useful lemma about the session keys produced by
regular strands, saying that they always reflect the parameters of
that strand.
\begin{lemma}\label{lemma:session:keys} 
  Let protocol $\Pi$ be an {\iadh} protocol, but possibly without
  Assumption~\ref{assumption:regular:init:resp},
  Clauses~\ref{assumption:freshness:init}
  and~\ref{assumption:freshness:resp}.

  Suppose $\bnd$ is a $\Pi$-bundle, and $s$ is a $\Pi$ initiator or
  responder strand with long term secret $a$ and ephemeral value $x$,
  succeeding with key $K$:
  \begin{description}
    \item[$\Pi$ is $\um$:] If $x\in\non_{\bnd}$, then for
    $K=H({Y_B}^a\cons {R_B}^x)$, we have $\indX\in\Ind_{\seq{x}}(K)$.
    If $a\in\non_{\bnd}$, then $\indA\in\Ind_{\seq{a}}(K)$.
    \item[$\Pi$ is $\mqv$:] If $x\in\non_{\bnd}$, then for 
    $K=({R_B}\cdot{Y_B}^{\gtoe{{R_B}}})^{s_A}$, we have 
    $\indX\in\Ind_{\seq{x}}(K)$.  If $a\in\non_{\bnd}$, then
    $\indA\in\Ind_{\seq{a}}(K)$.
    \item[$\Pi$ is $\cremers$:] If $x\in\non_{\bnd}$, then for
    $K=({R_B}\cdot{Y_B})^{x+a}$, we have $\indX\in\Ind_{\seq{x}}(K)$.
    If $a\in\non_{\bnd}$, then $\indA\in\Ind_{\seq{a}}(K)$.
  \end{description}
%     \begin{enumerate}
%       \item If $x\in\non_{\bnd}$, then $x$ cannot cancel out of the
%       normal form of $K$ in Eqns.~\ref{eq:dh:unif:model:key:comp},
%       \ref{eq:mqv:key:comp}, \ref{eq:dh:cremers:feltz:key:comp};
%       i.e.~$\indX\in\Ind_{\seq{x}}(K)$. 
%       \item If $a\in\non_{\bnd}$, then $a$ cannot cancel out of the
%       normal form of $K$ in Eqns.~\ref{eq:dh:unif:model:key:comp},
%       \ref{eq:mqv:key:comp}, \ref{eq:dh:cremers:feltz:key:comp};
%       i.e.~$\indA\in\Ind_{\seq{a}}(K)$.
%     \end{enumerate}
%  
\end{lemma}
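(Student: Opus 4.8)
The plan is to reduce each of the six assertions (three protocols, and for each the $x$-claim and the $a$-claim) to a direct computation of $\Ind_N$ on the normal form of the key, taking $N=\seq{x}$ or $N=\seq{a}$ respectively. Throughout I would lean on three structural facts from \refdef{def:indicator}: indicators distribute over concatenation, they ignore encryption keys, and boxed subterms $\gtoe{\cdot}$ contribute nothing (as noted after the running example). Modelling the hash in $\um$ as an encryption gives $\Ind_N(H(t))=\Ind_N(t)$, so in every case it suffices to exhibit a single irreducible monomial, occurring in the exponent of some base factor of an ingredient of $K$, whose $N$-vector is exactly $\indX$ (resp.\ $\indA$).

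First I would expand the key computation, pushing exponentiation through products using the module axioms $(a\gop b)^x = a^x\gop b^x$ and $(a^x)^y=a^{x\emult y}$, and then normalize. For the $a$-claim the witness is always the ``static'' cross term: in $\um$ it is $Y_B^a$, which normalizes to $g^{ba}$ once $Y_B=g^b$; in $\cremers$ it is the factor $Y_B^a$ of $(R_B\gop Y_B)^{x+a}$; and in $\mqv$ it is $Y_B^{a\gtoe{g^x}\gtoe{R_B}}$, whose exponent monomial carries $a$ once together with two boxed (hence indicator-free) factors. For the $x$-claim the witness comes from the ephemeral term: $R_B^x$ in $\um$, the factor $Y_B^{x\gtoe{R_B}}$ in $\mqv$, and $Y_B^x$ in $\cremers$. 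In each case the displayed monomial carries the target secret with multiplicity one, so its $N$-vector is $\indX$ or $\indA$ as required, and it survives normalization because no cancellation can merge it with the remaining factors.

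The main obstacle is not the bookkeeping but justifying that the witness monomial carries the secret with multiplicity \emph{exactly} one---neither zero (the secret absent) nor inflated above one by a received value that already carries the secret in its own exponent. This is where the hypotheses do the real work. Since $x\in\non_\bnd$ (resp.\ $a\in\non_\bnd$) and the secret is a genuine generator by the non-compound-parameter clause of Assumption~\ref{assumption:regular:init:resp}, \refthm{thm:indicators:preserved} together with \refthm{indicator} bounds how the adversary can have assembled the received material: it may raise $g^x$ or $g^a=Y_A$ only to secret-free exponents, so no received exponent monomial can carry the strand's own secret more than once. For the $a$-claim I would additionally invoke \refcor{cor:ca}: when $a\in\non_\bnd$ the certificate binding of $g^a$ is unique, forcing $Y_B=g^b$ with $b$ distinct from $a$, so the static witness term is clean. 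The genuinely delicate sub-case is the degenerate self-pairing in which a received ephemeral coincides with $g^x$; this is harmless for $\mqv$ and $\cremers$, since the boxed $\gtoe{R_B}$ insulates their witness and the certified factor $Y_B^x$ still supplies a clean multiplicity-one occurrence, but for $\um$ the sole $x$-bearing term $R_B^x$ has no such insulation and the degenerate instantiation must be excluded separately---precisely the algebraic fragility that later distinguishes $\um$ from the other two.
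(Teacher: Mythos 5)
Your overall strategy---exhibit a witness monomial in the normal form of $K$ carrying the secret exactly once, and argue it cannot be destroyed---is essentially the paper's argument read in the contrapositive, and your treatment of what the strand can \emph{receive} (Theorem~\ref{thm:indicators:preserved} plus Corollary~\ref{cor:ca} bounding the indicators of $R_B$ and $Y_B$) matches the paper's handling of $\um$. But there is a genuine gap: you never consider the possibility that the strand's \emph{own} secret-bearing summands cancel \emph{each other}, and correspondingly you never use the hypothesis that $s$ \emph{succeeds} with key $K$---which is exactly the hypothesis the paper's proof leans on for $\mqv$ and $\cremers$. Concretely, for $\cremers$ the adversary can extract $Y_B=g^\beta$ from the (public) certificate and deliver $R_B=\ginv(Y_B)$; both exponents are secret-free, so all of your constraints on received material are satisfied, yet $(R_B\gop Y_B)^{x+a}$ normalizes to $\gid$ and your witness $g^{a\beta}$ is cancelled by the contribution $g^{-a\beta}$ coming from $R_B$. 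Thus ``no cancellation can merge it with the remaining factors'' is simply false; the lemma survives only because such a strand does not succeed ($K=\gid$ fails the non-identity check in the role specification). The same mode threatens $\mqv$: the two $a$-bearing summands $a\eta\gtoe{g^x}$ and $a\beta\gtoe{g^x}\gtoe{g^\eta}$ cancel one another precisely when $\eta=-\beta\gtoe{g^\eta}$, and this equation has genuine solutions the adversary can realize and have certified (e.g.\ $\eta = -1$, $\beta=\einv(\gtoe{g^{-1}})$), so it cannot be dismissed by an occurs-check either. The paper's proof handles it by observing that this condition forces the $x$-summands to cancel as well, so the whole exponent is $0$ and $K=1$, contradicting success. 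Your proposal neither identifies this cancellation mode nor invokes the success hypothesis, so the $\mqv$ and $\cremers$ cases are unproven.

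Two smaller points. Your appeal to Corollary~\ref{cor:ca} overstates what it gives: it does not force $Y_B=g^b$ with $b$ distinct from $a$; it permits $Y_B=g^a$ when the certificate names the same principal, and in that self-communication case the $\um$ key is $H(g^{a\cdot a}\cons\cdots)$, whose $\seq{a}$-indicator is $\seq{2}$ rather than $\indA$. This blind spot is shared by the paper's own proof (which considers only a received indicator of $\seq{-1}$), so it does not count against you relative to the paper. Finally, your observation that the reflected ephemeral $R_B=g^x$ genuinely defeats the $x$-claim for $\um$ (indicator $\seq{2}$) while $\mqv$ and $\cremers$ are insulated by the boxes and the certified factor is a sharper reading of the situation than the paper offers---but ``must be excluded separately'' is a flag, not a proof, and nothing in the lemma's stated hypotheses excludes that instantiation.
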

\vshort{ \vs-pf-msg }
\long\def \lemmasessionkeysproof{ %
\begin{proof}
  For $\um$, $a$ or $x$ can cancel only if $s$ receives a value $R_B$
  or $Y_b$ with indicator $\seq{-1}$ for $a$ or $x$, resp.  Hence
  there is some earlier node $m$ on which some message with indicator
  $\seq{-1}$ was transmitted, and let $m_0$ be a minimal such node.

  However, by the definitions, $m_0$ is not a regular node, which
  transmit only values with non-negative indicators.  By
  Thm.~\ref{thm:indicators:preserved}, $m_0$ cannot be an adversary
  node either, when $a$ or $x\in\non_{\bnd}$ resp.  

  For $\mqv$, let $R_B=g^\eta$, where $\eta$ is a possibly compound
  value the adversary may have engineered, and let $Y_B=g^\beta$.  Now
  $K=g^{x\eta}\cdot g^{a\eta\gtoe{g^x}} \cdot g^{x\beta\gtoe{g^\eta}}
  \cdot g^{a\beta\gtoe{g^x}\gtoe{g^\eta}}$.  $K$ may be $a$-free
  because $g^{a\eta\gtoe{g^x}}$ and
  $g^{a\beta\gtoe{g^x}\gtoe{g^\eta}}$ cancel.  This occurs when
  ${a\eta\gtoe{g^x}}=-{a\beta\gtoe{g^x}\gtoe{g^\eta}}$,
  i.e.~$\eta=-\beta\gtoe{g^\eta}$.  However, in this case $x$ also
  cancels out, as ${x\eta}=-{x\beta\gtoe{g^\eta}}$.  So the exponent
  is 0 and $K=1$, contradicting the assumption that strand $s$
  delivers a successful key.  

  $\mqv$ could also cancel if $R_B$ or $Y_b$ has indicator $\seq{-1}$,
  but this is excluded by the same argument as with $\um$.

  The argument for $\cremers$ is the same as for $\mqv$.  
\end{proof}
} % end long def 
\vlong{\lemmasessionkeysproof}

\paragraph{Key Secrecy and Impersonation.}
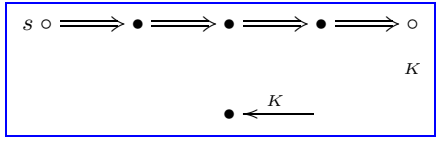
\begin{figure}[tb]
  \centering\frameit{\xymatrix{
      s \; \circ\ar@{=>}[r] & 
    \bullet\ar@{=>}[r] & \bullet\ar@{=>}[r] &
    \bullet\ar@{=>}[r] & \circ\ar@{}[d]|{K}\\
    & & \bullet & \null\ar[l]_{K} &
  }}  
  \caption{Key secrecy:  This diagram cannot occur}
  \label{fig:key:secrecy}
\end{figure}
In Fig.~\ref{fig:key:secrecy} we present the core idea of key secrecy.
Suppose that the upper strand $s$ is an initiator or responder run
that ends by computing session key $K$.  Moreover, suppose that a
listener strand is present, which receives $K$.  Then, if the long
term secrets $a,b\in\non$, this diagram cannot be completed to a
bundle $\bnd$.  This holds even without the freshness assumptions on
regular initiator and responder strands.
\begin{goal}[Key Secrecy]
  Suppose that $\bnd$ is a $\Pi$-bundle with $a,b\in\non_{\bnd}$, and
  strand $s$ is a $\Pi$ initiator or responder strand with long term
  secret parameter $a$ and long term peer public value $Y=g^b$.  Then
  $\bnd$ does not contain a listener $\bullet\leftarrow K$.
\end{goal}

\begin{theorem} \label{thm:key-secrecy}
  Let protocol $\Pi$ be an {\iadh} protocol using any of the key
  computation methods in Eqns.~\ref{eq:dh:unif:model:key:comp},
  \ref{eq:mqv:key:comp}, \ref{eq:dh:cremers:feltz:key:comp}, but
  possibly without Assumption~\ref{assumption:regular:init:resp},
  Clauses~\ref{assumption:freshness:init}
  and~\ref{assumption:freshness:resp}.  Then $\Pi$ achieves the
  security goal of key secrecy.  
\end{theorem}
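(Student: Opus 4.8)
The plan is to argue by contradiction, using the \emph{weight} of an indicator as the obstruction. Assume $\bnd$ is a $\Pi$-bundle with $a,b\in\non_{\bnd}$, that $s$ is an initiator or responder with own secret $a$ and peer value $Y_B=g^b$ succeeding with key $K$, and suppose for contradiction that $\bnd$ contains a listener $\ell$ with $\msg(\ell)=K$. By the bundle conditions there is a transmission node $n_1$ with $\msg(n_1)=K$ feeding $\ell$. First I would observe that no regular {\iadh} node transmits $K$: the computed key is written only into the local state on the final neutral node, and wherever $K$ is later used as an encryption key it is not an ingredient of any transmitted message and so contributes no indicators. Hence $n_1$ must be a transmission node of some adversary web $W$.

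Next I would take the indicator basis $N=\seq{a,b}$; both members lie in $\operatorname{IB}_{\bnd}(n_1)$ since $a,b$ are non-originating throughout $\bnd$. The heart of the argument is the claim that $K$ carries an indicator of weight two: the monomial $ab$ (or the entry $2$ in position $a$, should $a=b$) occurs in an exponent of $K$, so that $\indA+\indB\in\Ind_{N}(K)$. Granting this, \refthm{thm:indicators:preserved} produces a regular transmission node $n'\prec_{\bnd}n_1$ with $\indA+\indB\in\Ind_{N}(\msg(n'))$. But every regular {\iadh} transmission has indicators of weight at most one relative to $N$: ephemerals $g^x,g^y$ give $\indZero$, and the only messages exposing $a$ or $b$ are the certificate bodies, which reveal $g^a$ and $g^b$ \emph{separately}, with indicators $\indA$ and $\indB$. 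No regular node ever places two uncompromised secrets in a single exponent, so $\indA+\indB$ is unreachable, and this contradiction establishes the goal. I note that this uses only $a,b\in\non_{\bnd}$, so the argument goes through without Clauses~\ref{assumption:freshness:init} and~\ref{assumption:freshness:resp}.

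It remains to justify the weight-two claim, which is where the three key computations diverge; the reasoning parallels and extends \reflem{lemma:session:keys}, now tracking the joint $a,b$-content rather than a single secret. For {\um}, $K=H({Y_B}^a\cons {R_B}^x)$ with ${Y_B}^a=g^{ab}$ appearing as a separate concatenand inside the hash; since $\Ind$ takes the union over concatenation, this factor contributes $\indA+\indB$ and cannot be cancelled by the ${R_B}^x$ component, independently of the received $R_B$. For {\mqv}, writing $R_B=g^\eta$, the key is $g^{(\eta+b\gtoe{g^\eta})(x+a\gtoe{g^x})}$, whose expansion contains the monomial $ab\,\gtoe{g^x}\gtoe{g^\eta}$ of indicator $\indA+\indB$; the only competing $\indA+\indB$ monomials would pair $a\gtoe{g^x}$ with a $b$-monomial of $\eta$, but the surviving term carries the box $\gtoe{g^\eta}$, a term strictly larger than any monomial occurring in $\eta$, so it cannot be matched and cancelled. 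For $\cremers$ the ephemeral arrives signed, $\tagged{R_B}{B}$, so under the standing assumption that the peer's long-term material is uncompromised the received $R_B$ is a genuine ephemeral $g^y$ with $\Ind_{N}(R_B)=\{\indZero\}$; then $K=g^{(y+b)(x+a)}$ contributes the uncancellable $ab$.

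The hard part will be exactly this weight-two claim, i.e.\ ruling out \emph{adversary-engineered cancellation} of the $ab$-term. For {\um} the concatenation structure makes this automatic, and for {\mqv} it follows from the self-referential occurrence of the box; the genuinely delicate case is $\cremers$, where---absent the signature---the adversary could inject $R_B=\ginv(g^b)\cdot g^\mu$, which has only non-negative weight-one indicators and is therefore \emph{not} excluded by \refthm{thm:indicators:preserved}, driving $K$ to $g^{\mu(x+a)}$, a value reconstructible from the public $g^a,g^x$. This is precisely why $\cremers$ authenticates its ephemerals, and the proof for that protocol must invoke the signature rather than indicators alone; I would therefore be careful to state and use the uncompromised-signing-key hypothesis in the $\cremers$ case, whereas for {\um} and {\mqv} the indicator invariant suffices on its own.
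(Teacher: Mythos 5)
Your proof has the same skeleton as the paper's: suppose a listener hears $K$, compute indicators relative to the basis $\seq{a,b}$, argue that $\indA+\indB\in\Ind_{\seq{a,b}}(K)$, invoke Theorem~\ref{thm:indicators:preserved} to obtain a regular transmission carrying that indicator, and contradict the fact that regular strands only ever transmit messages with indicators $\seq{0,0}$, $\seq{1,0}$, $\seq{0,1}$. Where you differ is that the paper dispatches the weight-two claim in one line, citing the optimistic-case equations and Lemma~\ref{lemma:session:keys}, while you prove it against an adversarially chosen $R_B=g^\eta$, protocol by protocol. That extra care matters. For $\mqv$, Lemma~\ref{lemma:session:keys} tracks one secret at a time and only rules out \emph{total} cancellation of the $a$-monomials (via the $K=1$ contradiction); it does not by itself exclude a partial cancellation that kills just the monomial $ab\gtoe{g^x}\gtoe{g^\eta}$ while leaving $K\neq 1$. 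Your observation that such a cancellation would force $\eta$ to contain the monomial $-b\gtoe{g^\eta}$, contradicting well-foundedness of terms, is exactly the missing step (it is the same device the paper deploys only much later, in Theorem~\ref{thm:mqv:impl:auth}).

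More importantly, your $\cremers$ analysis exposes a genuine defect in the theorem as stated. The statement hypothesizes only the key computation of Eqn.~\ref{eq:dh:cremers:feltz:key:comp}, with no signed ephemerals, and your attack against that version is real: the adversary sends $R_B=\ginv(g^b)\gop g^{\mu}$, which it can build from the certified $Y_B$ and a value $\mu$ it knows, and which passes the group-membership check and is consistent with Theorem~\ref{thm:indicators:preserved} (all its indicators have weight at most one, and $g^b$ with indicator $\seq{0,1}$ was already transmitted inside $B$'s certificate). The strand then computes $K=g^{\mu(x+a)}$, which the adversary reconstructs as $(g^x\gop g^a)^{\mu}$ from public values and delivers to a listener, with $a,b$ still non-originating and $K\neq 1$. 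So key secrecy fails for unsigned $\cremers$: the paper's appeal to Eqn.~\ref{eq:dh:cremers:feltz:key:comp:opt} silently assumes $R_B=g^y$, which is precisely what the adversary controls. Your repair---treating $\cremers$ only in its signed variant with an uncompromised signing key, as the paper itself does for forward secrecy in Theorem~\ref{thm:cf:forward:secrecy}---is the right fix. In short: same approach as the paper, but your version is the sound one; for $\um$ and $\mqv$ your argument is complete, and for $\cremers$ you have correctly shown that the theorem needs an additional hypothesis that the paper's own proof overlooks.
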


\begin{proof}
  For sake of contradiction suppose that $\bullet\leftarrow K$ is in
  $\bnd$.  Then $K$ is transmitted on some node. %
  Computing indicators relative to the basis $\seq{a,b}$, $K$ has
  indicator $\seq{1,1}$ (by
  Eqns.~\ref{eq:dh:unif:model:key:comp:opt}--\ref{eq:dh:mqv:key:comp:opt},
  \ref{eq:dh:cremers:feltz:key:comp:opt} and
  Lemma~\ref{lemma:session:keys}). %
  By Thm.~\ref{thm:indicators:preserved}, some regular node transmits
  a message with indicator $\seq{1,1}$. %
  But this is a contradiction, since regular strands transmit only
  values with indicators $\seq{0,0}$ and, during certification,
  $\seq{1,0}$ and $\seq{0,1}$.
  % Since $a,b\in\non_{\bnd}$, we may apply
  % Thm.~\ref{thm:indicators:preserved} to indicators relative to basis
  % $\seq{a,b}$.  Observe that regular strands transmit only values with
  % indicators $\seq{0,0}$ and, during certification, $\seq{1,0}$ and
  % $\seq{0,1}$.  However, by
  % Eqns.~\ref{eq:dh:unif:model:key:comp:opt}--\ref{eq:dh:mqv:key:comp:opt},
  % \ref{eq:dh:cremers:feltz:key:comp:opt} and
  % Lemma~\ref{lemma:session:keys}, $K$ has indicator $\seq{1,1}$.
  % 
  % If $\bullet\leftarrow K$ is in $\bnd$, then $K$ is transmitted on
  % some node.  Thus, some node transmits a message with indicator
  % $\seq{1,1}$, and hence there is some $\preceq_{\bnd}$-minimal node
  % $m$ transmitting a message with indicator $\seq{1,1}$.  But $m$ can
  % be neither a regular node nor (by
  % Thm.~\ref{thm:indicators:preserved}) an adversary node.
\end{proof}
Curiously, resistance to impersonation attacks concerns the same
diagram, Fig.~\ref{fig:key:secrecy}, although with different
assumptions.  An impersonation attack is a case in which the
adversary, having compromised $A$'s long term secret $a$, uses it to
obtain a session key $K$, while causing $A$ to have a session yielding
$K$ as session key.  If $A$'s session uses $Y_B=g^b$, where $b$ is the
uncompromised long term secret of $B$, then the adversary has
succeeded in \emph{impersonating} $B$ to $A$.\footnote{By contrast, it
  is hopeless---when $a$ is compromised---to try to prevent the
  adversary from impersonating $A$ to others.}
The protocols $\mqv$ and $\cremers$ resist impersonation attacks, but
$\um$ does not.  In this result, we rely here on the freshness
assumptions on regular initiator and responder strands,
Assumption~\ref{assumption:regular:init:resp},
Clauses~\ref{assumption:freshness:init}
and~\ref{assumption:freshness:resp}.  We are in effect trading off an
assumption on a long term secret for assumptions on the ephemeral
values.
\begin{goal}[Resisting Impersonation] 
  Suppose that $\bnd$ is a $\Pi$-bundle with $b\in\non_{\bnd}$, and
  strand $s$ is a $\Pi$ initiator or responder strand using ephemeral
  secret $x$ and long term peer public value $Y=g^b$.  Then $\bnd$
  does not contain a listener $\bullet\leftarrow K$.
\end{goal}
\begin{theorem} \label{thm:resist} Let protocol $\Pi$ be an {\iadh}
  protocol using either of the two key computation methods in
  Eqns.~\ref{eq:mqv:key:comp} and \ref{eq:dh:cremers:feltz:key:comp}.
  Then $\Pi$ achieves the security goal of resisting impersonation.
\end{theorem}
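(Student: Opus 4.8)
The plan is to follow the proof of \refthm{thm:key-secrecy}, but computing indicators relative to the basis $N=\seq{x,b}$ instead of $\seq{a,b}$. Here $a$ may be compromised, but the freshness assumption (Assumption~\ref{assumption:regular:init:resp}, Clauses~\ref{assumption:freshness:init} and~\ref{assumption:freshness:resp}) gives $x\in\non_{\bnd}$, and the goal hypothesis gives $b\in\non_{\bnd}$; hence $x$ and $b$ are non-originating before the node transmitting $K$, so $N=\seq{x,b}$ is a legitimate subsequence of the indicator basis there. Suppose for contradiction that $\bnd$ contains a listener $\bullet\leftarrow K$. Then $K$ is transmitted on some node, and if I can show $\seq{1,1}\in\Ind_N(K)$, then \refthm{thm:indicators:preserved} produces a regular transmission node whose message also has indicator $\seq{1,1}$. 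But every regular transmission of an {\iadh} protocol carries $R_A=g^x$, a peer value $g^y$, or a certificate containing $g^a$ or $g^b$, and the $N$-indicators of these all lie in $\{\seq{0,0},\seq{1,0},\seq{0,1}\}$ (the session key $K$ itself is only deposited into state, never transmitted by a regular node). Since none of these is $\seq{1,1}$, the contradiction follows, and the whole argument reduces to the single claim $\seq{1,1}\in\Ind_N(K)$.

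To prove that claim I would work directly with the exponent of $K$. Writing $R_B=g^\eta$ for the received ephemeral and $Y_B=g^b$, the {\mqv} key has exponent $E=(\eta+b\gtoe{g^\eta})(x+a\gtoe{g^x})$ and the {\cremers} key has exponent $E=(\eta+b)(x+a)$. First I would bound $\eta$: since $R_B$ is received by $s$, the indicator-preservation reasoning (either $R_B$ is emitted by a regular node, or \refthm{thm:indicators:preserved} applies to its adversary web) gives $\Ind_N(g^\eta)\subseteq\{\seq{0,0},\seq{1,0},\seq{0,1}\}$, so I may write $\eta=\eta_{00}+\eta_{10}+\eta_{01}$, grouping the monomials of $\eta$ by their $N$-vectors $\seq{0,0},\seq{1,0},\seq{0,1}$. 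Collecting the monomials of $E$ whose $N$-vector equals $\seq{1,1}$ (those contributing exactly one factor $x$ and one factor $b$), a short expansion shows this part equals $(\eta_{01}+b\gtoe{g^\eta})\,x$ for {\mqv} and $(\eta_{01}+b)\,x$ for {\cremers}. The claim is therefore exactly that this $\seq{1,1}$-part is nonzero.

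The two protocols now diverge, and this is the crux. For {\mqv} the box blocks cancellation: the atom $\gtoe{g^\eta}$ properly contains $\eta$ as a subterm, so no normal-form monomial of $\eta$---hence none of $\eta_{01}$---can contain $\gtoe{g^\eta}$; thus $\eta_{01}\neq -b\gtoe{g^\eta}$ and $(\eta_{01}+b\gtoe{g^\eta})x$ is nonzero. Equivalently, $b\gtoe{g^\eta}\,x$ is the unique monomial of $E$ carrying the atom $\gtoe{g^\eta}$ together with a single $x$, so nothing can cancel it; I would invoke the normal-form taxonomy following \refthm{thm:sn+cr} to make ``carries the atom $\gtoe{g^\eta}$'' precise, and note as in \reflem{lemma:session:keys} that any residual degeneracy would force $K=\gid$, contradicting success. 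For {\cremers} there is no box, and indeed the algebraic cancellation $\eta_{01}=-b$ is available to an adversary who sets $R_B=g^{\eta_{00}-b}$; here I would instead use the signature on $R_B$. Because $\tagged{R_B}B$ must verify under $B$'s uncompromised signing key $\sk(B)\in\non_{\bnd}$, the value $R_B$ originates on a regular strand of $B$, so $R_B=g^y$ with $y$ a fresh simple parameter. Then $\eta=y$ is $N$-free, $\eta_{01}=0$, and the $\seq{1,1}$-part reduces to $bx\neq 0$.

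The main obstacle is exactly this non-cancellation step. For {\mqv} it rests entirely on the structural fact that $\gtoe{g^\eta}$ is strictly larger than $\eta$, so that an adversarially chosen exponent can never manufacture the box atom needed to cancel the cross term $bx\gtoe{g^\eta}$; making this rigorous requires combining the normal-form taxonomy with the bound $\Ind_N(g^\eta)\subseteq\{\seq{0,0},\seq{1,0},\seq{0,1}\}$. For {\cremers} the obstacle is the opposite---the cancellation genuinely exists---so the proof must import the authentication of $R_B$ and therefore the extra hypothesis that $B$'s signing key is uncompromised, not merely its Diffie-Hellman secret $b$. Finally, the responder case is handled symmetrically, exchanging the roles of the initiator's and responder's ephemerals, which is why the goal may be stated uniformly for initiator and responder strands.
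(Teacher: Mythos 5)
Your proposal follows the paper's own skeleton exactly---basis $\seq{b,x}$, the claim $\seq{1,1}\in\Ind_{\seq{b,x}}(K)$, then \refthm{thm:indicators:preserved} against the fact that regular strands transmit only indicators $\seq{0,0},\seq{1,0},\seq{0,1}$---but you prove the step the paper merely asserts, and that is where the two treatments genuinely differ. The paper justifies the indicator claim by citing the optimistic key expansions and \reflem{lemma:session:keys}; those expansions presuppose $R_B=g^y$ is genuine, and the lemma protects only the strand's own $a$ and $x$ taken separately, saying nothing about the peer's $b$ or about a joint $\seq{1,1}$ monomial, so it does not cover an adversarially chosen $R_B=g^\eta$. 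Your expansion with $\eta$ arbitrary, bounded via indicator preservation, plus the occurs-check that no monomial of $\eta$ can contain the atom $\gtoe{g^\eta}$, is exactly the missing argument for $\mqv$ (the paper deploys this well-foundedness idea only later, in \refthm{thm:mqv:impl:auth}). For $\cremers$ your analysis is sharper than the paper's: without the box the cancellation is real---the adversary can send $R_B=g^{c}\gop\ginv(g^b)$, built from public values, making $K=g^{c(x+a)}$ computable from $g^x$ and $g^a$---so the theorem as printed (unsigned {\iadh} roles, hypothesis only $b\in\non_{\bnd}$) is false for the $\cremers$ key computation, and your repair via $\tagged{R_B}B$ and $\sk(B)\in\non_{\bnd}$ (mirroring the paper's own proof of \refthm{thm:cf:forward:secrecy}) proves a correspondingly strengthened statement rather than the goal as stated. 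What your route buys is a correct separation of the two cases: $\mqv$ resists impersonation for an algebraic reason (the box atom can be neither manufactured nor cancelled), while $\cremers$ resists it only for an authentication reason (the signature pins $\eta$ down to a fresh $y$); the paper's one-line proof obscures this distinction.
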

\vshort{
\begin{proof}
Similar to the proof of Theorem~\ref{thm:key-secrecy}, using
  indicators relative to the basis $\seq{b,x}$.
\end{proof}
} 
\long\def \thmresistproof{ %
\begin{proof}
  For sake of contradiction suppose that $\bullet\leftarrow K$ is in
  $\bnd$.  Then $K$ is transmitted on some node. %
  When we compute indicators relative to the basis $\seq{b,x}$, $K$ has
  indicator $\seq{1,1}$ (by
  Eqns.~\ref{eq:dh:unif:model:key:comp:opt}--\ref{eq:dh:mqv:key:comp:opt},
  \ref{eq:dh:cremers:feltz:key:comp:opt} and
  Lemma~\ref{lemma:session:keys}). %
  By Thm.~\ref{thm:indicators:preserved} we conclude that some regular
  node transmits a message with indicator $\seq{1,1}$. %
  But this is a contradiction, since regular strands transmit only
  values with indicators $\seq{0,0}$ and, during certification,
  $\seq{1,0}$ and $\seq{0,1}$.
  % 
  % Since $b,x\in\non_{\bnd}$, we may apply
  % Thm.~\ref{thm:indicators:preserved} to indicators relative to basis
  % $\seq{b,x}$.  Observe that regular strands transmit only values with
  % indicators $\seq{1,0}, \seq{0,1}$, and $\seq{0,0}$.  However, by
  % Eqns.~\ref{eq:dh:mqv:key:comp:opt} and
  % \ref{eq:dh:cremers:feltz:key:comp:opt} and
  % Lemma~\ref{lemma:session:keys}, $K$ has indicator $\seq{1,1}$.
  % 
  % If $\bullet\leftarrow K$ is in $\bnd$, then $K$ is transmitted on
  % some node.  Thus, some node transmits a message with indicator
  % $\seq{1,1}$, and hence there is some $\preceq_{\bnd}$-minimal node
  % $m$ transmitting a message with indicator $\seq{1,1}$.  But $m$ can
  % be neither a regular node nor (by
  % Thm.~\ref{thm:indicators:preserved}) an adversary node.
  % 
\end{proof}
} % end long def 
\vlong{ \thmresistproof } 

This argument does not apply to $\um$, because its key
$K=H(g^{ab}\cons g^{xy})$ has indicators $\{\seq{1,0}, \seq{0,1}\}$ in
this basis.  Thus, Theorem~\ref{thm:indicators:preserved} buys us
nothing.  In fact, $\um$ fails to prevent impersonation attacks.  

\paragraph{Forward Secrecy.}  Forward secrecy is generally described
as preventing disclosure of the session key of a session, if the
long-term secrets of the regular participants in that session are
compromised subsequently.  We consider two different versions of the
forward secrecy property.  The first may be called \emph{weak} forward
secrecy, and all of our {\iadh} protocols achieve it.  We present weak
forward secrecy in Fig.~\ref{fig:weak:forward}.  Essentially, weak
forward secrecy holds because the non-originating ephemeral values
$x,y$ prevent the adversary from computing the session key.  Thus,
Assumption~\ref{assumption:regular:init:resp},
Clauses~\ref{assumption:freshness:init}
and~\ref{assumption:freshness:resp} are essential.
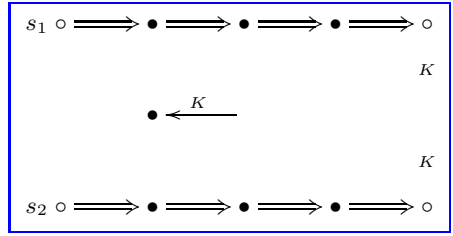
\begin{figure}[tb]
    $$\frameit{\xymatrix{
        s_1\; \circ\ar@{=>}[r] & \bullet\ar@{=>}[r] &
        \bullet\ar@{=>}[r] &
        \bullet\ar@{=>}[r] & \circ\ar@{}[d]|{K} \\
        & \bullet & \null\ar[l]_{K} & & \\
        s_2\;\circ\ar@{=>}[r] & \bullet\ar@{=>}[r] &
        \bullet\ar@{=>}[r] & \bullet\ar@{=>}[r] & \circ \ar@{}[u]|{K}
      }}$$
  \caption{Weak forward secrecy:  This diagram cannot occur}
  \label{fig:weak:forward}
\end{figure}
\begin{goal}[Weak Forward Secrecy]
  Suppose that $\bnd$ is a $\Pi$-bundle, and strands $s_1,s_2$ are
  distinct $\Pi$ initiator or responder strands, issuing the same
  session key $K$.  Then $\bnd$ does not contain a listener
  $\bullet\leftarrow K$.  
\end{goal}
\begin{theorem}
  Let protocol $\Pi$ be an {\iadh} protocol using any of the key
  computation methods in Eqns.~\ref{eq:dh:unif:model:key:comp},
  \ref{eq:mqv:key:comp}, \ref{eq:dh:cremers:feltz:key:comp}.  Then
  $\Pi$ achieves the weak forward secrecy security goal.  
\end{theorem}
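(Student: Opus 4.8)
The plan is to mirror the proofs of Theorems~\ref{thm:key-secrecy} and~\ref{thm:resist}, but drawing the indicator basis from the two \emph{ephemeral} secrets rather than from long-term secrets. Suppose for contradiction that $\bnd$ contains a listener $\bullet\leftarrow K$. Write $x_1,x_2$ for the ephemeral secrets chosen on $s_1,s_2$. Because $s_1,s_2$ are regular initiator or responder strands, Assumption~\ref{assumption:regular:init:resp}, Clauses~\ref{assumption:freshness:init} and~\ref{assumption:freshness:resp} give $x_1,x_2\in\non_{\bnd}$; since $s_1\neq s_2$ and each ephemeral is a freshly chosen parameter, $x_1$ and $x_2$ are distinct generators of the algebra. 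I therefore compute indicators relative to the basis $N=\seq{x_1,x_2}$, both members of which lie in $\operatorname{IB}_{\bnd}(n)$ for the node $n$ that carries $K$, since non-originating values are non-originating before every node.

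The heart of the argument is to show $\seq{1,1}\in\Ind_N(K)$. In each of the three protocols the shared secret has the ephemeral--ephemeral Diffie--Hellman term $g^{x_1x_2}$ as a factor: the $g^{xy}$ summand is visible in Eqns.~\ref{eq:dh:unif:model:key:comp:opt}, \ref{eq:dh:mqv:key:comp:opt}, and~\ref{eq:dh:cremers:feltz:key:comp:opt} in every case, and for $\um$ the hash is represented through encryption, so by the encryption clause of Definition~\ref{def:indicator} it does not suppress these indicators. The subtle point is that this factor is present only because $s_1$ and $s_2$ issue the \emph{same} key: equating the two independently performed key computations forces the peer ephemeral received by $s_1$ to carry $x_2$, and symmetrically for $s_2$, so the monomial $x_1x_2$ with $N$-vector $\seq{1,1}$ genuinely occurs in the common normal form. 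As in \reflem{lemma:session:keys}, I must check that this monomial does not cancel: the non-origination of $x_1,x_2$ prevents the adversary from supplying an inverse ephemeral, and any cancellation of $g^{x_1x_2}$ would collapse the full exponent and force $K=\gid$, contradicting successful key computation.

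With $\seq{1,1}\in\Ind_N(K)$ in hand the remainder is immediate, exactly as in Theorem~\ref{thm:key-secrecy}. Since the listener receives $K$, some node $n$ transmits $K$. If $n$ is regular we contradict the fact that regular {\iadh} transmissions---the ephemeral public values $g^{x_i}$ and the certificate values $g^a$---carry only the indicators $\seq{0,0}$, $\seq{1,0}$, and $\seq{0,1}$ in the basis $N$. If $n$ lies on an adversary web, then Theorem~\ref{thm:indicators:preserved} yields a regular transmission node $n'\prec_{\bnd}n$ with $\seq{1,1}\in\Ind_N(\msg(n'))$, and we reach the same contradiction. Hence no such listener exists, and $\Pi$ achieves weak forward secrecy.

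I expect the main obstacle to be the middle paragraph: justifying rigorously that the equality $K=K_{s_1}=K_{s_2}$ of the two key computations forces the cross term $g^{x_1x_2}$ to survive with indicator \emph{exactly} $\seq{1,1}$, rather than $\seq{1,2}$ or higher. This requires combining the non-cancellation reasoning of \reflem{lemma:session:keys} with the observation---itself a consequence of Theorem~\ref{thm:indicators:preserved} applied to the peer values received by $s_1$ and $s_2$---that the adversary cannot amplify the multiplicity of a non-originating ephemeral beyond the single occurrence emitted by the regular participants.
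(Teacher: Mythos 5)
Your proposal is correct and follows essentially the same route as the paper: argue by contradiction, compute indicators over the basis $\seq{x,y}$ of the two ephemeral secrets, establish that the shared key $K$ carries the cross indicator $\seq{1,1}$, and invoke Theorem~\ref{thm:indicators:preserved} against the fact that regular strands transmit only messages with indicators $\seq{0,0}$, $\seq{1,0}$, and $\seq{0,1}$. The paper's own proof is a two-line reduction to the arguments of Theorems~\ref{thm:key-secrecy} and~\ref{thm:resist} that simply asserts $\seq{1,1}\in\Ind_{\seq{x,y}}(K)$ via Lemma~\ref{lemma:session:keys} (your middle paragraph spells out what that citation leaves implicit in the unmatched case), and your closing worry about pinning the multiplicity to exactly $\seq{1,1}$ rather than $\seq{1,2}$ is unnecessary, since any indicator outside $\{\seq{0,0},\seq{1,0},\seq{0,1}\}$ already yields the contradiction.
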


\vshort{
\begin{proof}
  Just as for Theorems~\ref{thm:key-secrecy} and~\ref{thm:resist},
  using indicators with the basis $\seq{x,y}$; $K$ has indicator
  $\seq{1,1}$.
\end{proof}
}
\vlong{ % 
\begin{proof} 
Just as for Theorems~\ref{thm:key-secrecy} and~\ref{thm:resist}: in this
  case compute
  indicators relative to the basis $\seq{x,y}$, and note that $K$ has
  indicator $\seq{1,1}$ yet
  regular strands transmit only
  values with indicators $\seq{0,0}$ and, during certification,
  $\seq{1,0}$ and $\seq{0,1}$.
  % Since $x,y\in\non_{\bnd}$, we may apply
  % Thm.~\ref{thm:indicators:preserved} to indicators relative to basis
  % $\seq{x,y}$.  Observe that regular strands transmit only values with
  % indicators $\seq{1,0}, \seq{0,1}$, and $\seq{0,0}$.  However, by
  % Eqns.~\ref{eq:dh:mqv:key:comp:opt} and
  % \ref{eq:dh:cremers:feltz:key:comp:opt} and
  % Lemma~\ref{lemma:session:keys}, $K$ has indicator $\seq{1,1}$.

  % If $\bullet\leftarrow K$ is in $\bnd$, then $K$ is transmitted on
  % some node.  Thus, some node transmits a message with indicator
  % $\seq{1,1}$, and hence there is some $\preceq_{\bnd}$-minimal node
  % $m$ transmitting a message with indicator $\seq{1,1}$.  But $m$ can
  % be neither a regular node nor (by
  % Thm.~\ref{thm:indicators:preserved}) an adversary node.
\end{proof}
} % end vlong
A stronger notion of forward secrecy stresses the word
\emph{subsequently}.  A local session occurs, and the compromise of
the long term keys happens after that session is finished:  Can the
adversary then retrieve the session key?  We formalize this idea in a
diagram in which the long term secrets $a,b$ are transmitted after a
session issuing in session key $K$ completes.  
\begin{figure}[tb]
  $$\frameit{\xymatrix{
      \circ\ar@{=>}[r] & \bullet\ar@{=>}[r] & \bullet\ar@{=>}[r] &
      \bullet\ar@{=>}[r]%\ar@{..>}[ddr]^{\preceq}
      & \circ \\
      & & && \null\ar@{--}[dllll]\\
      &\bullet&\null\ar[l]_{K}& & \bullet\ar[l]_{\sk(B),b,a\quad} }}$$
  \caption{Strong Forward Secrecy:  This diagram cannot occur}
  \label{fig:strong:forward}
\end{figure}
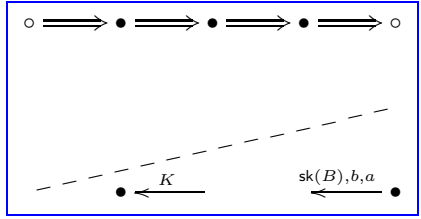
Moreover, we assume that the long term secrets are \emph{uniquely
  originating}.  This implies that they cannot have been used before
the session completed, which is exactly the intended force of
considering a \emph{subsequent compromise}.

Figure~\ref{fig:strong:forward} illustrates this situation.  The
slanted dotted line separates past from future, meaning that any event
northwest of the dotted line occurs before any event southwest of it.
This ordering relation between the end of the strand and the point of
disclosure is essential to the idea.  Also essential is
$a,b,\sk(B)\in\unique$, where $\sk(B)$ is $B$'s signing key.  $\mqv$
and $\um$ do not achieve perfect forward secrecy.  $\cremers$, like
the Station-to-Station protocol (Eqn~\ref{eq:dh:sts:auth}), does, for
a similar reason.
\begin{goal}[Forward Secrecy]
  Suppose that $\bnd$ is a $\Pi$-bundle with
  $a,b,\sk(B)\in\unique_{\bnd}$, and strand $s$ is a $\Pi$ initiator
  or responder strand using long term secret $a$ and long term peer
  public value $Y=g^b$.  Suppose that $\bullet\rightarrow a,b$ occurs
  subsequent to the last reception on $s$.  Then $\bnd$ does not
  contain a listener $\bullet\leftarrow K$.
\end{goal}
\begin{theorem}\label{thm:cf:forward:secrecy} 
  Let protocol $\Pi$ be the $\cremers$ protocol, with the ephemeral
  values $R_A,R_B$ signed as $\tagged{R_A}A$ and $\tagged{R_B}B$.
  Then $\Pi$ achieves the forward secrecy goal.
\end{theorem}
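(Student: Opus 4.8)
The plan is to reduce the claim to the indicator invariant \refthm{thm:indicators:preserved}, just as in the weak forward secrecy argument, but with one preliminary step that exploits the signatures on the ephemerals; this step is exactly what separates \cremers{} from the unsigned protocols \um and \mqv. Suppose for contradiction that $\bnd$ contains a listener $\bullet\leftarrow K$, so that $K$ is transmitted on some node $n$. Write $x$ for the ephemeral secret of $s$, and let the peer be the principal $B$ with long term value $Y=g^b$ and signing key $\sk(B)\in\unique_{\bnd}$.

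First I would show that the peer's ephemeral public value received by $s$ is a genuine $g^y$ with $y$ a fresh, non-originating exponent. By the role description, $s$ accepts this value only inside a signed message $\tagged{R}{B}$, carrying $\sig(H(R),\sk(B))$. This signature is an ingredient of $s$'s incoming message, so it originates in $\bnd$; and since $\sk(B)\in\unique_{\bnd}$ while the only compromise in $\bnd$ is the node $\bullet\rightarrow a,b$---which discloses the Diffie-Hellman exponents $a,b$ but not the signing key $\sk(B)$---the adversary never holds $\sk(B)$ and cannot fabricate it. Hence the signature originates on a regular strand of $B$, which by the freshness clauses of Assumption~\ref{assumption:regular:init:resp} chose $y$ non-originating and $g^y$ uniquely originating, so $R=g^y$. (A replayed old $B$-signature is harmless: its $y$ is still a non-originating ephemeral.)

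With both $x$ and $y$ non-originating I would then compute indicators in the basis $\seq{x,y}$. By Eqn~\ref{eq:dh:cremers:feltz:key:comp:opt} the successful key is $K=g^{xy}\cdot g^{xb}\cdot g^{ya}\cdot g^{ab}$, whose $g^{xy}$ summand gives $\seq{1,1}\in\Ind_{\seq{x,y}}(K)$. Because $x,y$ never originate, they lie in $\operatorname{IB}_{\bnd}(n)$ even though $n$ may follow the compromise node---the later disclosure of $a,b$ does not affect the non-origination of $x,y$. Applying \refthm{thm:indicators:preserved} to the adversary web feeding the listener yields a regular transmission node $n'\prec_{\bnd}n$ with $\seq{1,1}\in\Ind_{\seq{x,y}}(\msg(n'))$. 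But every message sent by a regular \cremers{} strand has indicator in $\{\seq{0,0},\seq{1,0},\seq{0,1}\}$ (the ephemerals $g^x,g^y$ give $\seq{1,0},\seq{0,1}$ and the certificates for $g^a,g^b$ give $\seq{0,0}$), while the compromise node carries the bare exponents $a,b$, whose indicator sets are empty; none of these is $\seq{1,1}$. This contradiction finishes the proof.

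I expect the first step to be the main obstacle. The remainder merely transcribes the weak forward secrecy proof, but that proof silently obtained a genuine $y$ from a \emph{second} honest strand; here only the single strand $s$ is honest, so the genuineness of the peer's ephemeral must instead be extracted from the unforgeability of $\sk(B)$. The delicate point is to keep the two secrets of $B$ separate: forward secrecy compromises the Diffie-Hellman exponent $b$ (and $a$), but knowledge of $b$ confers no power over the independent signing key $\sk(B)$, which remains uncompromised throughout. This separation is precisely why \cremers{} withstands the active man-in-the-middle that breaks strong forward secrecy for \um and \mqv, where the ephemerals are unsigned and an adversary-chosen peer value would destroy the $\seq{1,1}$ indicator.
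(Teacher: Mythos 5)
Your proposal has the same two-step architecture as the paper's proof: first use the signature to force the received ephemeral to be a genuine $g^y$ originating on a regular strand of $B$ (so $y\in\non_{\bnd}$), then run the indicator argument in the basis $\seq{x,y}$ exactly as for weak forward secrecy. The indicator half is correct. The gap is in how you justify the first step. You assert that ``the adversary never holds $\sk(B)$'' and that $\sk(B)$ ``remains uncompromised throughout,'' reading the compromise as disclosing only the exponents $a,b$. That is not the scenario of the paper's strong forward secrecy goal: in Fig.~\ref{fig:strong:forward} the late disclosure node carries $\sk(B),b,a$ --- the signing key is compromised as well, just \emph{after} the last reception on $s$. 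This is the whole point of the hypothesis $\sk(B)\in\unique_{\bnd}$ rather than $\sk(B)\in\non_{\bnd}$: unique origination means $\sk(B)$ is disclosed at exactly one node, and the temporal constraint (the dashed line) places that node after the session; it does not mean the adversary never obtains it. Your inference ``$\sk(B)\in\unique_{\bnd}$, hence the adversary cannot fabricate the signature'' is also unsound as a general step: unique origination by itself allows the single origination point to be an early adversary emission, in which case forgery would be possible. What rules this out is ordering, not non-origination.

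The paper closes exactly this gap with a temporal argument: since the disclosure of $\sk(B)$ occurs subsequent to the last reception on $s$, the signed unit $\tagged{R_B}B$ accepted by $s$ arrived at a time when $\sk(B)$ had not yet originated anywhere, so by the Honest Ideal Theorem / Authentication Test Principle it was transmitted by a regular strand; since only regular initiator and responder strands emit such signed units, Assumption~\ref{assumption:regular:init:resp} gives $R_B=g^y$ with $y\in\non_{\bnd}$, and the indicator argument finishes as you describe. As written, your proof establishes a strictly weaker statement (secrecy of old session keys when the signing key is permanently protected), and it misses the feature that makes this a \emph{forward} secrecy theorem: past session keys remain secret even when \emph{every} long-term secret of $B$, including $\sk(B)$, is later revealed. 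If you replace the ``never compromised'' claim by the before/after reasoning, the rest of your argument goes through and coincides with the paper's proof.
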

\vshort{ \vs-pf-msg }
\long\def\thmcfforwardsecrecy {%
\begin{proof}
  Since $\tagged{R_B}B$ is received on a node of $s$, and there is no
  compromise of $B$'s signing key until it has been received, there
  has been a regular node transmitting $\tagged{R_B}B$.  This follows
  from the Honest Ideal Theorem~\cite{ThayerHerzogGuttman99} or the
  Authentication Test Principle~\cite{GuttmanThayer02}.  

  Since a signed value $\tagged{R_B}B$ is transmitted only on a
  regular initiator or responder strand, we know that $R_B=g^y$ for
  some $y\in\non_{\bnd}$.  We may now take indicators relative to
  $\seq{x,y}$, and the rest of the proof proceeds as before.
\end{proof}
}% end long def 
\vlong{ \thmcfforwardsecrecy } 
\vshort{The proof uses the signatures to establish $R_A$ and $R_B$ as
  the values $g^x$ and $g^y$ originating on the peer strands.  }
\vlong{Given the absence of signed units in $\mqv$ and $\um$, they have no
analog to the first step of this proof.  
}

\section{The Implicit Authentication Goal}  
\label{sec:implicit} 

Implicit authentication has been controversial, with a distinction
between ``implicit key authentication'' and ``resisting unknown
key-share
attacks''~\cite{blake1999authenticated,law2003efficient,kaliski2001unknown}.

The essential common idea is expressed in Figure~\ref{fig:impl:auth}.
\begin{figure}[tb]
  $$\frameit{\xymatrix{
      \circ\ar@{=>}[r] & \bullet\ar@{=>}[r]\ar@{}[d]|{[A,B',\ldots]} &
      \bullet\ar@{=>}[r] & \bullet\ar@{=>}[r] & \circ \ar@{}[d]|{K} %& [A,C,\ldots]
      \\
      &&&& \\
      \circ\ar@{=>}[r] & \bullet\ar@{=>}[r]\ar@{}[u]|{[A',B,\ldots]} &
      \bullet\ar@{=>}[r] & \bullet\ar@{=>}[r] & \circ\ar@{}[u]|{K}
      % & [D,B,\ldots]
    }}$$
  \caption{Implicit authentication: In this diagram, $A=A'$ and
    $B=B'$}
  \label{fig:impl:auth}
\end{figure}
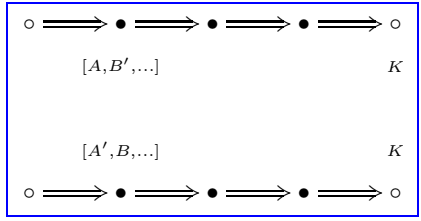
It shows two strands that compute the same session key $K$.  One has
parameters $[A,B',\ldots]$ and the other has parameters
$[A',B,\ldots]$, where we assume that the parameter for the
initiator's name appears first ($A,A'$) and parameter for the
responder's name appears second ($B',B$).  The authentication property
is that the participants agree on each other's identities, so that the
responder has the correct opinion about the initiator's identity and
\emph{vice versa}.

Implicit key authentication and resisting unknown key share attacks
differ in what non-compromise assumptions they make.

Resistance to unknown key-share attacks is the property that $A=A'$
and $B=B'$ whenever $a,b\in\non$.  The weaker assertion, implicit key
authentication, is that $A=A'$ and $B=B'$ whenever $a,b,a'\in\non$.
The additional non-compromise assumption is about $a'$, the long term
secret of the principal $E$ that $B$ \emph{thinks} he is communicating
with:  %.  Blake-Wilson and Menezes write:
\begin{quote}
  by definition the provision of implicit key authentication is
  considered only where $B$ engages in the protocol with an honest
  entity (which $E$ isn't).  \cite{blake1999authenticated}
\end{quote}
Law et al.~\cite{law2003efficient} use similar language.  Resisting
unknown key share attacks is simpler and more robust, and we will
refer to it as \emph{implicit authentication} (without ``key'').

\begin{goal}[Implicit Authentication] 
  Suppose that $\bnd$ is a $\Pi$-bundle with
  $a,b,\sk(B)\in\non_{\bnd}$, and strands $s_1,s_2$ are $\Pi$
  initiator and responder strands with parameters
  $[A,B',a,x,Y_{B'},R_{B'}]$ and $[A',B,b,y,Y_{A'},R_{A'}]$ resp.,
  where $s_1,s_2$ both yield session key $K$.  Then $A=A'$ and $B=B'$.
\end{goal}
\emph{Weak implicit authentication} states that $A=A'$, under the
extra assumption that there exists an $a'\in\non_{\bnd}$ such that
$Y_{A'}=g^{a'}$.  Symmetrically, $B=B'$, under the extra assumption
that there exists a $b'\in\non_{\bnd}$ such that $Y_{B'}=g^{b'}$.

We will prove four results.  We will show that $\um$ and $\cremers$
achieve implicit authentication.  Moreover, $\mqv$ achieves weak
implicit authentication.  Finally, (strong) implicit authentication
holds for $\mqv$, under an additional assumption.
%   
% This is the assumption that, for all certificates
% ${\tagged{\mathsf{cert}\; Y\cons P}{\CA}}$, $Y$ has one of the
% indicators $\{\indZero\},\{\indA\},\{\indB\}$, i.e.~a singleton
% chosen from the vectors $\seq{0,0,0,0},\seq{1,0,0,0},
% \seq{0,1,0,0}$, where we write the vectors here relative to the
% parameter list $a,b,x,y$.  The first case, $\{\indZero\}$, covers
% all of the other normal certificate requests, to certify public
% values $g^c$ which we are not specially concerned with, and thus not
% specifically assuming $c\in\non$.

\vlong{Of these protocols, $\um$ allows the simplest proof.}
\begin{theorem}\label{thm:um:impl:auth}
  $\um$ achieves implicit authentication.  
\end{theorem}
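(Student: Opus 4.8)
The plan is to reduce the equal-key hypothesis to a single algebraic equation between the long-term contributions, and then read off the peers' identities from the indicators of that equation using the certificate corollary. By Eqn.~\ref{eq:dh:unif:model:key:comp}, the initiator $s_1$ computes $K=H({Y_{B'}}^a\cons{R_{B'}}^x)$ and the responder $s_2$ computes $K=H({Y_{A'}}^b\cons{R_{A'}}^y)$. Since $H$ and $\cons$ are free operators, the assumption that $s_1$ and $s_2$ yield the same $K$ forces ${Y_{B'}}^a = {Y_{A'}}^b$ (together with ${R_{B'}}^x = {R_{A'}}^y$, which the $\um$ argument will not need). By Assumption~\ref{assumption:regular:cert}, Clause~\ref{assumption:regular:cert:in:group}, the certified peer values satisfy $Y_{B'}=g^\beta$ and $Y_{A'}=g^\alpha$ for exponents $\beta,\alpha$, so this becomes $g^{\beta a}=g^{\alpha b}$. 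As both sides are the same element of the normal-form algebra, they have identical indicator sets.

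First I would establish $B=B'$ by computing indicators relative to the basis $\seq{b}$. The key sublemma is that $\alpha$ is $b$-free: the certificate for $Y_{A'}=g^\alpha$ was either issued to a regular requester, so that $\alpha$ is a long-term-secret parameter distinct from $b$, or else (Assumption~\ref{assumption:regular:cert}, Clause~\ref{assumption:regular:cert:unique:P}) it comes from an adversary web, and since $b\in\non_\bnd$, Theorem~\ref{thm:indicators:preserved} forbids the adversary from placing $b$ in any exponent; either way $b$ does not occur in $\alpha$. Hence $\Ind_{\seq{b}}(g^{\alpha b})$ consists of vectors whose $b$-component is $1$. Matching this against $\Ind_{\seq{b}}(g^{\beta a})$ forces $b$ to occur in the monomial $\beta a$; as $a$ is a parameter distinct from $b$, this means $b$ occurs in $\beta$, i.e.\ $\Ind_{\seq{b}}(\beta)\neq\indZero$. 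The responder's own certificate $\tagged{\mathsf{cert}\; g^b\cons B}{\CA}$ and the initiator's peer certificate $\tagged{\mathsf{cert}\; Y_{B'}\cons B'}{\CA}$ both lie in $\bnd$ (Assumption~\ref{assumption:regular:init:resp}), and $b\in\non_\bnd$, so Corollary~\ref{cor:ca} gives $Y_{B'}=g^b$ and $B=B'$.

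The argument for $A=A'$ is symmetric, computing indicators relative to $\seq{a}$ and using $a\in\non_\bnd$: one shows $\beta$ is $a$-free (the adversary cannot introduce $a$, and regular secrets are distinct parameters), concludes that $a$ occurs in $\alpha$, and applies Corollary~\ref{cor:ca} to the initiator's own certificate $\tagged{\mathsf{cert}\; g^a\cons A}{\CA}$ and the responder's peer certificate $\tagged{\mathsf{cert}\; Y_{A'}\cons A'}{\CA}$ to obtain $A=A'$. Note that the hypothesis $\sk(B)\in\non_\bnd$ is not used for $\um$, since $\um$ signs nothing.

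I expect the main obstacle to be the freeness reduction and the $b$-freeness (resp.\ $a$-freeness) of the peer exponents: one must argue carefully that $H(u)=H(v)$, together with freeness of $\cons$, really does isolate ${Y_{B'}}^a={Y_{A'}}^b$ in the DH algebra of normal forms, and that an adversary-supplied certified exponent cannot covertly contain a non-originating secret. The latter is precisely the role of Theorem~\ref{thm:indicators:preserved}, but marshalling it through the two certificate roles of Assumption~\ref{assumption:regular:cert} is the delicate bookkeeping step; everything else is routine normal-form manipulation, relying on the fact that equal normal forms carry equal indicator sets.
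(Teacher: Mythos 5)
Your proposal follows the paper's own route: freeness of $H$ and $\cons$ isolates ${Y_{B'}}^a={Y_{A'}}^b$; Assumption~\ref{assumption:regular:cert} turns the two certified peer values into $g^{\alpha}$ and $g^{\beta}$; and Corollary~\ref{cor:ca} converts indicator information about certified exponents into agreement of names. The only difference is the bookkeeping between the exponent equation $\beta a=\alpha b$ and the final appeal to the corollary: the paper factors the equation as $\alpha=ca$, $\beta=cb$ and applies the corollary at once, whereas you interpose a sublemma asserting that $\alpha$ is $b$-free, so that $b$ must occur in $\beta$.

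That sublemma is where your proof has a genuine gap. In the regular-requester case you assert that $\alpha$ is ``a long-term-secret parameter \emph{distinct from} $b$,'' but nothing in the hypotheses gives you this: a regular certificate can perfectly well certify $g^{b}$ itself under the name $A'$, i.e.\ the responder's intended peer value $Y_{A'}$ may equal its own public value (by Clause~\ref{assumption:regular:cert:unique:P} this forces $A'=B$, a principal whose intended peer is itself, which the goal statement does not exclude). In that case $\alpha=b$ is certainly not $b$-free and your matching step collapses. The same unproved distinctness reappears when you write ``as $a$ is a parameter distinct from $b$.'' The repair is to let Corollary~\ref{cor:ca} perform the case split rather than asserting distinctness: for the certified $\alpha$, either $\Ind_{\seq{b}}(\alpha)=\indZero$, and your argument runs, or else the corollary yields $\alpha=b$ and $A'=B$; in that branch the equation becomes $\beta a=b\emult b$, so the certified $\beta$ is visibly not $b$-free, a second application of the corollary forces $\beta=b$ and hence $a=b$, and then certificate uniqueness collapses all four names to one principal, where the conclusion holds. (Two remarks for fairness: first, the case $a=b$ with \emph{both} peer exponents adversary-generated is a genuine counterexample to the goal as literally stated --- a reflection bundle in which one principal plays both roles against an adversary peer --- so distinctness of the two strands' long-term secrets is an implicit hypothesis that the paper's proof also uses silently; second, the paper's proof is terse at the corresponding point, since applying Corollary~\ref{cor:ca} to $\beta=cb$ presumes $\Ind_{\seq{b}}(\beta)\neq\indZero$, which needs the same dichotomy --- but it never asserts the false claim that a regularly certified peer exponent is automatically distinct from $b$.)
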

\vshort{ \vs-pf-msg }
\long\def\thmumimplauthproof{%
\begin{proof}
  Let $s_1,s_2$ be strands in $\bnd$ as in the implicit authentication
  goal, where also $a,b\in\non_{\bnd}$.  Since $s_1$ receives a
  certificate $\tagged{\mathsf{cert}\; Y_{B'}\cons B'}{\CA}$, by
  Assumption~\ref{assumption:regular:cert}, $\sk(\CA)\in\non_{\bnd}$.
  Hence, there was a certifying strand that transmitted this
  certificate, and by~\ref{assumption:regular:cert},
  Cl.~\ref{assumption:regular:cert:in:group}, $Y_{B'}=g^{b'}$ for some
  $b'$.
  By symmetry, $Y_{A'}=g^{a'}$.  

  The key computation, with the injectiveness of ${}\cons{}$ and
  $H$,\footnote{In the symbolic model, hash functions are modeled as
    injective.}  ensures $g^{a'b}=g^{ab'}$, hence $a'b=ab'$.  Thus,
  there is some $c$ such that $a'=ca$ and $b'=cb$.  Applying
  Cor.~\ref{cor:ca}, 
%   
%     Assumption~\ref{assumption:regular:cert},
%     Cl.~\ref{assumption:regular:cert:unique:P}, 
%   
  we conclude $B'=B$.  Symmetrically, $A'=A$.
\end{proof}
} % end long def 
\vlong{\thmumimplauthproof}
{\smallskip\noindent{}Using indicators in a richer way than
  previously, we obtain:}
\begin{theorem}\label{thm:cremers:impl:auth}
  $\cremers$ achieves implicit authentication, when the strands
  $s_1,s_2$ receive $\tagged{R_{B'}}{B'}$ and $\tagged{R_{A'}}{A'}$,
  and $\sk({A'}),\sk({B'})\in\non_{\bnd}$.
\end{theorem}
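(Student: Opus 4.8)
The plan is to reduce both identity claims to Corollary~\ref{cor:ca}. It suffices to prove that the initiator's long-term secret $a$ occurs in the exponent $a'$ of the certified peer value $Y_{A'}=g^{a'}$ accepted by $s_2$---that is, $\Ind_{\seq{a}}(a')\neq\indZero$---and, symmetrically, that $b$ occurs in the exponent $b'$ of $Y_{B'}=g^{b'}$ accepted by $s_1$. Since $g^a$ is certified for $A$ (it travels in $s_1$'s certificate $c_A$) while $g^{a'}$ is certified for $A'$, and $a\in\non_{\bnd}$, Corollary~\ref{cor:ca} then forces $A=A'$; the symmetric instance forces $B=B'$.

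First I would use the signatures to control the \emph{received} ephemerals. Because $\sk(A'),\sk(B')\in\non_{\bnd}$, the signed units $\tagged{R_{A'}}{A'}$ and $\tagged{R_{B'}}{B'}$ received on $s_2$ and $s_1$ cannot be forged; by the Honest Ideal Theorem (equivalently, the Authentication Test Principle), each originates on a regular strand of the named principal, exactly as in the proof of Theorem~\ref{thm:cf:forward:secrecy}. As only initiator and responder strands emit such signed values, and each signs its own freshly chosen public value, I obtain $R_{A'}=g^{x'}$ and $R_{B'}=g^{y'}$ with $x',y'\in\non_{\bnd}$ simple ephemeral parameters. In particular $x'$ is $\seq{a}$-free and $y'$ is $\seq{b}$-free.

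Next I would compare the two computations of $K$ through their indicators. By Lemma~\ref{lemma:session:keys} applied to $s_1$ (long-term secret $a\in\non_{\bnd}$), $\indA\in\Ind_{\seq{a}}(K)$. Writing $K$ instead from $s_2$'s rule as $g^{(x'+a')(b+y)}=g^{x'b}\gop g^{x'y}\gop g^{a'b}\gop g^{a'y}$, the factors $x'$, $b$, and $y$ are all $\seq{a}$-free, so the only maximal monomials that can carry a nonzero $\seq{a}$-vector are those drawing their occurrence of $a$ from $a'$. As both expressions denote the same normal form $K$, the vector $\indA$ must be realized on the $s_2$ side, whence $\Ind_{\seq{a}}(a')\neq\indZero$, and Corollary~\ref{cor:ca} yields $A=A'$. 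The case $B=B'$ is symmetric: Lemma~\ref{lemma:session:keys} applied to $s_2$ gives $\indB\in\Ind_{\seq{b}}(K)$, and expanding $K$ from $s_1$ as $g^{(y'+b')(a+x)}$ shows that, since $y'$, $a$, and $x$ are $\seq{b}$-free, only $b'$ can supply $b$; Corollary~\ref{cor:ca} then gives $B=B'$.

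The delicate step, and the reason the signatures are indispensable, is precisely the $\seq{a}$-freeness of the received ephemeral $x'$. The adversary knows the public value $g^a$, since it travels in $A$'s certificate, and by exponentiation can form $g^{az}=(g^a)^z$, whose $\seq{a}$-indicator is $\indA$; Theorem~\ref{thm:indicators:preserved} does not rule this out, exactly because a value with indicator $\indA$ (namely $g^a$ itself) was already transmitted during certification. Hence, without the signature on $R_{A'}$, the adversary could deliver a received ephemeral whose exponent already contains $a$, and the occurrence of $a$ in $K$ on $s_2$'s side could no longer be charged to $a'$. Ruling this out is what the signature argument accomplishes, and it is what distinguishes $\cremers$'s signed exchange from the bare Diffie--Hellman messages; I expect it, rather than the indicator bookkeeping, to be the crux of the proof.
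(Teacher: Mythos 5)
Your proof is correct, and it reuses the paper's scaffolding---the signature step yielding $R_{A'}=g^{x'}$, $R_{B'}=g^{y'}$ with $x',y'\in\non_{\bnd}$ simple parameters, and the final appeal to Corollary~\ref{cor:ca}---but the core step is genuinely different. The paper writes out the full key equation $g^{xy'}\cdot g^{xb'}\cdot g^{y'a}\cdot g^{ab'} = g^{x'y}\cdot g^{x'b}\cdot g^{ya'}\cdot g^{a'b}$, tabulates indicator vectors for all eight monomials over the six-element basis $\seq{a,b,x,y,x',y'}$ (Table~\ref{tab:cf:indicators}), argues that the only compatible matching of monomials is the vertically aligned one (after disposing of the case $x=y$ separately), and so concludes full parameter agreement $x=x'$, $y=y'$, $a=a'$, $b=b'$ before invoking Corollary~\ref{cor:ca}. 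You instead run two independent one-dimensional arguments: Lemma~\ref{lemma:session:keys} gives $\indA\in\Ind_{\seq{a}}(K)$ from $s_1$'s computation, and since $x'$, $b$, $y$ are $\seq{a}$-free, the expansion of $s_2$'s computation $g^{(x'+a')(b+y)}$ can only realize a monomial with nonzero $a$-count through a monomial of $a'$; hence $\Ind_{\seq{a}}(a')\neq\indZero$, Corollary~\ref{cor:ca} applies, and the case of $b$ is symmetric. Your route is leaner: it avoids the matching table and the $x=y$ split, it treats a possibly compound $a'$ cleanly (the paper's table tacitly writes $xb'$, $a'b$, etc.\ as if $a',b'$ contributed single monomials), and it proves exactly what the goal requires. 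What the paper's heavier argument buys is a stronger conclusion---agreement on all six parameters, not just the names---and it is the same matching technique that carries over to $\mqv$ (Theorems~\ref{thm:mqv:weak} and~\ref{thm:mqv:impl:auth}), where no signatures anchor the received ephemerals and your local charging argument could not start. Your closing observation that the signature hypothesis is precisely what makes $x'$ $\seq{a}$-free---since Theorem~\ref{thm:indicators:preserved} does not stop the adversary from exponentiating the public $g^a$---is exactly right and matches the role the signatures play in the paper. One small step you (like the paper) should make explicit: $x'\neq a$, $b\neq a$, etc., follow from origination, since the exponentials of ephemerals and of long-term secrets originate uniquely on strands of different kinds; this is what licenses calling $x'$ and $b$ $\seq{a}$-free.
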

%
%\vshort{ \vs-pf-msg }
%\long\def\thmcremersimplauthproof{%
\begin{proof}
  We start with $a,b\in\non_{\bnd}$, and
  Assumption~\ref{assumption:regular:init:resp} tells us
  $x,y\in\non_{\bnd}$.  Using the signatures, there exist regular
  initiator or responder strands transmitting $\tagged{R_{B'}}{B'}$
  and $\tagged{R_{A'}}{A'}$.  Hence, by
  Assumption~\ref{assumption:regular:init:resp}, ${R_{A'}}=g^{x'}$ and
  ${R_{B'}}=g^{y'}$, where $x',y'\in\non_{\bnd}$ and
  $g^{x'},g^{y'}\in\unique_{\bnd}$.  We may also use the certificates
  \vlong{(as in the previous proof)} to infer that $Y_{B'}=g^{b'}$ and
  $Y_{A'}=g^{a'}$.  Also $g^{a'},g^{b'}\in\unique_{\bnd}$.  Since the
  strands compute the same session key,
  \begin{equation}
    \label{eq:cf:proof} 
    g^{xy'}\cdot g^{xb'}\cdot g^{y'a}\cdot g^{ab'} 
    = g^{x'y}\cdot g^{x'b}\cdot g^{ya'}\cdot g^{a'b} 
  \end{equation}
  \vlong{We also know that none of these parameters can be replaced by
    a compound expression, since they are independently chosen on
    regular strands.  Moreover, none of $x,y,x',y'$ can equal any of
    $a,b,a',b'$, as $g^x,g^y,g^{x'},g^{y'}$ are uniquely originating,
    on initiator or responder strands.  The exponentials of the latter
    all originate on certificate request strands.}

  \vlong{Moreover, i}\vshort{\noindent I}f $x=y$, then $s_1=s_2$, so
  $A=A'$ and $B=B'$, and authentication is assured.  So assume
  $x\not=y$.

  We compute indicators for the four monomials on each side of
  Eqn.~\ref{eq:cf:proof}, as shown in Table~\ref{tab:cf:indicators}.
  We use as basis the non-originating parameters $a,b,x,y,x',y'$, in
  this order.  Since we do not know whether the primed variables equal
  their unprimed counterparts, there are undetermined entries (?) in
  the indicator vectors; an integer 0 or 1 shows the definite presence
  or absence of a parameter.
  \begin{table}[tb]
    \centering
    \begin{tabular}{cccc}
      ${xy'}$ & ${xb'}$ & ${y'a}$ & ${ab'}$ \\ 
      $\seq{0,0,1,?,?,1}$ & 
      $\seq{?,?,1,0,?,0}$ & 
      $\seq{1,0,0,?,?,1}$ &
      $\seq{1,?,0,0,0,0}$ \\[2mm] 
      $\seq{0,0,?,1,1,?}$ & 
      $\seq{0,1,?,0,1,?}$ & 
      $\seq{?,?,0,1,0,?}$ & 
      $\seq{?,1,0,0,0,0}$ \\
      ${x'y}$ & ${x'b}$ & ${ya'}$ & ${a'b}$ 
    \end{tabular}\\[2mm]
    \caption{Indicator vectors for $\cremers$ authentication}
    \label{tab:cf:indicators}
  \end{table}

  In the table, every vertically aligned pair is compatible, i.e.~we
  can fill in the undetermined entries so as to make the vectors
  agree.  Moreover, if two vectors are not vertically aligned, they
  are incompatible.  For instance, the rightmost entries have 0s for
  all the slots for ephemeral parameters, which put them in conflict
  with all of the other vectors.  

  Hence, ${xy'}={x'y}$,\ldots, ${ab'}={a'b}$.  Since each of these is
  a parameter, and not compound, we have $x=x',y=y',a=a',b=b'$.
  Applying Cor.~\ref{cor:ca}, $A=A'$ and $B=B'$.
\end{proof}
%} 
%
\vshort{A similar, though more elaborate, argument
  establishes:}\vlong{Turning now to {\mqv}:} 
%   
%   We will then consider stronger assumptions
%   under which {\mqv} meets the full implicit authentication.  These
%   assumptions can be made operationally plausible, and are
%   formalizations of ideas suggested by
%   Kaliski~\cite{kaliski2001unknown}.

\begin{theorem}\label{thm:mqv:weak} 
    $\mqv$ achieves {weak} implicit authentication.  
\end{theorem}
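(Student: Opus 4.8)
The plan is to run the same monomial-by-monomial indicator comparison used for $\cremers$ in Theorem~\ref{thm:cremers:impl:auth}, but adapted to the two features that make $\mqv$ harder: the box operators $\gtoe{\cdot}$ that sit inside the exponents, and the absence of any signature on the ephemeral values, so that the received group elements are adversary data rather than fresh non-originating exponentials. First I would collect the non-origination facts. The standing hypotheses give $a,b,\sk(B)\in\non_{\bnd}$; Assumption~\ref{assumption:regular:init:resp}, Clauses~\ref{assumption:freshness:init} and~\ref{assumption:freshness:resp}, give $x,y\in\non_{\bnd}$ with $g^x,g^y\in\unique_{\bnd}$; and the weak hypothesis supplies, for the half proving $A=A'$, a value $a'\in\non_{\bnd}$ with $Y_{A'}=g^{a'}$ (symmetrically $b'\in\non_{\bnd}$, $Y_{B'}=g^{b'}$ for the half proving $B=B'$). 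Assumption~\ref{assumption:regular:cert} fixes these exponential shapes from the certificates that $s_1$ and $s_2$ consume.

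Next I would set $R_{B'}=g^{\eta}$ and $R_{A'}=g^{\zeta}$, with $\eta,\zeta$ the possibly-engineered exponents of the received ephemerals, and equate the two strands' keys:
$$ g^{\eta x}\cdot g^{\eta a\gtoe{g^x}}\cdot g^{b'x\gtoe{g^\eta}}\cdot g^{ab'\gtoe{g^x}\gtoe{g^\eta}}\;=\; g^{\zeta y}\cdot g^{\zeta b\gtoe{g^y}}\cdot g^{a'y\gtoe{g^\zeta}}\cdot g^{a'b\gtoe{g^y}\gtoe{g^\zeta}}. $$
Exactly as for $\cremers$ I first dispose of the degenerate case: if $x=y$ then $g^x=g^y$ is uniquely originating, so $s_1=s_2$ and $A=A'$, $B=B'$ hold trivially; hence I assume $x\neq y$.

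The core is an indicator comparison in the basis $\seq{a,b,a',x,y}$, with boxed subterms ignored. The ingredient that replaces the $\cremers$ signatures is the adversary limitation, Theorem~\ref{thm:indicators:preserved}: because $R_{B'}$ and $R_{A'}$ are received on regular strands, every maximal monomial of $\eta$ (resp.\ $\zeta$) carries at most one basis variable, and only to the first power. To reach $A=A'$ it suffices to prove $a'=a$ and then invoke Corollary~\ref{cor:ca}. I would argue that $K$ is not $a'$-free: were the only two $a'$-bearing factors $g^{a'y\gtoe{g^\zeta}}$ and $g^{a'b\gtoe{g^y}\gtoe{g^\zeta}}$ to cancel, their combined exponent $a'\gtoe{g^\zeta}(y+b\gtoe{g^y})$ would vanish, forcing $s_B=0$ and so $K=\gid$, against success (this is the $\cremers$/Lemma~\ref{lemma:session:keys} cancellation argument, transported to the peer secret). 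Reading the surviving $a'$-indicator back through the left-hand side, some left factor must carry $a'$; the only unboxed candidates are $\eta x$ and $\eta a\gtoe{g^x}$. The possibility that $a'$ enters through $\eta$ is excluded because it would give $\eta x$ the joint value $\seq{1,1}$ in the sub-basis $\seq{a',x}$, and no right-hand monomial realizes $\seq{1,1}$ there: $x$ can reach the right-hand side only through $\zeta$, and a single monomial of $\zeta$ cannot carry both $a'$ and $x$ (this is where $x\neq y$ is needed, to keep $a'y\gtoe{g^\zeta}$ from accidentally exhibiting $x$). Hence $a'$ must enter through the strand's own $a$, giving $a'=a$; the mirror-image computation in the basis $\seq{b,a,b',y,x}$, using $b'\in\non_{\bnd}$, yields $b'=b$ and hence $B=B'$.

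The step I expect to be the main obstacle is this last one: ruling out every way in which the adversary could smuggle the peer secret $a'$ (or $b'$) into the received exponent $\eta$ (or $\zeta$) using the publicly available $g^{a'}$. Unlike $\cremers$, where the signatures pin each received ephemeral to a fresh non-originating exponent and Table~\ref{tab:cf:indicators} aligns one-to-one, here the alignment must be forced indirectly, purely from the single-variable constraint of Theorem~\ref{thm:indicators:preserved} together with the joint indicators, and several degenerate cancellations (boxed factors coinciding, or an $a'$-monomial being absorbed into a $\zeta$-bearing monomial) must each be shown to collapse $K$ to $\gid$. It is this case analysis, rather than any single identity, that makes the $\mqv$ argument more elaborate than its $\cremers$ counterpart.
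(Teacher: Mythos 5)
There is a genuine gap, and it sits exactly where you predicted the ``main obstacle'' would be. In the half of the goal that proves $A=A'$, the only extra hypothesis available is that there exists $a'\in\non_{\bnd}$ with $Y_{A'}=g^{a'}$. Nothing of the kind is assumed about $Y_{B'}$: it is merely a \emph{certified} group element, so by Assumption~\ref{assumption:regular:cert} all you may write is $Y_{B'}=g^{\beta}$ for some $\beta:\NZE$ that can be a compound, adversary-chosen expression (the adversary can obtain certificates for any exponent it can exhibit). Your displayed key equation instead writes $Y_{B'}=g^{b'}$ with $b'$ behaving as an atomic parameter, and your case analysis then asserts that ``the only unboxed candidates'' for carrying the surviving $a'$-indicator on the left are $\eta x$ and $\eta a\gtoe{g^x}$. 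That step is unjustified: the factors you write as $b'x\gtoe{g^\eta}$ and $ab'\gtoe{g^x}\gtoe{g^\eta}$ --- properly $x\beta\gtoe{g^\eta}$ and $a\beta\gtoe{g^x}\gtoe{g^\eta}$ --- can also carry $a'$, through $\beta$. This is not a pedantic omission: adversary-registered certified keys with crafted exponents are the essence of Kaliski's unknown-key-share attack discussed immediately after this theorem, so a proof of weak implicit authentication for $\mqv$ must confront them rather than assume them away.

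The paper's proof devotes its hardest case to exactly this possibility. It anchors on the right-hand monomial $a'b\gtoe{g^y}\gtoe{g^\psi}$, whose $x$-slot is $0$, so it can match only the two left-hand monomials whose $a$-slot is $1$; this forces $a'=a$ or $b=a$, and in the residual case ($a'\neq a$, $b=a$) the match forces either $\eta$ or $\beta$ to contain $a'$. The $\beta$ branch is then closed not by fiat but by Corollary~\ref{cor:ca}: since $g^{\beta}$ and $g^{a'}$ are both certified and $a'\in\non_{\bnd}$, an $a'$-bearing $\beta$ must equal $a'$ exactly, and plugging $\beta=a'$ back in leaves the right-hand monomial $a'y\gtoe{g^\psi}$, with indicator $\seq{0,1,0,0,1}$ in the basis $\seq{a,a',b,x,y}$, unmatched by anything on the left --- a contradiction. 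Your exclusion of the $\eta$ branch (the joint $(a',x)$-indicator $\seq{1,1}$ cannot be realized on the right, given $x\neq y$ and the single-variable constraint you correctly extract from Theorem~\ref{thm:indicators:preserved}) is sound and parallels the paper's treatment of its corresponding case. To repair your proof, you must drop the assumption $Y_{B'}=g^{b'}$, treat the certified exponent as an arbitrary $\beta$, and add the case in which $a'$ enters the left-hand side through $\beta$, discharging it via Corollary~\ref{cor:ca} and a further indicator mismatch as the paper does.
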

\vshort{ \vs-pf-msg }
\long\def\thmmqvweakproof{%
\begin{proof}
  Let $s_1,s_2$ be strands in $\bnd$ as in the weak implicit
  authentication goal, where also $a,b,a'\in\non_{\bnd}$ and
  $Y_{A'}=g^{a'}$.  Here the starting point is weaker than in
  $\cremers$, since we do not know that $R_{B'},R_{A'}$ originate on
  regular strands; we know only that they are group elements, so of
  the form $g^\eta,g^{\psi}$, resp., for $\psi,\eta:\NZE$.  Likewise,
  $Y_{B'}$, having been certified, is some group element $g^\beta$.
  Since $s_1,s_2$ yield the same key, we have:
  \begin{eqnarray*}
    && g^{x\eta}\cdot g^{a\eta\gtoe{g^x}} \cdot g^{x\beta\gtoe{g^\eta}}
    \cdot g^{a\beta\gtoe{g^x}\gtoe{g^\eta}}   \\ 
    &=& g^{\psi y}\cdot g^{\psi b\gtoe{g^y}} \cdot
    g^{a'y\gtoe{g^\psi}} \cdot g^{a'b\gtoe{g^y}\gtoe{g^\psi}}
  \end{eqnarray*}
  An adversary strategy for solving this consists of an assignment of
  possibly compound expressions to the Greek letters
  $\psi,\eta,\beta$.  The adversary wins if both sides of this
  equation reduce to the same normal form, but without forcing $A=A'$.

  We write the indicator vectors for this in
  Table~\ref{tab:mqv:w:indicators}, relative to the basis
  $\seq{a,a',b,x,y}$, all in $\non$.
  \begin{table}[tb]
    \centering
    \begin{tabular}{cccc}
      ${x\eta}$ & ${a\eta\gtoe{g^x}}$ & ${x\beta\gtoe{g^\eta}}$ & ${a\beta\gtoe{g^x}\gtoe{g^\eta}}$ \\ 
      $\seq{?,?,?,1,?}$ & 
      $\seq{1,?,?,?,?}$ & 
      $\seq{?,?,?,1,?}$ &
      $\seq{1,?,?,?,?}$ \\[2mm] 
      $\seq{?,?,?,?,1}$ & 
      $\seq{?,?,1,?,?}$ & 
      $\seq{?,1,?,0,1}$ & 
      $\seq{?,1,1,0,0}$ \\
      ${\psi y}$ & ${\psi b\gtoe{g^y}}$ & ${a'y\gtoe{g^\psi}}$ &
      ${a'b\gtoe{g^y}\gtoe{g^\psi}}$  
    \end{tabular}\\[2mm]
    \caption{Indicator vectors for $\mqv$ weak authentication}
    \label{tab:mqv:w:indicators}
  \end{table}
  There are many entries $?$, because we do not know whether $a=a'$,
  or what the adversary incorporated into the Greek letters
  $\beta,\eta,\psi$.  Nevertheless, the lower right entry has 0 for
  the $x$ slot, so it cannot equal the first or third entry in the
  first row, in which the $x$ slot is 1.  This leaves two
  possibilities, the second and fourth terms.

  In these terms, the $a$ slot is 1.  Thus, either $a'=a$ or $b=a$.
  If $a'=a$, we may apply Cor~\ref{cor:ca}.  

  So assume $a'\not=a$ and $b=a$.  If we choose term~2,
  i.e.~${a'b\gtoe{g^y}\gtoe{g^\psi}}={a\eta\gtoe{g^x}}$, then
  $\eta=a'r$, where $r$ is the ratio of boxed terms.  Turning to the
  term $x\eta$, we have $x\eta=xa'r$, i.e.~its $y$ and $b$ slots are
  0.  Thus, it cannot equal any of the monomials on the RHS.

  Choosing term~4,
  ${a'b\gtoe{g^y}\gtoe{g^\psi}}={a\beta\gtoe{g^x}\gtoe{g^\eta}}$, then
  $\beta=a' r$, where $r$ is a ratio of boxed values.  But since
  $\beta$ was certified, we can apply Cor~\ref{cor:ca} to infer that
  $\beta=a'$.  Plugging in, we now have ${a'y\gtoe{g^\psi}}$ with
  indicator $\seq{0,1,0,0,1}$.  Since ${x\beta\gtoe{g^\eta}}$ has
  indicator $\seq{0,1,0,1,0}$, there is no term in the top row that
  ${a'y\gtoe{g^\psi}}$ can match.  
\end{proof}
}% end long def 
\vlong{ \thmmqvweakproof }
Kaliski~\cite{kaliski2001unknown} showed implicit authentication does
not hold for {\mqv}.  An adversary, observing $A$'s ephemeral public
value $R_A=g^x$, may generate a new $R_E$ depending on $R_A$ and
$Y_A=g^a$, and then a new long-term $Y_E$:
\begin{equation}
  \label{eq:kaliski}
  R_E = g^x\cdot(g^a)^{\gtoe{g^x}}\cdot g^{-1} \qquad\quad
  Y_E=g^{\gtoe{R_E}^{-1}}.
\end{equation}
Thus, $R_E=g^{x+a{\gtoe{g^x}}-1}$.  The adversary asks {\CA} to
certify $Y_E$, successfully proving possession of ${\gtoe{R_E}^{-1}}$.
This is compatible with our assumptions as
$\Ind_{\seq{a}}(Y_E)=\indZero$.  

$E$'s operations cancel out, so the certificate misleads $B$ into
thinking $K$ is shared with $E$, when it is shared with $A$.  A
mischievous priest $E$ can cause a criminal $B$ to believe $K$ shared
with $E$, when it fact it is shared with the district attorney $A$.
$E$ can thus induce $B$ to misdeliver a confession to $A$, leading to
an unexpected plot twist in Hitchcock's movie with Montgomery
Clift~\cite{Hitchcock53}.
\begin{definition}  
  Strands $s,d$ with parameters $[\ldots,a,x,\ldots]$ and
  $[\ldots,Y_A,R_A]$ are a \emph{doping pair} if $x$ appears in $Y_A$.
  % and both $a,x$ appear in $R_A$. 

  Bundle $\bnd$ \emph{respects ephemerals} if no doping pair in
  $\bnd$ yields a shared key $K$.
\end{definition}
Doping, which~\cite{kaliski2001unknown} uses, is not visible to the
principal executing $d$.  We mention below 
%
% two ways
a way to prevent it.
\begin{theorem}\label{thm:mqv:impl:auth}
  Suppose $\bnd$ is an $\mqv$ bundle that respects ephemerals.  Then
  $\bnd$ satisfies (full) implicit authentication.
\end{theorem}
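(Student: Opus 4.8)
The plan is to rerun the indicator argument from the proof of \refthm{thm:mqv:weak}, using the hypothesis that $\bnd$ respects ephemerals in place of the non-origination assumption on the peer's long-term secret that was available in the weak case. First I would fix the configuration exactly as there: write the received and certified values on $s_1,s_2$ as $R_{B'}=g^\eta$, $R_{A'}=g^\psi$, $Y_{B'}=g^\beta$, and $Y_{A'}=g^\alpha$, where $\eta,\psi$ are the (possibly compound, adversary-engineered) exponents of the received ephemerals and $\alpha,\beta$ are the exponents of the certified peer public values. By hypothesis $a,b\in\non_{\bnd}$, and since $s_1,s_2$ are regular initiator and responder strands, Assumption~\ref{assumption:regular:init:resp}, Clauses~\ref{assumption:freshness:init} and~\ref{assumption:freshness:resp}, give $x,y\in\non_{\bnd}$. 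Expanding the two {\mqv} key computations into four factors each, as in the displayed expansion in the proof of \refthm{thm:mqv:weak}, and equating the results yields a single identity relating the exponents on the two strands.

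The decisive new ingredient is the respect-for-ephemerals hypothesis. Since $s_1,s_2$ yield the shared key $K$ and $\bnd$ respects ephemerals, $\{s_1,s_2\}$ is not a doping pair in either orientation; hence $x$ does not occur \emph{anywhere} in $\alpha$ and $y$ does not occur anywhere in $\beta$, in particular not hidden inside a box $\gtoe{\cdot}$. This is exactly the leverage that replaces the assumption $a'\in\non_{\bnd}$ used in \refthm{thm:mqv:weak}: it forbids the Kaliski maneuver of Eqn.~\ref{eq:kaliski}, in which the adversary smuggles the honest ephemeral into the certified peer value $Y_E$ inside the coercion operator $\gtoe{\cdot}$, where indicators cannot detect it.

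Next I would take indicators relative to the basis $\seq{a,b,x,y}$, all in $\non_{\bnd}$, and recompute the table from the proof of \refthm{thm:mqv:weak} with $\alpha,\beta$ now playing the role previously played by the clean parameters $a',b'$. Because $x$ does not occur in $\alpha$, the $x$-slot of each $\alpha$-bearing monomial on the responder side is pinned to $0$; symmetrically, because $y$ does not occur in $\beta$, the $y$-slot of each $\beta$-bearing monomial on the initiator side is pinned to $0$. These were precisely the entries that the weak proof could pin only by invoking $a',b'\in\non_{\bnd}$. The resulting compatibility analysis of the four monomials on each side forces $a$ to occur with nonzero indicator in $\alpha$ and $b$ to occur with nonzero indicator in $\beta$; the degenerate case $x=y$ collapses $s_1=s_2$ and gives $A=A'$, $B=B'$ at once, as before. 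Finally, since $a\in\non_{\bnd}$ and $\Ind_{\seq{a}}(\alpha)\neq\indZero$, Corollary~\ref{cor:ca} yields $\alpha=a$ together with $A=A'$; symmetrically $\beta=b$ and $B=B'$, which is full implicit authentication.

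The step I expect to be the main obstacle is the indicator bookkeeping once $\alpha$ and $\beta$ can no longer be treated as basis elements: one must track how the adversary's engineered exponents distribute indicator weight through the products and through the boxed coercion terms, and verify case by case that the two non-doping facts $x\notin\alpha$ and $y\notin\beta$ close exactly the alignment gaps that $a',b'\in\non_{\bnd}$ closed in the weak proof. The subtlest sub-case is excluding an adversary that makes the $a$-dependence of $\alpha$ visible only through a box; here it is essential both that ``respects ephemerals'' is a \emph{syntactic} occurrence condition, strictly stronger than an indicator-level one, and that boxed values always contribute $\indZero$, after which Corollary~\ref{cor:ca} collapses $\alpha$ to $a$.
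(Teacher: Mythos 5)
There is a genuine gap at exactly the step you flag as the main obstacle: the claim that ``the resulting compatibility analysis of the four monomials on each side forces $a$ to occur with nonzero indicator in $\alpha$.'' No indicator-level compatibility analysis can force this, even with slots pinned by the no-doping facts. The dangerous scenario is not an adversary who hides the $a$-dependence of $\alpha$ inside a box; it is an adversary for whom $\alpha$ has \emph{no} dependence on the basis at all, i.e.\ $\Ind_{\seq{a,b,x,y}}(\alpha)=\indZero$, and who satisfies the key equation by an internal cancellation on the responder's side: writing $\psi=\psi_{nz}+\psi_0$ (monomials with nonzero versus zero indicator), the summands $\alpha y\gtoe{g^\psi}$ and $\alpha b\gtoe{g^y}\gtoe{g^\psi}$, whose monomials have indicators $\seq{0,0,0,1}$ and $\seq{0,1,0,0}$, cancel against $\psi_0 y$ and $\psi_0 b\gtoe{g^y}$ whenever $\psi_0=-\alpha\gtoe{g^\psi}$. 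This is exactly the shape of Kaliski's attack (Eqn.~\ref{eq:kaliski}), and it is consistent with every indicator constraint, because boxed terms contribute $\indZero$. So your table analysis cannot close this case, and \refcor{cor:ca} is never triggered.

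What the paper does here is not an indicator argument but a syntactic one. First, \reflem{lemma:session:keys} together with \refthm{thm:indicators:preserved} shows that the key's $a$- and $x$-indicators must enter the right-hand side through $\psi$, so $\psi$, hence $\gtoe{g^\psi}$, involves $x$. Respecting ephemerals makes $\alpha$ syntactically $x$-free, so $\alpha$ cannot contain $\gtoe{g^\psi}$, no reduction can occur in the product $-\alpha\gtoe{g^\psi}$, and the forced equation $-(\psi_0/\gtoe{g^\psi})=\alpha$ makes the normal form of $\psi_0$ literally contain $\gtoe{g^\psi}$; then $\psi$ contains itself as a proper subterm, contradicting the well-foundedness of terms. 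This occurs-check contradiction---which your proposal never assembles; you cite neither \reflem{lemma:session:keys} nor well-foundedness---is the only thing that rules out the zero-indicator case. Once it is ruled out, \refcor{cor:ca} yields $\alpha,\beta\in\{a,b\}$ and the paper simply invokes \refthm{thm:mqv:weak} rather than rerunning the table. Two smaller inaccuracies: respecting ephemerals must also be applied to the \emph{self}-pairs $(s_1,s_1)$ and $(s_2,s_2)$ to get $y$ absent from $\alpha$ and $x$ absent from $\beta$ (your two cross orientations give only $x\notin\alpha$ and $y\notin\beta$); and the correct conclusion of the subtle sub-case is that it is \emph{impossible}, not that the $a$-dependence of $\alpha$ becomes visible so that \refcor{cor:ca} can collapse it to $a$.
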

\vshort{ \vs-pf-msg }
\long\def\thmmqvimplauth{%
\begin{proof}
  Let $s_1,s_2$ be strands in $\bnd$ as in the implicit authentication
  goal, where $a,b\in\non_{\bnd}$ and $Y_{A'}=g^{\alpha},
  Y_{B'}=g^{\beta}$, for $\alpha,\beta:\NZE$.  $R_{B'},R_{A'}$ are
  group elements of the form $g^\eta,g^{\psi}$, resp., for
  $\psi,\eta:\NZE$.  Since $s_1,s_2$ yield the same key,
  \begin{eqnarray}
    \label{eq:mqv:auth}
    &&   g^{x\eta}\cdot g^{a\eta\gtoe{g^x}} \cdot g^{x\beta\gtoe{g^\eta}}
    \cdot g^{a\beta\gtoe{g^x}\gtoe{g^\eta}}  \nonumber \\
    &=& g^{\psi y}\cdot g^{\psi b\gtoe{g^y}} \cdot
    g^{\alpha y\gtoe{g^\psi}} \cdot g^{\alpha b\gtoe{g^y}\gtoe{g^\psi}}
  \end{eqnarray}
  By Cor.~\ref{cor:ca}, either $\alpha=a$ or $\alpha=b$ or
  $\Ind_{\seq{a,b}}(\alpha)=\indZero$.  Likewise, either $\beta=a$ or
  $\beta=b$ or $\Ind_{\seq{a,b}}(\beta)=\indZero$.

  If both $\alpha,\beta\in\{a,b\}$, then we have a case of weak
  authentication from both sides, so Thm.~\ref{thm:mqv:weak} gives the
  desired result.  Assume then that at least one, e.g.~$\alpha$, has
  $\Ind_{\seq{a,b}}(\alpha)=\indZero$.  Since $\bnd$ respects
  ephemerals, $(s_1,s_2)$ is not a doping pair, and $(s_2,s_2)$ is not
  a doping pair, so $x,y$ are syntactically absent from $\alpha$.  So
  in fact $\Ind_{\seq{a,b,x,y}}(\alpha)=\indZero$.
  By Equation~\ref{eq:mqv:auth}
  \begin{eqnarray*}
    && x\eta + a\eta\gtoe{g^x} + x\beta\gtoe{g^\eta} +
    a\beta\gtoe{g^x}\gtoe{g^\eta}  \nonumber \\
    & = & \psi y + \psi b\gtoe{g^y} + \alpha y\gtoe{g^\psi} + 
    \alpha b\gtoe{g^y}\gtoe{g^\psi} 
  \end{eqnarray*}
  The Greek letters may be compound expressions.  Thus, $\alpha
  y\gtoe{g^\psi}$ and $\alpha b\gtoe{g^y}\gtoe{g^\psi}$ may each yield
  a number of monomials when reduced to normal forms.  However,
  because $\alpha$ has indicator $\indZero$, and the boxed terms have
  indicator $\indZero$, monomials resulting from $\alpha
  y\gtoe{g^\psi}$ all have indicator $\seq{0,0,0,1}$.
  Monomials resulting from $\alpha b\gtoe{g^y}\gtoe{g^\psi}$ all have
  indicator $\seq{0,1,0,0}$.

  When the LHS normalizes, no monomial on the LHS can have indicator %
  $\seq{0,1,0,0}$ or  $\seq{0,0,0,1}$ %
  because each one has a factor of $x$ or $a$.
  So, the last two summands on the RHS cannot contribute any monomials
  to the normal form.

  By Lemma~\ref{lemma:session:keys}, the LHS has non-zero
  contributions of $a$ and $x$.  Hence, $\psi$ must have non-zero
  contributions of them.  

%   
%   , since any resulting monomial definitely has
%     indicator $\seq{0,1,0,0}$ or $\seq{0,0,0,1}$ since $\alpha$ has
%     $\indZero$.  Any monomial resulting from a summand on the LHS must
%     have a contribution of either $x$ or $a$.  By
%     Lemma~\ref{lemma:session:keys}, in order to satisfy
%     (\ref{eq:mqv:auth}),
%   %
%     \begin{equation}
%       \label{eq:mqv:indicators}
%         \seq{1,0,0,0},\seq{0,0,1,0}\in\Ind_{\seq{a,b,x,y}}(\psi).
%     \end{equation}
%   %

  We write $\psi$ as the sum $\psi=\psi_{nz}+\psi_0$, where
  $\psi_{nz}$ collects all the monomials in $\psi$ with non-zero
  indicators, and $\psi_0$ collects all those with indicator
  $\indZero$.

  In particular, $\psi_0 y$ must cancel $\alpha y\gtoe{g^\psi}$, and
  $\psi_0 b\gtoe{g^y}$ must cancel $\alpha b\gtoe{g^y}\gtoe{g^\psi}$.
  Each of these leads to the conclusion:
  \begin{align} \label{kaliski-eqn}
    -(\psi_0/\gtoe{g^\psi})=\alpha.
  \end{align}
  
  Hence the normal form of $\psi_0$ must be some $\phi\gtoe{g^\psi}$, so
  that $\gtoe{g^\psi}$, which has occurrences of $x$, can syntactically
  cancel.  Hence, the normal form of $\psi$ is
  $$\psi_{nz}+\phi\gtoe{g^\psi},$$
  contradicting the well-foundedness of the syntactic terms.  
\end{proof}
} % end long def 
\vlong{ \thmmqvimplauth } 
\vlong{The preceding analysis sheds some light on Kaliski's
  attack~(\ref{eq:kaliski}) on MQV.  There, equation~\ref{kaliski-eqn}
  holds with $\psi_0=-1$ and $\alpha=\gtoe{g^\psi}^{-1}$.  However, we
  here have the additional assumption above that $\bnd$ respects
  ephemerals: since $s_1,s_2$ is not a doping pair, $\alpha$ can have
  no occurrence of $x$, but as we have observed, $\psi$ must.}

The interesting approaches to preventing the Kaliski attack---that is,
to ensure that executions respect ephemerals---involve time and
causality.  Suppose that the $\CA$ always takes at least a minimum
time $t_C$ between receiving a certification request and issuing the
certificate.  Moreover, the initiator always times out and discards a
session if it does not complete within a period $t_I$, where
$t_I<t_C$.  For instance, if $T_C$ is an hour and $t_I$ is a half
hour, this approach would be practically workable.  No synchronization
between different principals is required for this, since each
participant makes purely local decisions about timing.  Non-malicious
sessions would be entirely unaffected.  Then, in any completed
session, no certified value can involve an ephemeral in that session,
since it cannot yet have been generated at the time the value was
certified.

\iffalse 
Another way to prevent this is to have the participants pre-commit.
Before the {\iadh} protocol, the participants exchange hashed messages
$H(R_A)$ and $H(H(R_A),R_B)$.  They then refuse to complete the key
computation unless the committed values were received.  Then the
adversary cannot calculate $R_E$ until it is too late to precommit
$H(R_E)$.  Here also the solution involves creating a timing
constraint, although this one is a non-metric ordering constraint.
This approach succeeds because of (\ref{eq:mqv:indicators}), which
establishes that $\psi$ must involve both $a$ and $x$ in an attack.
However, if the adversary must commit to a choice of $g^\psi$ before
$g^x$ is revealed, this is impossible.  Note that we derived
(\ref{eq:mqv:indicators}) before using the assumption that $\bnd$
respects ephemerals, so pre-commitment suffices even without any other
mechanism to ensure that ephemerals are respected.
\fi   

% LocalWords:  LocalWords monomials monomial ephemerals summand summands 

%%% Local Variables: 
%%% mode: latex
%%% TeX-master: "root"
%%% End: 

%\newpage \input{mqv}
%% Time-stamp: <2012-02-09 16:45:27 guttman>
%% 

\section{Uniform Equality and the Completeness of \dhthy}
\label{sec:algebra}
\label{sec:axiomatization}
In this section we justify the use of \dhthy, specifically the use of
\dhthy-normal forms to model messages.  Any theorem of \dhthy surely
holds in all DH-structures.  Theorem~\ref{tfae} gives us the
converse, namely that every equation that holds in all DH-structures
is a theorem of \dhthy.  Indeed, given a non-principal ultraflter $D$
over the set of primes, there is a single structure
{\ultraD} that is ``generic'' for all of the DH-structures:  An
equation $s=t$ is valid in {\ultraD} if and only if it is valid in
infinitely many DH-structures.

We work first with models of the language of \dhthy but with the
\boxfn removed from the signature.  They have all the structure
required to analyze $\um$ and $\cremers$.  We then extend our results
to DH-structures equipped with a \boxfn function.

Algebraically isomorphic structures can have very different
\emph{computational} properties.  Indeed, the prime field $\fieldq$
presented as the group of integers mod $q$ induces a DH-structure
where the base group is the \emph{additive} group of $\fieldq$ and
exponentiation is multiplication.  The discrete log problem in this
structure is computationally tractable.  However, $\fieldq$ is
isomorphic to a subgroup of order $q$ of the \emph{multiplicative}
group of integers modulo some prime $p$.  There, the discrete log
problem may be intractable.  Although the algebra is blind to the
computational distinctions, we focus here on the algebraic equations
between terms in DH-structures.

First, we show that the field of scalars, i.e.~the exponents, carries
all the {algebraic} information in a model of \dhthy.

\begin{definition}\label{model-from-field}
  Let $F$ be a field.  We construct a \boxfn-free model $\mdlF$ of
  theory \dhthy as follows.
  The sorts $E$ and $G$ are each interpreted as the domain of $F$; the
  sort $NZE$ is interpreted as the set of non-0 elements of $E$.  The
  operations of $E$ are interpreted just as in $F$ itself.  The
  operation $\gop$ is taken to be $\eadd$ from $E$, thus $\gid$ and
  $\ginv$ are taken to be {0 and $\eneg$}.  Exponentiation is
  multiplication:  $a^e$ is interpreted as $a \emult e$.
\end{definition}
For each field $F$, any $\mdlF$ satisfies all of the equations in
$\dhthy$.  When $F$ is the prime field of order $q$ then $\mdlF =
\mdl_{\fieldq}$ is, up to isomorphism, precisely the standard DH
algebra of order $q$. %
When $F$ is the additive group of rational numbers then $\mdlF =
\mdlQ$ will be of interest to us below.

The key device for reasoning about uniform equality across DH-structures
is the notion of \emph{ultraproduct}, cf.~e.g.~\cite{chang_model-theory}.  We
let the variable $D$ range over non-principal ultrafilters over the set
of prime numbers.

\begin{definition} \label{def:std-models} Let $D$ be a non-principal
  ultrafilter over the set of prime numbers and let %
  $\fieldprod$ be the ultraproduct structure %
  \mbox{$\prod_D \dset{\fieldq}{q \text{ prime}}.$}
  $\mdl_{\fieldprod}$ is the DH structure obtained from $\mathbb{F}_D$
  via Definition~\ref{model-from-field}.  For simplicity we write
  $\ultraD$ for $\mdl_{\fieldprod}$.
\end{definition}

The crucial facts about ultraproducts for our purposes are: (i) a
first-order sentence is true in an ultraproduct if and only if the set
of indices at which it is true is a set in $D$; (ii) when $D$ is
non-principal, every cofinite set is in $D$.  We show below that the
set of \emph{equations} valid in $\ultraD$ does {not} depend on which
non-principal $D$ we use.

% Although the \emph{model} \ultraD depends on the choice of
% ultrafilter $D$, the set of \emph{sentences} true in it does
% \emph{not} depend on which non-principal $D$ we choose.  Thus, we
% assume a fixed, unexamined, $D$.

\long\def\algebralemmas{
$\fieldprod$ is a field, since each $\fieldq$ satisfies the first-order
axioms for fields.  \fieldprod has characteristic 0, since each equation 1
+ 1 + \dots + 1 = 0 is false in all but finitely many $\fieldq$.
Indeed, it is false in all but one $\fieldq$.

\begin{lemma} \label{embedQ}
  The structure %
  \mdlQ can be embedded as a submodel in any  $\ultraD$.
\end{lemma}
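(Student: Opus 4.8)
The plan is to reduce the statement to a purely field-theoretic fact, exploiting the fact that the construction in \refdef{model-from-field} is functorial: a homomorphism of fields induces a homomorphism of the associated \boxfn-free models, and an injective field homomorphism induces an embedding. As recorded just above, $\fieldprod$ is a field of characteristic $0$, and both $\mdlQ$ and $\ultraD$ arise by \refdef{model-from-field} from the fields $\rat$ and $\fieldprod$ respectively. So it suffices to produce a field embedding $j\colon\rat\to\fieldprod$ and then lift it.

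First I would construct $j$. Since $\fieldprod$ has characteristic $0$, the unique ring homomorphism $\Int\to\fieldprod$ carrying $1$ to the multiplicative identity is injective. Because $\fieldprod$ is a field, the image of every nonzero integer is invertible, so this map extends to $j\colon\rat\to\fieldprod$ with $j(m/n)=(m\gop 1)\emult(n\gop 1)^{-1}$, where I write the integer multiples using the field operations; this is well defined because $n\gop 1\neq 0$ whenever $n\neq 0$, and it is injective since any homomorphism out of a field is injective. Thus $\rat$ is, up to isomorphism, the prime subfield of $\fieldprod$.

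Next I would lift $j$ to the \dhthy-structures. In both $\mdlQ$ and $\ultraD$ the carriers of $G$ and $E$ are the underlying field, the sort $\NZE$ is the set of nonzero field elements, and every operation---$\gop,\gid,\ginv$ on $G$; $\eadd,\emult,\eone,\einv,\nzmult$ on $E$ and $\NZE$; and exponentiation $a^e=a\emult e$---is defined purely from the field operations. Define $\iota$ to act as $j$ on both the $G$-component and the $E$-component. Because $j$ is a field homomorphism it commutes with each such definition: for instance $\iota(a^e)=j(a\emult e)=j(a)\emult j(e)=\iota(a)^{\iota(e)}$, and for $u$ of sort $\NZE$, $\iota(\einv(u))=j(u^{-1})=j(u)^{-1}=\einv(\iota(u))$; the cases for $\gop,\ginv,\nzmult$ are identical. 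Since $j$ is injective and $j(0)=0$, the map $\iota$ is injective and sends nonzero rationals to nonzero elements of $\fieldprod$, so it respects the subsort $\NZE$. Hence $\iota$ is an embedding of \dhthy-structures, and identifying $\mdlQ$ with its image exhibits it as a submodel of $\ultraD$.

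The only step requiring care is the routine bookkeeping that $j$ transports all the sorted operations correctly---in particular that it carries $\NZE$ into $\NZE$ and commutes with $\einv$ and with exponentiation---but each of these is immediate from $j$ being an injective field homomorphism. There is therefore no genuine obstacle; the essential content is simply that $\fieldprod$ has characteristic $0$ and so contains a copy of $\rat$, together with the functoriality of \refdef{model-from-field}.
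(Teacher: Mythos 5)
Your proof is correct and follows essentially the same route as the paper's: the paper likewise notes that $\fieldprod$ has characteristic $0$, embeds $\rat$ as its prime field, and then observes that since $\mdlQ$ and $\ultraD$ are definitional expansions of $\rat$ and $\fieldprod$ (via \refdef{model-from-field}), the field embedding lifts to an embedding of \dhthy-structures. You have merely spelled out explicitly the two steps the paper cites as standard---the construction of the prime-field embedding and the verification that it commutes with the defined operations---which is fine but adds no new idea.
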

\begin{proof}
  Since $\fieldprod$ has characteristic 0, and $\rat$ is the prime
  field of characteristic 0, $\rat$ is embeddable in $\fieldprod$.  The
  models $\ultraD$ and $\rat$ are definitional expansions of
  \fieldprod and \rat, so the embedding of \rat into $\fieldprod$ extends
  to embed $\mdlQ$ into $\ultraD$.
\end{proof}

\begin{lemma} \label{rat-distinct} Let $t:G$ be in normal form, in the
  \boxfn-free sublanguage of \dhthy.    There exists an environment 
%  $\eta : \vars \to (\rat\setminus \set{0})$ such that
  $\eta : \vars \to \rat$ such that if $u$ and $u'$ are distinct
  subterms of $t$, $\eta (u) \neq \eta(u')$ in $\mdlQ[D]$.
\end{lemma}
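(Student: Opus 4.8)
The plan is to argue by \emph{genericity}. Write $v_1,\dots,v_n$ for the variables occurring in $t$, treated now as indeterminates, and recall that in $\mdlQ$ the operation $\gop$ is addition, $\ginv$ is negation, $\gid$ is $0$, exponentiation $a^e$ is the product $a\emult e$, while on the exponent sort $\eadd,\emult,\einv$ are ordinary addition, multiplication, and reciprocal. Consequently, under the generic environment sending each $v_i$ to the indeterminate $V_i$, every subterm $u$ of $t$ evaluates to a rational function $f_u$ in $\mathbb{Q}(V_1,\dots,V_n)$: every $E$-subterm evaluates to a Laurent polynomial in the exponent indeterminates, and every $G$-subterm to an integer combination of Laurent monomials each of which is a base indeterminate times a Laurent monomial in the exponent indeterminates.

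First I would establish the key fact that distinct subterms of $t$ give distinct rational functions, i.e.\ $u\neq u'$ implies $f_u\neq f_{u'}$. Since any subterm of a normal form is again a normal form, this reduces, via the taxonomy of normal forms, to two observations. On the exponent sort, clause (ii) of the taxonomy (no factor is the inverse of another) guarantees that an irreducible monomial evaluates to a Laurent monomial whose exponent vector records its factors without cancellation; hence distinct irreducible monomials yield distinct Laurent monomials, which are $\mathbb{Q}$-linearly independent, so distinct $E$-normal forms---being distinct multisets of such monomials---yield distinct Laurent polynomials. On the group sort, the condition that no factor $t_i$ is the $\ginv$ of another prevents the summands $\pm v_i\emult e_i$ from cancelling, so distinct $G$-normal forms give distinct integer combinations of the pairwise distinct Laurent monomials $V_i\cdot M_{e_i}$. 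Finally a $G$-subterm is homogeneous of degree one in the base indeterminates whereas an $E$-subterm has degree zero in them, so the two families cannot collide. I expect this step to be the main obstacle, precisely because it depends on the sort discipline keeping base variables and exponent variables apart: it is this separation, together with taxonomy clause (ii), that rules out spurious coincidences such as identifying a factor with its inverse or confusing a base occurrence of a variable with an exponent occurrence.

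With distinctness in hand, the conclusion follows by a standard avoidance argument over the infinite field $\rat$. For each of the finitely many pairs of distinct subterms the difference $f_u-f_{u'}$ is a nonzero rational function; clearing denominators (which are Laurent monomials) yields a nonzero polynomial. A nonzero polynomial over the infinite field $\rat$ does not vanish identically, and $\mathbb{Q}^{n}$ is not covered by finitely many proper algebraic subsets, so I can choose rational values for $V_1,\dots,V_n$---all nonzero, so that every reciprocal $\einv$ is defined and no Laurent-monomial denominator vanishes---at which all these difference polynomials are simultaneously nonzero. Setting $\eta(v_i)$ to these values yields an environment with $\eta(u)\neq\eta(u')$ for all distinct subterms $u,u'$, as required. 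The evaluation takes place in $\mdlQ$, and by the embedding of \reflem{embedQ} the same inequalities hold for its image inside $\ultraD$.
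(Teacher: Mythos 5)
Your proof is correct and follows essentially the same route as the paper's: interpret the subterms of $t$ in $\mdlQ$ as rational functions over $\rat$ (with $\gop$ as addition and exponentiation as multiplication), and then pick a rational point avoiding the finitely many proper algebraic subsets where two of these functions agree or a denominator (or an $NZE$ value) vanishes. The only difference is one of rigor: you explicitly establish, via the taxonomy of normal forms, that distinct subterms induce distinct rational functions --- the step where normality of $t$ actually enters --- whereas the paper's proof uses this fact tacitly.
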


\begin{proof}
 In the structure \mdlQ, exponentiation is interpreted as
  multiplication, so it suffices to consider the expression obtained by
  replacing %
  $\gop$ and $\ginv$ by $+$ and $-$, and the exponentiation operator by
  $\emult$, and viewing $t$ as an ordinary rational expression in
  several variables $x_1, \dots, x_k$ (the variables occurring in $t$).
  We may view $t$ as determining a real function $ f_t : \Rea^{k} \to
  \Rea$.  In fact each subterm $t'$ of $t$ similarly determines a
  function from $\Rea^{k}$ to $\Rea$ (not all variables of $t$ will
  occur in all subterms, but we may still treat each as inducing a
  $k$-ary function).   
  So the family of subterms of $t$ determines a (finite) set of rational
  functions, and we can find a rational point $\vec{r} = (r_1, ...,
  r_k)$ such that no two of these functions agree on $\vec{r}$.
We define $\eta$ to map each $x_i$ to $r_i$.
\end{proof}
}% 

\vlong{ \algebralemmas }

\vshort{Using the model $\mdlQ$ over the rationals $\mathbb{Q}$, we
  obtain:}

\begin{corollary} \label{rat-complete}
  If $s$ and $t$ are distinct normal forms then it is not the case that
  $\mdlQ \models s=t$.
\end{corollary}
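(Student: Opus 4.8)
The plan is to show that, in $\mdlQ$, the interpretation of box-free normal forms is injective as a map into rational functions, and then to separate two distinct such functions at a rational point, reusing the genericity argument behind \reflem{rat-distinct}. It suffices to treat $s,t$ of sort $G$: an equation relates terms of a single sort, and the sort-$E$ case is handled by the same reasoning with sums of exponent monomials in place of products of group factors (and is in fact simpler, since there exponentiation plays no role).

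First I would unfold the interpretation. In $\mdlQ$ both $G$ and $E$ are carried by $\rat$, with $\gop$ interpreted as $\eadd$, $\ginv$ as $\eneg$, and exponentiation as $\emult$. By the taxonomy of normal forms established after \refthm{thm:sn+cr}, a $G$-normal form is a product of factors each of the shape $v$, $\ginv(v)$, $v^e$, or $\ginv(v^e)$, with $v$ a variable and $e$ an irreducible monomial; in the box-free fragment each such $e$ is a product of exponent variables and their $\einv$. Hence, for an environment $\eta$, the value $\eta(s)$ is a sum $\sum_i \pm\,\eta(v_i)\cdot\eta(e_i)$ in which every summand is linear in a single group variable $v_i$ and $\eta(e_i)$ is a Laurent monomial in the exponent variables. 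Thus $s$ determines, independently of $\eta$, a fixed formal sum of signed Laurent monomials, one per factor.

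The crux is that $s\mapsto(\text{this formal sum})$ is injective, and here each clause of the normal-form taxonomy earns its keep. No factor is the $\ginv$ of another, so no two summands are additive inverses that would cancel; no $\gid$- or $a^0$-factor survives, so no summand is the zero monomial; and no $\einv$ is nested over another monomial factor, so each $\eta(e_i)$ is a genuinely reduced Laurent monomial. Collecting equal monomials, the coefficient of a monomial records exactly the signed count of the corresponding factor, from which the multiset of factors can be reconstructed. Consequently distinct normal forms $s\neq t$ yield distinct formal sums, and since distinct Laurent monomials in the group and exponent variables are linearly independent as functions on $\rat$, the difference $f_s-f_t$ is a \emph{nonzero} rational function.

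Finally I would separate the two. A nonzero rational function over the infinite field $\rat$ cannot vanish identically, so there is an environment $\eta:\vars\to\rat$ avoiding its finitely many poles at which $f_s(\eta)\neq f_t(\eta)$; this is precisely the genericity step used to prove \reflem{rat-distinct}, and one could invoke that lemma's argument verbatim. Then $\eta(s)\neq\eta(t)$ in $\mdlQ$, so $\mdlQ\not\models s=t$. I expect the injectivity step to be the main obstacle: one must confirm that collapsing $G$, $E$, and exponentiation into a single field with multiplication creates no accidental coincidences, which is exactly what the inverse-freeness and irreducibility conditions of the normal-form taxonomy guarantee.
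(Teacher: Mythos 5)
Your proof is correct, but it is organized differently from the paper's, and the comparison is instructive. The paper's proof is a two-line reduction: form $u \equiv s \gop \ginv(t)$, assert that $u$ is in normal form because $s$ and $t$ are, and apply \reflem{rat-distinct} to obtain an environment separating the distinct subterms $s$ and $t$ of $u$. You instead prove injectivity of the interpretation into rational functions directly --- the signed Laurent-monomial coefficients of $f_s$ record the multiset of factors of $s$, with inverse-freeness excluding cancellation between summands and irreducibility of the exponent monomials excluding collisions --- so that $s \neq t$ forces $f_s - f_t$ to be a nonzero rational function, which you then separate at a rational point exactly as in the lemma's genericity step. What your longer route buys is rigor at the two places where the paper is loose. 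First, the paper's claim that $s \gop \ginv(t)$ is in normal form is false in general: if $t$ has several factors then $\ginv(t)$ contains the redex $\ginv(a \gop b)$, and even for a single-factor $t$ a factor shared with $s$ cancels modulo AC (take $s = v \gop w$ and $t = v$); repairing the reduction amounts to showing that the normal form of $s \gop \ginv(t)$ is not the identity element whenever $s \neq t$, which is your injectivity claim in disguise. Second, the proof of \reflem{rat-distinct} itself tacitly assumes that distinct subterms of a normal form induce pairwise distinct rational functions --- otherwise no separating point $\vec{r}$ can exist --- and that assumption is again exactly what your coefficient-counting argument establishes. So your proof does more work than the paper's, but the extra work is precisely the content that the paper's reduction presupposes rather than proves.
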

\long\def \vlratdistinct {\begin{proof}
  Form the term $u \equiv s \gop \ginv(t)$.  Since $s$ and $t$ are
  distinct normal forms this term is in normal form.  By
  Lemma~\ref{rat-distinct} there is an environment $\env$ with $\env(s)
  \neq \env(t)$, and the result follows.
\end{proof}}
\vlong{\vlratdistinct}

\dhthy is complete for uniform equality in the absence of the
\boxfn-function:
\begin{theorem} \label{tfae}
  For each pair of $G$-terms $s$ and $t$ in the \boxfn-free fragment of \dhthy, 
  the following are equivalent
  \begin{enumerate}
  \item  \label{pvble} %
    $\dhthy \vdash s = t$
  \item \label{idbox} % 
   For all  $q$, $\mdl_{\fieldq} \models s = t$
   \item \label{ultra:all} %
   For all non-principal $D$, $\ultraD\models s = t$
   \item \label{ultra:some} %
   For some non-principal $D$, $\ultraD\models s = t$
%   
%     \item \label{io} % 
%       for infinitely many $q$,  $\mdl_{\fieldq} \models s = t$ 
%     \item \label{ultra} % 
%       $\ultraD \models s= t$
  \item \label{ratbox} % 
    $\mdlQ \models s=t$
  \item \label{nf} % 
    if $s$ reduces to $s'$ with $s'$ irreducible, and
    $t$ reduces to $t'$ with $t'$ irreducible, then $s'$ and $t'$ are
    identical modulo associativity and commutativity of $\gop$, $\eadd$,
    and $\emult$.
  \end{enumerate}
\end{theorem}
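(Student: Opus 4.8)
The plan is to prove all six conditions equivalent by closing one cycle of implications,
\[
(\ref{pvble}) \Rightarrow (\ref{idbox}) \Rightarrow (\ref{ultra:all})
\Rightarrow (\ref{ultra:some}) \Rightarrow (\ref{ratbox}) \Rightarrow
(\ref{nf}) \Rightarrow (\ref{pvble}),
\]
so that each condition entails every other. The two syntactic endpoints are cheap and are handled by \refthm{thm:sn+cr}. For $(\ref{nf}) \Rightarrow (\ref{pvble})$: if $s$ and $t$ have a common normal form modulo AC, then since every rewrite step is an instance of an equation derivable in \dhthy, we get $\dhthy \vdash s = t$. Conversely, the convertibility relation of $\to_{\dhthy}$ presents exactly \dhthy-provable equality, and by termination and confluence modulo AC this convertibility holds iff the normal forms agree modulo AC; hence $(\ref{pvble}) \Rightarrow (\ref{nf})$. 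So $(\ref{pvble})$ and $(\ref{nf})$ are interderivable, and the work is to route the model-theoretic conditions between them.

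For the remaining arcs I would argue as follows. $(\ref{pvble}) \Rightarrow (\ref{idbox})$ is soundness: by \refdef{model-from-field} every $\mdl_{\fieldq}$ validates all equations of \dhthy, so whatever is provable holds in each of them. $(\ref{ultra:all}) \Rightarrow (\ref{ultra:some})$ is immediate, since non-principal ultrafilters over the primes exist. For $(\ref{ultra:some}) \Rightarrow (\ref{ratbox})$ I would use \reflem{embedQ}: \mdlQ embeds into any \ultraD, and equations, being universal sentences, pass to substructures, so $\ultraD \models s = t$ forces $\mdlQ \models s = t$. Finally, $(\ref{ratbox}) \Rightarrow (\ref{nf})$ is the contrapositive of \refcor{rat-complete}: reducing $s,t$ to normal forms $s',t'$ is harmless because \mdlQ validates \dhthy, so $\mdlQ \models s = t$ iff $\mdlQ \models s' = t'$; were $s'$ and $t'$ distinct modulo AC, \refcor{rat-complete} would yield $\mdlQ \not\models s' = t'$, a contradiction.

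The one step needing genuine care---and the main obstacle---is $(\ref{idbox}) \Rightarrow (\ref{ultra:all})$, where the ultraproduct does its work. The key observation is that the assignment $F \mapsto \mdlF$ of \refdef{model-from-field} is purely definitional over $F$: the carriers of $G$ and $E$ are copies of $F$, while $\gop,\ginv,\gid$ and exponentiation are the first-order-definable operations $\eadd$, $\eneg$, $0$, and $\emult$. I would make explicit the induced translation of a $G$-equation $s = t$ into a fixed field equation $s^{\ast} = t^{\ast}$, uniform in $F$, with $\mdlF \models s = t$ iff $F \models s^{\ast} = t^{\ast}$. Then $(\ref{idbox})$ gives $\fieldq \models s^{\ast} = t^{\ast}$ for every prime $q$, so the index set of validity is all of the primes and thus lies in every $D$; by \L o\'s's theorem $\fieldprod \models s^{\ast} = t^{\ast}$, whence $\ultraD = \mdl_{\fieldprod} \models s = t$ for every non-principal $D$. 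Equivalently, one verifies $\mdl_{\prod_D \fieldq} \cong \prod_D \mdl_{\fieldq}$ and applies \L o\'s directly to the DH-structures. The delicate part is exactly this commutation of the construction $\mdl_{(-)}$ with the ultraproduct; once it is secured, the remainder of the cycle is routine.
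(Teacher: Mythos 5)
Your proposal is correct and takes essentially the same route as the paper's proof: the identical cycle $(\ref{pvble})\Rightarrow(\ref{idbox})\Rightarrow(\ref{ultra:all})\Rightarrow(\ref{ultra:some})\Rightarrow(\ref{ratbox})\Rightarrow(\ref{nf})\Rightarrow(\ref{pvble})$, with the embedding of \mdlQ into \ultraD discharging $(\ref{ultra:some})\Rightarrow(\ref{ratbox})$ and Corollary~\ref{rat-complete} discharging $(\ref{ratbox})\Rightarrow(\ref{nf})$. The only difference is expository: you spell out the {\L}o\'s-theorem and definitional-interpretation details behind $(\ref{idbox})\Rightarrow(\ref{ultra:all})$, a step the paper simply declares immediate.
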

\begin{proof}
  It suffices to establish the cycle of entailments %
  \ref{pvble} implies 
  % \ref{anybox} implies 
  \ref{idbox} \dots{} implies \ref{nf} implies \ref{pvble}.  %
  The first three of these steps are immediate, %
  as is the fact that \ref{nf} implies \ref{pvble}.  %
  The fact that \ref{ultra:some} implies \ref{ratbox} follows
  \vlong{from Lemma~\ref{embedQ}}\vshort{by considering $\mdlQ$}.  To
  conclude \ref{nf} from \ref{ratbox}, use
  Corollary~\ref{rat-complete}.
%
  % suppose that $\mdlQ \models s=t$ and that $s$ and $t$ reduce to
  % irreducible terms $s'$ and $t'$ respectively.  Form the term %
  % $u \equiv s' \gop \ginv(t')$.  By hypothesis %
  % $\mdlQ \models u = \gid$.  By \reflem{rat-complete}, any irreducible
  % form of $s' \gop \ginv(t')$ is $\gid$.  Since $s'$ and $t'$ are
  % themselves irreducible it must be the case that $s' \equiv t'$.
\end{proof}
As a corollary of \refthm{tfae}, these equivalences hold for $E$-term
equations as well.  Given terms $e$ and $e'$, form the equation $g^e =
g^{e'}$.  It is provable iff $e=e'$ is provable, and is true in a
given model \mdl\ iff $e=e'$ is.

\begin{corollary}\label{cor:inf:many:means:all} 
  If $\mdl_{\fieldq} \models s = t$ holds for infinitely many $q$,
  then for all $q$, $\mdl_{\fieldq} \models s = t$.  
\end{corollary}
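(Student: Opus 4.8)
The plan is to deduce this directly from the equivalences in \refthm{tfae}, turning the hypothesis ``infinitely many $q$'' into the single witness required by item~\ref{ultra:some}.

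First I would put $I = \dset{q \text{ prime}}{\mdl_{\fieldq} \models s = t}$, which is infinite by assumption, and aim to build a non-principal ultrafilter $D$ over the primes with $I \in D$. To this end I would take the family consisting of $I$ together with all cofinite sets of primes and check that it has the finite intersection property: a finite intersection of cofinite sets is cofinite, and intersecting it with $I$ removes only finitely many elements from the infinite set $I$, leaving a nonempty (indeed infinite) set. Hence this family extends to an ultrafilter $D$. Since $D$ contains every cofinite set it cannot be principal (a principal ultrafilter concentrated at a prime $p$ omits the cofinite set of all other primes), so $D$ is non-principal, and $I \in D$ by construction.

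Next I would apply fact~(i) about ultraproducts. Read as a statement in a model, the equation $s = t$ is its universal closure $\all \vec{x} \st s = t$, a genuine first-order sentence. The set of indices $q$ at which this sentence holds is exactly $I$, and $I \in D$, so by fact~(i) the sentence holds in the ultraproduct; that is, $\ultraD \models s = t$. This is precisely item~\ref{ultra:some} of \refthm{tfae}.

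Finally, \refthm{tfae} gives the equivalence of item~\ref{ultra:some} with item~\ref{idbox}, i.e.\ with the assertion that $\mdl_{\fieldq} \models s = t$ for all primes $q$, which is the desired conclusion. The only points needing care are the finite intersection property (where the infiniteness of $I$ is exactly what makes the argument go through) and the observation that a term equation corresponds to a first-order sentence so that fact~(i) applies; both are routine, and no model theory beyond \refthm{tfae} is needed once the ultrafilter containing $I$ has been produced.
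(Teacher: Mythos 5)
Your proposal is correct and takes essentially the same route as the paper's proof: extract the infinite index set, place it in a non-principal ultrafilter $D$, conclude $\ultraD \models s = t$, and finish by the implication (\ref{ultra:some})$\Rightarrow$(\ref{idbox}) of \refthm{tfae}. The only difference is one of detail: the paper simply asserts that a non-principal ultrafilter containing the infinite set exists and that item~\ref{ultra:some} then holds, whereas you spell out the finite-intersection-property construction and the appeal to fact~(i) about ultraproducts, both of which are exactly the justifications the paper leaves implicit.
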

\begin{proof}
  Suppose that $\{q\colon \mdl_{\fieldq} \models s = t\}$ is infinite.
  Then there is a non-principal ultrafilter $D$ containing this set.
  So (\ref{ultra:some}) in Thm.~\ref{tfae} holds, and we apply
  (\ref{ultra:some})$\Rightarrow$(\ref{idbox}).
\end{proof}

The equivalence of $\dhthy$-provability with equality in the models 
% $\mdl_{\fieldq}$ (items \ref{idbox} and \ref{io} in \refthm{tfae}) 
is the technical core of our claim that $\dhthy$ captures ``uniform
equality.''

The model $\mdlQ$ is convenient: this single model, based on a
familiar structure, serves to witness uniform equality simplifies
analyses.  Our first analysis of {\mqv} used this.

% The theory of fields has no equational presentation, and there are no
% free fields (every non-trivial field homomorphism is a monomorphism).
% The theory \dhthy has free models, namely its term-models, but these
% are not fields.  So the field \mdlQ serves as a stand-in for a field
% ``term-model,'' as it witnesses precisely the equations that hold in
% all finite fields.

The model \ultraD satisfies an even more striking property.  It follows
from results of Ax~\cite{ax_elementary-theory} that the first-order
theory of \ultraD is decidable.  So the structure $\ultraD$ is an
attractive one for closer study of the ``uniform'' properties of
DH-structures.

\paragraph{Incorporating \boxfn.}
An analogue of Theorem~\ref{tfae} holds for the full language of
\dhthy, the language appropriate for reasoning about \mqv.
\vlong{The starting point is like Lemma~\ref{rat-distinct}}.

\long\def \vlelimbox{
\begin{lemma} \label{elim-box}
 Let $t:G$ be in normal form, in the language of \dhthy.    There exists 
  an interpretation of the \boxfn function and an environment \env such
  such that
  if $u$ and $u'$ are distinct subterms of $t$, $\eta (u) \neq \eta(u')$ in $\mdlQ$.
\end{lemma}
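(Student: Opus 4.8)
The plan is to extend the argument of Lemma~\ref{rat-distinct} by treating the box-subterms of $t$ as a fresh supply of mutually independent variables, thereby reducing the claim to the box-free situation already handled for $\mdlQ$. First I would collect the distinct box-subterms $\gtoe{s_1}, \dots, \gtoe{s_m}$ occurring in $t$ (each $s_i : G$ is itself a normal form and a subterm of $t$), ordered so that whenever $\gtoe{s_i}$ occurs inside $s_j$ we have $i < j$; such an order exists because the subterm relation is well-founded. I introduce fresh indeterminates $w_1, \dots, w_m$ and replace every occurrence of $\gtoe{s_i}$ by $w_i$. Recalling that in $\mdlQ$ the operators $\gop, \ginv$ are interpreted as $+, -$ and exponentiation as multiplication, each subterm $u$ of $t$ then becomes an ordinary rational expression $f_u$ in the variables $x_1, \dots, x_k$ occurring in $t$ together with $w_1, \dots, w_m$, exactly as in the proof of Lemma~\ref{rat-distinct}. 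As there, the taxonomy of normal forms together with their uniqueness guarantees that distinct subterms yield distinct rational functions: reading off a $G$-normal form as a sum $\sum \pm v\,e$ and an $E$-normal form as a sum of Laurent monomials in the atoms $x_i, w_j$ recovers the normal form up to associativity and commutativity, so $u \neq u'$ forces $f_u \neq f_{u'}$. The key observation is that introducing fresh, mutually independent variables for the box-terms can only refine distinctness, never collapse two previously distinct expressions.

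Next I would choose the evaluation point. The finitely many differences $f_u - f_{u'}$ (over distinct subterms $u, u'$) are nonzero rational functions; clearing denominators and multiplying yields a single nonzero polynomial over the infinite field $\rat$, so there is a point $(\vec r, \vec c) = (r_1,\dots,r_k, c_1,\dots,c_m) \in \rat^{k+m}$ at which it---and none of the denominators, nor any coordinate $c_j$---vanishes. I set $\env(x_i) = r_i$, so that $\env$ maps the variables of $t$ into $\rat$ and every subterm $u$ evaluates to $f_u(\vec r, \vec c)$, once the auxiliary values $c_j$ are realized by the interpretation of \boxfn.

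Finally I must realize the $c_j$ as genuine outputs of a box function. Since the box-arguments $s_1, \dots, s_m$ are distinct subterms, the previous step makes their values $\env(s_i) = f_{s_i}(\vec r, \vec c)$ pairwise distinct, and the chosen order guarantees that $\env(s_i)$ depends only on the already-fixed inner values $c_j$ with $j < i$, so no circularity arises. I therefore define $\gtoe{\cdot}$ on $\rat$ by $\gtoe{\env(s_i)} = c_i$ (well defined because the $\env(s_i)$ are distinct and each $c_i \neq 0$), extended to an arbitrary nonzero value elsewhere, which keeps $\gtoe{\cdot}$ valued in $\NZE$. With this interpretation $\gtoe{s_i}$ evaluates to $c_i$, matching the substitution $w_i \mapsto c_i$, so every subterm $u$ indeed evaluates to $f_u(\vec r, \vec c)$; and since $(\vec r, \vec c)$ avoids every zero-set $f_u = f_{u'}$, distinct subterms evaluate to distinct elements of $\rat$, as required.

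The main obstacle I expect is precisely this last reconciliation: \boxfn must be single-valued, yet I want the box-subterms to behave like independent variables. It is resolved by noting that distinct box-subterms have distinct arguments, so the pointwise definition of \boxfn is consistent, and by using the nesting order to dispel any apparent circularity in evaluating nested boxes. The secondary technical point is the injectivity of the read-off map from normal forms to rational functions in the enlarged variable set; I expect this to transfer directly from the box-free case, since the box-variables $w_j$ sit at the atomic level of the exponent monomials and so are governed by the same uniqueness-of-normal-forms argument.
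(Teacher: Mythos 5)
Your proposal is correct and takes essentially the same approach as the paper's proof: replace the box-subterms of $t$ by fresh variables so as to reduce to the box-free case (Lemma~\ref{rat-distinct}), then define \boxfn pointwise at the values of the box-arguments, using the injectivity of the environment on subterms to make this assignment well defined. The only difference is organizational---the paper proceeds by induction on the number of \boxfn-subterms, eliminating one innermost box at a time, whereas you unroll that induction into a single simultaneous substitution ordered by nesting (and you make explicit the requirement $c_j \neq 0$ so that \boxfn is $NZE$-valued, a detail the paper leaves implicit).
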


\begin{proof}
  The proof is by induction on the number of \boxfn-subterms of $t$.  If
  this number is 0 then we may apply Lemma~\ref{rat-distinct} and simply
  use the following simple \boxfn function: 
$ \gtoe{a} = {a}$  if $a \neq 0$ and $\gtoe{0} =1$.

  Otherwise let $\gtoe{s}$ be a subterm of $t$ such that $s$ is
  \boxfn-free.  Let $t'$ be the term obtained from $t$ by replacing each
  occurrence of $\gtoe{s}$ by a variable $v$ occurring nowhere in $t$. %
  Then $t'$ is in normal form, so by induction there is a function
  $\boxfn_{0}$  and an
  environment \env that acts as an injection over the subterms of $t'$.
  We may assume that \env is defined on all the variables of $t$ (even
  though some may not occur in $t'$).    We claim that we can define
  \boxfn so that the resulting function, taken with the same
  environment \env satisfies the Lemma.    We define $\boxfn$ to agree
  with $\boxfn_0$ on all values except $\env(s)$, where we put
  $\gtoe{\env(s)} = \env(v)$.    Since $\env$ is guaranteed to yield
  different values on distinct subterms of $t'$, the use of $\boxfn_0$
  will yield the same values as the use of $\boxfn$ on subterms of $t$
  other than $\gtoe{s}$.
\end{proof}

} %end vlong
\vlong{ \vlelimbox }
By an argument similar to that establishing Corollary~\ref{rat-complete}
we obtain
\begin{corollary} \label{rat-box-complete}
  $\dhthy \proves s=t$ iff for all \boxfn functions
  $\mdlQ \models s=t$.
\end{corollary}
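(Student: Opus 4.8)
The plan is to establish the biconditional by treating its two directions separately, modeling the completeness direction on the proof of Corollary~\ref{rat-complete} but invoking Lemma~\ref{elim-box} in place of Lemma~\ref{rat-distinct}.

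First I would dispatch the soundness direction. If $\dhthy \proves s = t$, then $s=t$ holds in every model of $\dhthy$. For any choice of \boxfn function, the structure $\mdlQ$ equipped with it is such a model: the box-free equations of Tables~\ref{eqthy} and~\ref{rrrules} hold by Definition~\ref{model-from-field} (as noted just after it), while \boxfn is subject to no equation of $\dhthy$ beyond the typing $G \to \NZE$, which every \boxfn function respects by construction. Hence $\mdlQ \models s=t$ for all \boxfn functions.

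For the converse I would argue by contraposition. Suppose $\dhthy \nvdash s=t$. Using Theorem~\ref{thm:sn+cr}, reduce $s$ and $t$ to normal forms $s'$ and $t'$; since $\dhthy \nvdash s=t$, these are distinct modulo associativity and commutativity. Exactly as in Corollary~\ref{rat-complete}, form $u = s' \gop \ginv(t')$, a normal form in which $s'$ and $t'$ occur as distinct subterms. Now apply Lemma~\ref{elim-box} to $u$: it supplies an interpretation of \boxfn together with an environment $\env$ that is injective on the distinct subterms of $u$, so that $\env(s') \neq \env(t')$ in $\mdlQ$. Because $\mdlQ$ under this \boxfn interpretation is a model of $\dhthy$ and reduction is sound, $\env(s) = \env(s')$ and $\env(t) = \env(t')$; therefore $\mdlQ, \env \not\models s=t$, exhibiting a \boxfn function under which $s=t$ fails.

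The routine content is the soundness direction and the bookkeeping of reductions; the only delicate points are (i) confirming that $u = s' \gop \ginv(t')$ keeps $s'$ and $t'$ available as distinct subterms for the injectivity argument, just as in Corollary~\ref{rat-complete}, and (ii) observing that the \boxfn interpretation extracted from Lemma~\ref{elim-box} still yields a model of $\dhthy$ --- which holds precisely because \boxfn carries no equational obligations. The genuinely substantial step, namely the inductive elimination of \boxfn-subterms, is already discharged by Lemma~\ref{elim-box}, which I take as given.
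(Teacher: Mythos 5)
Your proof is correct and follows exactly the route the paper intends: the paper derives Corollary~\ref{rat-box-complete} ``by an argument similar to that establishing Corollary~\ref{rat-complete},'' i.e.\ forming $u = s' \gop \ginv(t')$ and invoking Lemma~\ref{elim-box} in place of Lemma~\ref{rat-distinct}, with the soundness direction left implicit. Your write-up simply makes that soundness direction and the reduction bookkeeping explicit, which matches the paper's argument.
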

\noindent{}From this follows, finally: 

\begin{theorem} \label{full-tfae}
  For each pair of $G$-terms $s$ and $t$ in the full language of \dhthy
  the following are equivalent
  \begin{enumerate}
  \item  \label{pvble:full} %
    $\dhthy \vdash s = t$
  \item \label{idbox:full} % 
   For all  $q$ and all \boxfn functions on $\mdl_{\fieldq}$, $\mdl_{\fieldq} \models s = t$
   \item \label{ultra:all:full} %
   For all non-principal $D$, for all \boxfn functions on $\ultraD$,  $\ultraD\models s = t$
   \item \label{ultra:some:full} %
   For some non-principal $D$, and all \boxfn functions on
   $\ultraD$, $\ultraD\models s = t$
%   
%     \item \label{io} % 
%       for infinitely many $q$,  $\mdl_{\fieldq} \models s = t$ 
%     \item \label{ultra} % 
%       $\ultraD \models s= t$
  \item \label{ratbox:full} % 
    for all \boxfn functions on \mdlQ,    $\mdlQ \models s=t$
  \item \label{nf:full} % 
    if $s$ reduces to $s'$ with $s'$ irreducible, and
    $t$ reduces to $t'$ with $t'$ irreducible, then $s'$ and $t'$ are
    identical modulo associativity and commutativity of $\gop$, $\eadd$,
    and $\emult$.
  \end{enumerate}
%
%     \begin{enumerate}
%     \item  \label{pvble} %
%       $\dhthy \vdash s = t$
%     \item \label{idbox} % 
%      for all  $q$ and all \boxfn functions on $\mdl_{\fieldq}$, $\mdl_{\fieldq} \models s = t$ 
%     \item \label{io} % 
%       for infinitely many $q$,  for all \boxfn functions on $\mdl_{\fieldq}$,  $\mdl_{\fieldq} \models s = t$ 
%     \item \label{ultra} % 
%   for all \boxfn functions on \ultraD,     $\ultraD \models s= t$
%     \item \label{ratbox} % 
%   for all \boxfn functions on \mdlQ,     $\mdlQ \models s=t$
%     \item \label{nf} % 
%       if $s$ reduces to $s'$ with $s'$ irreducible, and
%       $t$ reduces to $t'$ with $t'$ irreducible, then $s'$ and $t'$ are
%       identical modulo associativity and commutativity of $\gop$, $\eadd$,
%       and $\emult$.
%     \end{enumerate}
\end{theorem}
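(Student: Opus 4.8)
The plan is to mirror the proof of \refthm{tfae}, establishing the cycle $(\ref{pvble:full})\Rightarrow(\ref{idbox:full})\Rightarrow(\ref{ultra:all:full})\Rightarrow(\ref{ultra:some:full})\Rightarrow(\ref{ratbox:full})\Rightarrow(\ref{pvble:full})$ and handling the normal-form clause $(\ref{nf:full})$ on its own. Clause $(\ref{nf:full})$ is equivalent to $(\ref{pvble:full})$ directly from \refthm{thm:sn+cr}: each rewrite rule is an oriented instance of a \dhthy-equation, so two terms reduce to a common normal form modulo associativity and commutativity exactly when \dhthy proves them equal. The step $(\ref{ratbox:full})\Rightarrow(\ref{pvble:full})$ is precisely the nontrivial direction of \refcor{rat-box-complete} (itself obtained from \reflem{elim-box}); this is the lever that lets the full-language statement rest on the \boxfn-free development, since \refcor{rat-box-complete} already absorbs the subtlety of quantifying over all \boxfn interpretations on \mdlQ.

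The remaining endpoint steps are cheap. For $(\ref{pvble:full})\Rightarrow(\ref{idbox:full})$ I would appeal to soundness: \dhthy constrains \boxfn only by its typing into the non-zero sort, imposing no equation on it, so \emph{every} assignment of non-zero exponents to group elements makes $\mdl_{\fieldq}$ a model of \dhthy, and a provable equation therefore holds under all \boxfn functions. The inclusion $(\ref{ultra:all:full})\Rightarrow(\ref{ultra:some:full})$ is immediate once a non-principal ultrafilter is fixed. For $(\ref{ultra:some:full})\Rightarrow(\ref{ratbox:full})$ I would push the embedding of \reflem{embedQ} back onto the \boxfn data: given a \boxfn function $B_0$ on \mdlQ, extend it along $\mdlQ\hookrightarrow\ultraD$ to a \boxfn function $B$ on \ultraD agreeing with $B_0$ on the image (and non-zero, but otherwise arbitrary, outside it). Since \mdlQ sits as a submodel, every subterm of $s$ and $t$ evaluated under an environment into \mdlQ stays inside the image and is computed identically on both sides; because an embedding of \dhthy-structures is injective and preserves the operations, $\ultraD\models_{B} s=t$ forces $\mdlQ\models_{B_0} s=t$, and $B_0$ was arbitrary.

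The substantive step, and the one I expect to be the main obstacle, is $(\ref{idbox:full})\Rightarrow(\ref{ultra:all:full})$. The difficulty is that a \boxfn function chosen on $\ultraD=\mdl_{\fieldprod}$ need not be the ultraproduct of \boxfn functions on the factors $\mdl_{\fieldq}$, so the characterization of truth in ultraproducts recalled above cannot be applied verbatim. I would fix this by a diagonal lifting. Fix a non-principal $D$, a \boxfn function $B$ on \ultraD, and any environment $\env$ into $\fieldprod$. The finitely many \boxfn-subterms of $s$ and $t$ evaluate under $\env$ and $B$ to finitely many elements $c_1,\dots,c_m$, represented by sequences $(c_{i,q})_q$, and I choose non-zero representatives $(d_{i,q})_q$ of the values $B(c_1),\dots,B(c_m)$. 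On the set of indices $q$ where the $c_{i,q}$ display exactly the coincidences holding among the $c_i$ and where every $d_{i,q}$ is non-zero---a set in $D$, since both distinctness and non-zeroness in the ultraproduct are witnessed on $D$-large sets---the assignment $c_{i,q}\mapsto d_{i,q}$ is well defined and extends to a \boxfn function $B_q$ on $\mdl_{\fieldq}$. Invoking $(\ref{idbox:full})$ at each such $q$ gives $\mdl_{\fieldq}\models_{B_q} s=t$ on a set in $D$, whence $\ultraD\models_{B} s=t$; as $\env$ was arbitrary, $(\ref{ultra:all:full})$ follows. Closing this lifting carefully---ensuring the chosen index set is genuinely in $D$ and that the factorwise \boxfn functions reproduce $B$ on all relevant values---is where the real work lies.
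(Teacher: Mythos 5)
Your proposal is correct, and structurally it is the paper's proof: the same cycle of entailments, with (\ref{ultra:some:full})$\Rightarrow$(\ref{ratbox:full}) obtained by extending a \boxfn function along the embedding $\mdlQ\hookrightarrow\ultraD$ (the paper states exactly this step, in contrapositive form), and the return to (\ref{pvble:full}) resting on \refcor{rat-box-complete} together with the termination/confluence result \refthm{thm:sn+cr}. The one genuine difference is your treatment of (\ref{idbox:full})$\Rightarrow$(\ref{ultra:all:full}). The paper carries this step over from \refthm{tfae} as ``immediate,'' but your suspicion that it needs an argument is well founded: in the \boxfn-free setting the step is a direct application of the ultraproduct truth criterion (fact (i) following Definition~\ref{def:std-models}) to the universal closure of $s=t$, which is first-order; in the full language, however, clause (\ref{ultra:all:full}) quantifies over \emph{arbitrary} \boxfn functions on $\ultraD$, and such a function need not arise as an ultraproduct of \boxfn functions on the factors $\mdl_{\fieldq}$, so the transfer principle does not apply verbatim. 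Your diagonal lifting repairs this correctly, exploiting the fact that only finitely many \boxfn values are consulted when evaluating $s$ and $t$ under a fixed environment; the one detail to add is that when $B(c_i)=B(c_j)$ you must choose the \emph{same} representative sequence (or intersect with the corresponding agreement sets), so that the factorwise assignment $c_{i,q}\mapsto d_{i,q}$ is well defined. Alternatively, the paper's ``immediate'' can be made literally true by observing that ``$s=t$ holds under all \boxfn functions'' is itself expressible as a single first-order sentence in the \boxfn-free language---replace the \boxfn-subterms, innermost first, by fresh $\NZE$-variables constrained to agree whenever their arguments agree---after which the ultraproduct criterion applies directly. Either way, your write-up supplies an argument the paper elides, and your remaining steps (soundness for (\ref{pvble:full})$\Rightarrow$(\ref{idbox:full}), legitimate because $\dhthy$ imposes no equations on \boxfn; the submodel argument for (\ref{ultra:some:full})$\Rightarrow$(\ref{ratbox:full})) coincide with the paper's.
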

\vshort{ \vs-pf-msg }
\long\def \vlfulltfae{
\begin{proof}
  As for Theorem~\ref{tfae} we can establish a cycle of entailments.
  The non-trivial changes to the arguments presented for
  Theorem~\ref{tfae} are
\begin{itemize}
  \item to conclude \ref{ratbox:full} from \ref{ultra:some:full} now,
  we observe that given a \boxfn-function on \mdlQ that entails $\mdlQ
  \models s \neq t$ we can, via the embedding of $\mdlQ$ into
  $\ultraD$, construct a \boxfn-function on \ultraD such that
  $\ultraD \models s \neq t$, and
  \item to conclude \ref{nf:full} from \ref{ratbox:full} now, we use
  Corollary~\ref{rat-box-complete}.
\end{itemize}
\end{proof}
}% 
\vlong{ \vlfulltfae }

%%% Local Variables: 
%%% mode: latex
%%% TeX-master: "root"
%%% End: 

\section{Conclusion and Related Work} 
\label{sec:conc}

\paragraph{Related Work.}  Within the symbolic model, there has been
substantial work on some aspects of DH, starting with Boreale and
Buscemi~\cite{boreale2003symbolic}, which provides a symbolic
semantics~\cite{amadio2003symbolic,FioreAbadi01,MillenShmatikov01} for
a process calculus with algebraic operations for DH.  Their symbolic
semantics is based on unification.

Indeed, symbolic approaches to protocol analysis have relied on
unification as a central part of their reasoning.  Goubault-Larrecq,
Roger, and Verma~\cite{goubault2005abstraction} use a method based on
Horn clauses and resolution modulo AC, providing automated proofs of
passive security.  Maude-NPA~\cite{escobar2008state,escobar2009maude}
is also usable to analyze many protocols involving DH, again depending
heavily on unification.  

All of these approaches appear to face a fundamental problem with a
theory like the {\dhthy} theory of Section~\ref{sec:algebra}, in which
it would be unwise to rely on the decidability of the unifiability
problem.  Unifiability is undecidable in the theory of rings%
% (or fields)
, essentially by the unsolvability of Hilbert's tenth problem.  There
are, however, many related theories for which undecidability is not
known, for instance the diophantine theory of the
rationals~\cite{bergstra_rational-numbers}.

K\"usters and Truderung~\cite{kuesters2009using} finesse this issue by
rewriting protocol analysis problems.  The original problems use an AC
theory involving exponentiation.  They transform it into a
corresponding problem that does not require the AC property, and so
can work using standard ProVerif resolution~\cite{Blanchet01}.  Their
approach covers a surprising range of protocols, although,
like~\cite{chevalier2003deciding}, not {\iadh} protocols such as
$\mqv$ or $\cremers$.

Another contrast between this paper and previous work is the uniform
treatment of numerous security goals.  Our methods are applicable to
confidentiality, authentication, and further properties such as
forward secrecy.  

Our adversary model is active.  For passive attacks, there has been
some work on computational soundness for Diffie-Hellman, with Bresson
et al.~\cite{BressonEtAl11} giving an excellent treatment.

% [@@ dumped here -dd]
%  However, the field structure in the exponents leads to
% an undecidable unification problem, essentially by the unsolvability of
% Hilbert's tenth problem.%
% %
% \footnote{There are, however, many related theories for which
%   undecidability is not known, for instance the diophantine theory of
%   the rationals~\cite{bergstra_rational-numbers}.}
% %
% Indeed, other cryptographic primitives such as bilinear pairings are
% built by enriching DH-structures. 

\paragraph{Conclusion and Future Work.}  %
In this paper, we have applied the strand space framework to {\iadh}
protocols, such as {\um}, {\cremers}, and {\mqv}, establishing about a
dozen security properties of them.  While all of them have been
previously claimed, few have been proved in as informative a way as we
do here.  Moreover, our proofs rely on a few fundamental principles
that can be easily applied.  They combine rewriting techniques and the
\emph{indicator} idea.

We also provided a deeper model-theoretic treatment that justifies our
rewriting theory with respect to an adversary model.  Our adversary
can use any algebraic facts that are true in all but finitely many
DH-structures.  Since other cryptographic primitives such as bilinear
pairings are built by enriching DH-structures, it is highly desirable
to have proof techniques that work in this rich algebraic framework.

Connecting this with the standard computational model remains for
future work.  In our model the adversary must choose its whole
strategy before seeing the concrete messages for a particular run, or
even knowing the prime $q$.  This raises the question of the
\emph{computational soundness} of our approach, a focus of future
research: Does the Decisional Diffie-Hellman assumption ensure that
the adversary gets no asymptotic advantage from knowing $q$ and the
concrete messages?

Our proofs here are handcrafted.  However, we are currently pursuing
an approach using model-finding in \emph{geometric logic}, a
generalization of Horn logic, which offers great promise for
mechanizing many of these conclusions.

%%% Local Variables: 
%%% mode: latex
%%% TeX-master: "root"
%%% End: 

\paragraph{Acknowledgments.}  
We gratefully acknowledge support by the National Science Foundation
under grant CNS-0952287.  We are grateful to Shriram Krishnamurthi,
Moses Liskov, Cathy Meadows, John Ramsdell, Paul Rowe, Paul Timmel,
and Ed Zieglar for extremely helpful and often vigorous discussions.

{ %\bibliographystyle{IEEEtran}
  \bibliographystyle{plain}
  \bibliography{../bibtex/secureprotocols,../bibtex/dd}
  %\bibliography{../../inputs/secureprotocols/secureprotocols}
  %\bibliographystyle{plain}               
}

\vshort{\appendix%\section*{Appendix}
Here we collect the proofs that were either omitted or merely sketched in the body of the paper.

%%
%% from strands.tex
%%
\subsection{An Equational Theory of Messages}
\noindent{\bf \refthm{thm:sn+cr}.   }
  The reduction $\to_{\dhthy}$ is terminating and confluent modulo AC.
  \vlsn+cr

%%
%% from adversary.tex
%%

\subsection{Indicators}

\bigskip
\noindent  {\bf \refthm{indicator}.  } %
  Suppose $T$ is a collection of terms such that every $e \in T$ of sort
  $E$ is $N$-free.  Then
    \begin{enumerate}
       \item 
        every $e \in \Gen{T}$ of sort $E$ is $N$-free, and 
       \item 
        if $u \in \Gen{T}$ is of sort $G$
        and $\zs \in \Ind(u)$ then for some $t \in T$, \;
        $\zs \in \Ind(t)$.
    \end{enumerate}
\vlindicator

\bigskip
\noindent{\bf \refthm{thm:indicators:preserved}.  }
 Let $W$ be an adversary web of $\bnd$, and let $n$ be a transmission
  node of $W$, and let $N$ be a sequence of elements drawn from 
  $\operatorname{IB}_{\bnd}(n)$.  If $v\in\Ind_N(\msg(n))$, then there is a
  regular transmission node $n'\prec_{\bnd}n$ in $\bnd$ such that
  $v\in\Ind_N(\msg(n'))$.
 \vlindicatorspreserved 

%%
%% from analysis.tex
%%
\subsection{Analyzing IADH Protocols}

%%% From analysis.tex

\noindent{\bf \reflem{lemma:session:keys}.  }  Let protocol $\Pi$ be
an {\iadh} protocol, but possibly without
Assumption~\ref{assumption:regular:init:resp},
Clauses~\ref{assumption:freshness:init}
and~\ref{assumption:freshness:resp}.

  Suppose $\bnd$ is a $\Pi$-bundle, and $s$ is a $\Pi$ initiator or
  responder strand with long term secret $a$ and ephemeral value $x$,
  succeeding with key $K$:
  \begin{description}
    \item[$\Pi$ is $\um$:] If $x\in\non_{\bnd}$, then for
    $K=H({Y_B}^a\cons {R_B}^x)$, we have $\indX\in\Ind_{\seq{x}}(K)$.
    If $a\in\non_{\bnd}$, then $\indA\in\Ind_{\seq{a}}(K)$.
    \item[$\Pi$ is $\mqv$:] If $x\in\non_{\bnd}$, then for 
    $K=({R_B}\cdot{Y_B}^{\gtoe{{R_B}}})^{s_A}$, we have 
    $\indX\in\Ind_{\seq{x}}(K)$.  If $a\in\non_{\bnd}$, then
    $\indA\in\Ind_{\seq{a}}(K)$.
    \item[$\Pi$ is $\cremers$:] If $x\in\non_{\bnd}$, then for
    $K=({R_B}\cdot{Y_B})^{x+a}$, we have $\indX\in\Ind_{\seq{x}}(K)$.
    If $a\in\non_{\bnd}$, then $\indA\in\Ind_{\seq{a}}(K)$.
  \end{description}

\lemmasessionkeysproof
\medskip 

\noindent{\bf \refthm{thm:resist}.  }  Let protocol $\Pi$ be an
{\iadh} protocol using either of the two key computation methods in
Eqns.~\ref{eq:mqv:key:comp} and \ref{eq:dh:cremers:feltz:key:comp}.
Then $\Pi$ achieves the security goal of resisting impersonation.

\thmresistproof

\medskip 
\noindent{\bf \refthm{thm:cf:forward:secrecy}.  }  Let protocol $\Pi$
be the $\cremers$ protocol, with the ephemeral values $R_A,R_B$ signed
as $\tagged{R_A}A$ and $\tagged{R_B}B$.  Then $\Pi$ achieves the
forward secrecy goal.

\thmcfforwardsecrecy
\subsection{The Implicit Authentication Goal}  

\medskip 
\noindent{\bf \refthm{thm:um:impl:auth}.  }  $\um$ achieves implicit
authentication.

\thmumimplauthproof

\medskip 
\noindent{\bf \refthm{thm:mqv:weak}.  }  $\mqv$ achieves {weak}
implicit authentication.

\thmmqvweakproof

\medskip 
\noindent{\bf \refthm{thm:mqv:impl:auth}.  }  Suppose $\bnd$ is an
$\mqv$ bundle that respects ephemerals.  Then $\bnd$ satisfies (full)
implicit authentication.

\thmmqvimplauth

\subsection{Uniform Equality and the Completeness of \dhthy}
%%
%% from algebra.tex
%%

\algebralemmas

\vlelimbox

\medskip
\noindent{\bf \refcor{rat-complete}.  } %
  If $s$ and $t$ are distinct normal forms then it is not the case that
  $\mdlQ \models s=t$.

\vlratdistinct

%\vlelimbox

\medskip
\noindent{\bf \refthm{full-tfae}.  } %
  For each pair of $G$-terms $s$ and $t$ in the full language of \dhthy
  the following are equivalent
  \begin{enumerate}
  \item  %% \label{pvble:full} %
    $\dhthy \vdash s = t$
  \item %% \label{idbox:full} % 
   for all  $q$ and all \boxfn functions on $\mdl_{\fieldq}$, $\mdl_{\fieldq} \models s = t$
   \item %% \label{ultra:all:full} %
   For all non-principal $D$, for all \boxfn functions on $\ultraD$,  $\ultraD\models s = t$
   \item %% \label{ultra:some:full} %
   For some non-principal $D$, and all \boxfn functions on
   $\ultraD$, $\ultraD\models s = t$
  \item %% \label{ratbox:full} % 
    for all \boxfn functions on \mdlQ,    $\mdlQ \models s=t$
  \item %% \label{nf:full} % 
    if $s$ reduces to $s'$ with $s'$ irreducible, and
    $t$ reduces to $t'$ with $t'$ irreducible, then $s'$ and $t'$ are
    identical modulo associativity and commutativity of $\gop$, $\eadd$,
    and $\emult$.
  \end{enumerate}
\vlfulltfae

%%% Local Variables: 
%%% mode: latex
%%% TeX-master: "root"
%%% End: 
}

\end{document}

%%% Local Variables: 
%%% mode: latex
%%% TeX-master: t
%%% End: 